\newcommand{\Ucal}{\mathcal{U}}
\newcommand{\Mcal}{\mathcal{M}}
\newcommand{\comments}[1]{\textcolor{black}{#1}}
\newtheorem{theorem}{Theorem}[]
\newtheorem{conjecture}{Conjecture}[]
\newtheorem{lemma}{Lemma}[]
\newtheorem{definition}{Definition}[]
\newtheorem{task}{Task}[]
\newcommand{\Acal}{\mathcal{A}}
\newcommand{\Dcal}{\mathcal{D}}
\newcommand{\Kcal}{\mathcal{K}}
\newcommand{\Xcal}{\mathcal{X}}
\newcommand{\Rcal}{\mathcal{R}}
\newcommand{\Rbb}{\mathbb{R}}
\newcommand{\Cbb}{\mathbb{C}}
\newcommand{\Ebb}{\mathbb{E}}
\newcommand{\Hcal}{\mathcal{H}}
\newcommand{\Ord}[1]{\mathcal{O}\left( #1 \right)}
\newcommand{\vabs}[1]{\left\| #1 \right\|}
\newcommand{\pbra}[1]{\left( #1 \right)}
\newcommand{\sbra}[1]{\left[ #1 \right]}
\newcommand{\abra}[1]{\left\langle #1 \right\rangle}
\newcommand{\poly}{\mathrm{poly}}
\newcommand{\NP}{\mathrm{NP}}
\newcommand{\uwa}{Department of Physics, The University of Western Australia, Perth, WA 6009, Australia}
\newcommand{\pku}{Center on Frontiers of Computing Studies, Peking University, Beijing 100871, China}
\begin{document}

\title{Quantum Phase Recognition via Quantum Kernel Methods}


\author{Yusen Wu} 
\affiliation{\uwa}
\author{Bujiao Wu} 
\affiliation{\pku}

\author{Jingbo B. Wang}
\thanks{jingbo.wang@uwa.edu.au}
\affiliation{\uwa}

\author{Xiao Yuan}
\thanks{xiaoyuan@pku.edu.cn}
\affiliation{\pku}

\begin{abstract}
The application of quantum computation to accelerate machine learning algorithms is one of the most promising areas of research in quantum algorithms. In this paper, we explore the power of quantum learning algorithms in solving an important class of Quantum Phase Recognition (QPR) problems, which are crucially important in understanding many-particle quantum systems. We prove that, under widely believed complexity theory assumptions, there exists a wide range of QPR problems that cannot be efficiently solved by classical learning algorithms with classical resources. Whereas using a quantum computer, we prove the efficiency and robustness of quantum kernel methods in solving QPR problems through Linear order parameter Observables. We numerically benchmark our algorithm for a variety of problems, including recognizing symmetry-protected topological phases and symmetry-broken phases. Our results highlight the capability of quantum machine learning in predicting such quantum phase transitions in many-particle systems.

\end{abstract}


\maketitle

\section{Introduction}
The complex nature of multi-particle entanglement has stimulated various powerful classical techniques to study many-particle quantum systems, including density functional theory~\cite{kohn2003electronic},  density matrix renormalization group~\cite{white1992density,white1993density}, quantum Monte Carlo~\cite{becca2017quantum,ceperley1986quantum, foulkes2001quantum,carlson2015quantum} etc. Classical machine learning techniques have recently been considered as an alternative means to study and understand many-particle quantum systems and the associated quantum processes~\cite{carleo2017solving,carrasquilla2017machine,glasser2018neural,torlai2018neural,moreno2020deep,torlai2016learning, schindler2017probing, greplova2020unsupervised, wetzel2017unsupervised}. From numerical perspectives, these works have shown that neural network state ans\"atzes have stronger representation power than conventional tensor networks and may help to solve complex static and dynamical quantum problems. Nevertheless, most of these methods lack of rigorous theoretical guarantees, and whether a learning procedure outperforms original method in solving quantum many-body problems is unknown. To answer this question, Huang et al.~\cite{huang2021provably} explored the power of classical machine learning in classifying some quantum phases of matter and proposed a provable efficient classical method in learning through shadow tomography~\cite{huang2020predicting}. 

However, for quantum many-body systems that possess intricate and long-range entanglement, and if the target quantum phase is determined by a non-local order parameter observable, the required sampling complexity is expected to increase exponentially with respect to the system size. Quantum machine learning has been intensely studied in terms of its expressive ability~\cite{du2022efficient,abbas2021power,holmes2022connecting}, optimization~\cite{mcclean2018barren, wang2021noise, sharma2022trainability, larocca2022diagnosing}, provable quantum advantages~\cite{ huang2022quantum,huang2021information, liu2021rigorous}, as well as potential limitations~\cite{stilck2021limitations}. Meanwhile, recent pioneering experiments on quantum computer processors  ~\cite{boixo2018characterizing,arute2019quantum, zhong2020quantum, huang2022quantum} have demonstrated significant quantum computing advantages in random state sampling~\cite{boixo2018characterizing,arute2019quantum, zhong2020quantum} and density matrix learning problems~\cite{huang2022quantum}. In detail, Huang et al.~\cite{huang2021information, huang2022quantum} proved that the entangled Bell-Measurement protocol can efficiently extract information from an unknown density matrix and predict its linear properties, meanwhile, it is classically hard in the \emph{worst-case} scenario. Therefore, two critical questions are still open: (1) what is the limitation of classical machine learning in solving quantum many-body problems? and (2) whether a near-term quantum computer can enhance the power of classical learning algorithms in solving practical problems?

In this paper, we provide a novel approach to address the above two questions. For a many-body quantum system described by an $n$-qubit parameterized Hamiltonian $H(\bm a)=\sum_{j=1}^m\bm a_jP_j$, where $\bm a\in\mathbb{R}^m$ represents external parameters and $P_j$ are $n$-qubit Pauli operators, we focus on a class of Quantum Phase Recognition problems that can be distinguished by a Linear order parameter Observable, termed as the LO-QPR problem. We aim at learning about detailed phase transitions of many-particle quantum systems using a quantum kernel method.  In the learning phase, a classical training data set $\mathcal{S}=\{(\bm a_{\bm i},b_i)\}_{i=1}^N$ is used, where $\bm{a_i}$ and $b_i$ are respectively the external parameters and ground state property observed from experiments. In the prediction phase, the learning algorithm succeeds if it correctly predicts the property $b$ of the ground state $|\psi(\bm a)\rangle$. For example, considering the Ising Hamiltonian, the external parameter $\bm a$ could be the strength of the transverse magnetic field, and  $b$ represents quantum phases such as the paramagnetic, ferromagnetic, and antiferromagnetic phases. Phase transitions occur when the external parameters varies~\cite{sachdev1999quantum}, and the ability to correctly predict the quantum phase transition boundary can help us understand many strong-correlated systems, even for canonical microscopic physical models~\cite{zheng2017stripe}.

Under two widely accepted assumptions: (1) the polynomial hierarchy does not collapse in the computational complexity theory, and (2) the classical hardness for ground state sampling holds, we prove that certain LO-QPR problems are hard for any classical machine learning methods with classical resources. We demonstrate
that if these LO-QPR problems could be solved by a classical learner with classical resources (even
up to a general additive error tolerance), then the infinite tower of
the polynomial hierarchy would collapse to its second level. While this does not imply that $\rm P=NP$, such a collapse is also widely regarded as being implausible. We therefore answer the first question by showing the exact limitation of classical machine learning in LO-QPR.

The rapid advancement of realistic quantum devices provides an opportunity to answer the second question in a fundamentally different and more powerful
way compared with classical ML. Instead of classically simulating ground states and then infer quantum phase transition, we utilize quantum machine learning to extract high-level abstractions from observed data and directly process quantum ground states information by a quantum computer. Here, the ground state $|\psi(\bm a)\rangle$ of $H(\bm a)$ embeds classical external parameter $\bm a$ onto a specific quantum-enhanced feature space, where inner products of such quantum feature states give rise to a quantum kernel, a metric to characterize distances in the feature space. 
As a result, predicting the ground state property can be transformed into quantum state overlap computation, and thus bypasses the required exponential sample complexity in Ref.~\cite{huang2021provably}. We prove that the proposed Quantum Kernel Alphatron~(QKA) algorithm can efficiently learn from quantum data and solve LO-QPR problems with a promisingly small learning error. We benchmark the proposed QKA in detecting symmetry-protected topological phases and symmetry broken phases, and simulation results show better performances compared with previous QMLs~\cite{cong2019quantum} and classical MLs~\cite{huang2021provably}

This paper is organized as follows. In Sec.~\ref{sec:prelimi} we review some related works and give the definition of the LO-QPR problem, then introduce supervised learning with quantum feature spaces.
We prove the hardness of classical learning algorithms in Sec.~\ref{sec:classicalHardness}.
Sec.~\ref{sec:qalg} details our quantum learning algorithm for LO-QPR problem. Sec.~\ref{sec:numericalRes} presents our numerical simulations. Sec~\ref{sec:complexity} classifies various complexity classes of learning algorithms. Finally, Sec.~\ref{sec:dis} concludes the paper.
\section{Preliminaries}
\label{sec:prelimi}
To clearly demonstrate our contributions in this paper, we first review previous related works and define the learning and computation tasks of interest.

\begin{task}[Density Matrix Learning~\cite{huang2022quantum,huang2021information}] 
\emph{Given an artificial $n$-qubit density matrix $\rho=(I+0.9P)/2^n$ where $P\in\mathcal{P}=\{I,X,Y,Z\}^{\otimes n}\setminus I^{\otimes n}$. The learning algorithms learn about $\rho$ through conventional or quantum-enhanced measurement strategies. The learning algorithm succeeds if it can correctly predict the expectation value ${\rm Tr}\left(\rho Q\right)$ within an $\epsilon$ additive error with $3/4$ probability for $Q\in\mathcal{P}$.}
\label{task:dml}
\end{task}

\noindent In~\cite{huang2022quantum,huang2021information}, the authors proved that the \emph{entangled Bell-Measurement} protocol can use $\mathcal{O}(n/\epsilon^4)$ copies of $\rho$ to solve this learning problem, meanwhile, it is classically hard in the \emph{worst-case} scenario in estimating ${\rm Tr}\left(\rho Q\right)$ for some $Q\in\mathcal{P}$.

\begin{task}[Quantum Phase Learning~\cite{huang2021provably}]
Given the shadow tomography data set $\mathcal{D}=\{\Phi_{\rm shadow}(\bm a_i)\}_{i=1}^N$, where $\Phi_{\rm shadow}(\bm a_i)$ is a classical representation of the ground state of a Hamiltonian $H(\bm a_i)$, the task is to determine the quantum phase of matter for each $\Phi_{\rm shadow}(\bm a_i)\in\mathcal{D}$.
\label{task:qpl}
\end{task}
\noindent 

\noindent In~\cite{huang2021provably}, the authors developed a ``classical'' learning algorithm based on classical data, which are however obtained by shadow tomography of the target ground state generated by a quantum computer. 
As they discussed in~\cite{huang2021provablysup}, when the quantum phase transition can only be determined by a non-local order parameter observable, the sample complexity is expected to increase exponentially with respect to the system size. 
In this case, a better way is using a quantum computer to learn directly from the quantum phase value $b\in\{0,1\}$ observed from the experiment, which will be defined as Task~\ref{LO-QPR} below.  


\begin{figure*}[t]
\centering
  \includegraphics[width=0.8\textwidth]{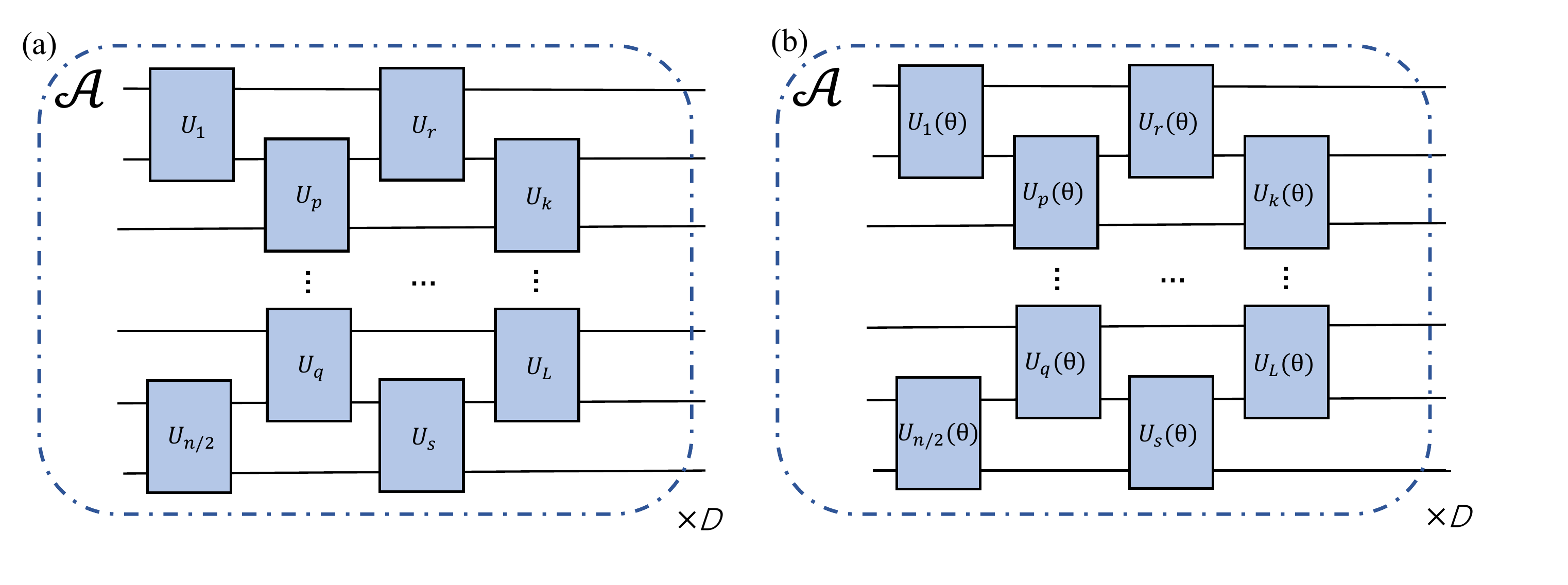}
  \caption{The comparison between quantum random circuits and variational quantum circuits: (a) The quantum random circuit $U=U_1U_2...U_L...U_{LD}$ is generated according to a brickwork architecture $\mathcal{A}$, and each $U_i$ is randomly sampled from $SU(4)$ group based on the Haar measure~\cite{haferkamp2022linear}. (b) The variational quantum circuit $U(\bm\theta)=U_1(\theta_1)U_2(\theta_2)...U_L(\theta_L)...U_{LD}(\theta_{LD})$ can be generated by the same architecture $\mathcal{A}$. Each parameterized quantum gate $U_i(\theta_i)$ is determined by $\theta_i$. Basically, the selection of the $U_i(\theta_i)$ is flexible, including the alternating layered ansatz~\cite{cerezo2021cost}, hardware efficient ansatz~\cite{kandala2017hardware} and particle preserved ansatz~\cite{gard2020efficient}.}
  \label{fig:random_circuit_1}
\end{figure*}

\begin{task}[Ground state Linear Property~(GLP) problem]
    Given an $n$-qubit Hamiltonian $H(\bm a)$ with external parameters $\bm a$ and an observable $\mathcal{M}\in \Cbb^{2^n \times 2^n}$, the goal of GLP is to approximate the ground state property $b=\langle\psi(\bm a)|\mathcal{M}|\psi(\bm a)\rangle$ to additive error $\epsilon=1/{\rm{poly}}(n)$, that is $|\hat{b}-b|\leq\epsilon$, where $\hat{b}$ represents the estimated property and $|\psi(\bm a)\rangle$ is the ground state of $H(\bm a)$.
\end{task}
Noting that the GLP problem could characterize a class of quantum phase problem that is determined by a linear order parameter, which is one of the most significant elements in understanding quantum fluctuation phenomenons in condensed-matter systems~\cite{ haldane1983nonlinear, pollmann2012detection}. For example, considering the Ising Hamiltonian, the external parameter $\bm a$ and the order parameter observable $\mathcal{M}$ could be the strength of the transverse magnetic field and the spin correlation respectively, and quantum phases $b$ include paramagnetic, ferromagnetic, and antiferromagnetic phases. In general, it would be hard to recognize quantum phases of an arbitrary many-body quantum system, owing to the hardness of obtaining the ground state and the fact that the order parameter is generally unknown. Nevertheless, there may also exist cases where the problem is exactly efficiently solvable for very specific choices of parameters. Then, a natural question is, based on the solvable or known quantum phases from experiments, whether we could learn and predict GLP for other external parameter domains. Therefore, it is reasonable to consider the learning version of the GLP problem.

\begin{task}[LO-QPR]
\label{LO-QPR}
Given the training data $\mathcal{S}=\{(\bm a_{\bm i},b_i)\}_{i=1}^N$ for which $\bm{a_i}$ indicates the external parameter, and $b_i$ represents its quantum phase value characterized by some unknown observable, associated with an $n$-qubit Hamiltonian $H\pbra{\bm{a_i}}=\sum_k\bm a_i^{(k)}P_k$, our aim is to efficiently learn a model $h(\bm a)$ enabling the risk
\begin{align}
R(h)=\sum\limits_{\bm a\sim\mathcal{X}}\mathcal{D}(\bm a)\left(h(\bm a)-b\right)^2
\end{align}
is upper bounded by $\mathcal{O}\left(\sqrt[4]{\frac{\log (1/\delta)}{N}}\right)$ with $1-\delta$ success probability for some fixed distribution $\mathcal{D}(\bm a)$ defined on the external parameter space $\mathcal{X}$.
\label{def:QPRLearn}
\end{task}

The crux of the matter on solving LO-QPR problems can be summarized into two levels: how to generate high-quality training data and how to efficiently learn from these data.
Solving LO-QPR problems inevitably involves high dimensional Hilbert space, and classical learning algorithms thus are challenging in both generating and operating training data with quantum entanglement.


\section{Classical hardness for LO-QPR}
\label{sec:classicalHardness}
\subsection{Express ground state by quantum circuit} 
\label{sec:meas_distribute}

{The \emph{brickwork} architecture $\mathcal{A}$ is a general approach to construct ansatz for ground state computation~\cite{haferkamp2022linear}, whose structure is formed as follows: perform a string of two-qubit gates $U_1\otimes U_2\otimes...\otimes U_{n/2}$ as the first layer, then perform a staggering string of gates, as illustrated as Fig.~\ref{fig:random_circuit_1}~(a). Meanwhile, the brickwork architecture can induce structured variational quantum circuit, which is a convinced method in approximating the ground state of many-body Hamiltonians $H(\bm x)$~\cite{peruzzo2014variational,kandala2017hardware, cade2020strategies, mcardle2019variational, li2022toward,cao2021larger}.

The key idea of using variational quantum circuit is that the parameterized quantum state is prepared and measured on a quantum computer, and the classical optimizer updates the parameter $\bm \theta$ according to the measurement information. With the brickwork architecture $\mathcal{A}$, the variational ansatz can be prepared by $|\Psi(\bm \theta)\rangle=U(\bm\theta)|0^{n}\rangle=\prod_{d=1}^DU_d(\bm\theta_d)|0^{n}\rangle$, where $U(\bm\theta)$ is composed of $D$ unitaries $U_d(\bm\theta_d)$ whose structure is shown in Fig.~\ref{fig:random_circuit_1}~(b). After several classical optimisation steps, the classical optimizer can provide a parameter $\bm\theta_{\bm x}$ enabling $|\Psi(\bm\theta_{\bm x})\rangle$ be an approximation of $|\psi(\bm x)\rangle$. We provide a deterministic method in finding a $\bm\theta_{\bm x}$ to approximate $|\psi(\bm x)\rangle$ by using the quantum imaginary time evolution in the Appendix~\ref{app:archHaar}. Since the variational quantum circuit $U(\bm\theta)$ has the same architecture $\mathcal{A}$ to that of random circuit $U$, and two-qubit gates $U_i(\theta_i)$ are sampled from a subset of ${\rm SU}(4)$. Then the relationship $\mathcal{U_A}(\bm\theta)\subseteq\mathcal{U_A}$ holds, where $\mathcal{U_A}(\bm\theta)$ and $\mathcal{U_A}$ denote the set of $U(\bm\theta)$ and random circuit set, respectively.}

\subsection{Classical hardness results}
The GLP problem is an instance of the mean-value problem which is the central part of the variational quantum algorithms, and ``\emph{Is the quantum computer necessary for the mean value problem?}'' is still an open problem as mentioned in~\cite{bravyi2021classical}. Note that for
a quantum state $|\psi(\bm x)\rangle=U|0^n\rangle$, Ref~\cite{bravyi2021classical} proposed an upper bound on estimating $\langle \psi(\bm x) |\mathcal{M}|\psi(\bm x) \rangle$ in the case of a ${\rm{poly}}(n)$-depth $U$ associated with $\ket{\psi(\bm x)}$. Here, we show the reduction from quantum circuit sampling problem to the GLP problem (see the proof in Lemma~\ref{lem:LowerCompute}), and prove that the classical hardness of quantum sampling will imply the hardness of the GLP problem in the worst-case scenario, and thus provide a lower bound on estimating $\langle \psi(\bm x) |\mathcal{M}|\psi(\bm x) \rangle$. 

The proof of Lemma~\ref{lem:LowerCompute} is inspired by the classical hardness conjecture of the random quantum state sampling problem in~\cite{bouland2019complexity}. The random quantum state sampling is an artificial problem that samples from the output distribution of some experimentally feasible quantum algorithms. Sample one particular bit-string $\bm j$ from a random quantum circuit $U$ with the exact probability $p_{U}(\bm j)$ is believed classically hard under standard complexity assumptions, which is also recognized as the worst-case hardness.

However, a convincing quantum advantage must be established in the average-case scenario, that is, the classical hardness should be held across the entire distribution, rather than concentrated in a single quantum process and output. Bouland et al.~\cite{bouland2019complexity} introduced the Feynman path integral to connect a bridge between a fixed outcome $\bm j$ and a low-degree multivariate polynomial described by quantum gates, and thus proved the worst-to-average reduction. We utilize a similar Feynman path integral method to prove the classical learning hardness result for LO-QPR problem (see Theorem \ref{thm:LowerLearn}).

\begin{conjecture}[Ref.~\cite{bouland2019complexity}]
There exists an $n$-qubit ground state $|\psi(\bm x)\rangle=U(\bm\theta_{\bm x})|0^n\rangle$ such that the following task is \#P-hard: approximate $p_U(\bm j)=\abs{\langle\bm j |\psi(\bm x)\rangle}^2$ to additive error $\epsilon_{c}/2^n$ with probability $\frac{3}{4}+\frac{1}{\poly(n)}$, where $\bm j$ is a $\{0,1\}^n$ bit string, $U(\bm\theta_{\bm x})$ is an $n$-qubit quantum circuit and $\epsilon_c=1/{\rm{poly}}(n)$ \footnote{Since any quantum state can be encoded as a ground state of some Hamiltonian, the presented conjecture uses the `ground state' to substitute `random quantum state' mentioned in Ref.~\cite{bouland2019complexity}}.
\label{con:lowerApprox}
\end{conjecture}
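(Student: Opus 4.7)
The plan is to reduce from the worst-case $\#$P-hardness of computing amplitudes of quantum circuits exactly to the stated average-case hardness of computing amplitudes to inverse-exponential additive precision for random brickwork-circuit ground states. The starting point is the observation that for a brickwork circuit $U=U_1U_2\cdots U_{LD}$ of polynomial depth, the amplitude $\langle\bm j|U|0^n\rangle$ expanded via the Feynman path integral is a polynomial of degree $O(LD)$ in the complex matrix entries of its two-qubit gates. Consequently $p_U(\bm j)=|\langle\bm j|U|0^n\rangle|^2$ is itself a low-degree polynomial in those entries. This polynomial structure is the key enabler of the worst-to-average reduction.

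Given any worst-case instance $C$ whose output amplitude is $\#$P-hard to approximate, I would embed $C$ into the same brickwork architecture $\mathcal{A}$ and construct a one-parameter curve $U(t)$ in gate-parameter space such that $U(0)=C$ while $U(t)$, at a family of Haar-random evaluation points $t_1,\ldots,t_K$, is distributed arbitrarily close to the Haar measure on $\mathcal{U_A}$. Along this curve the function $t\mapsto p_{U(t)}(\bm j)$ is a univariate polynomial of degree $\poly(n)$. If an oracle could approximate $p_{U(t_k)}(\bm j)$ within $\epsilon_c/2^n$ on a $\tfrac{3}{4}+1/\poly(n)$ fraction of random circuits, then robust polynomial interpolation (Berlekamp--Welch decoding, or the more refined techniques of Bouland--Fefferman--Nirkhe--Vazirani) recovers $p_C(\bm j)$ to comparable precision at $t=0$, contradicting $\#$P-hardness.

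The worst-case ingredient is the standard consequence of the Feynman path sum: for any quantum circuit $C$, approximating $p_C(\bm j)$ to additive error on the order of $1/2^{O(n)}$ is $\#$P-hard, via a reduction from counting satisfying assignments of a Boolean formula encoded as a circuit whose acceptance probability equals the number of accepting paths divided by $2^n$. Since $\mathcal{U_A}(\bm\theta)\subseteq\mathcal{U_A}$, the circuits in the reduction can be taken from the variational family, and the corresponding output $U(\bm\theta_{\bm x})|0^n\rangle$ is realized as the unique ground state of an appropriately chosen local Hamiltonian (for instance the frustration-free projector $I-|\psi(\bm x)\rangle\langle\psi(\bm x)|$, suitably decomposed), so the statement genuinely concerns ground states.

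The hardest step is the robust polynomial interpolation: the tolerance $\epsilon_c/2^n$ must be propagated through the interpolation of a polynomial whose degree grows with $n$, while tolerating an adversarial error fraction of nearly $1/4$. This requires a careful choice of sampling points (typically on a short interval near $t=0$) together with extrapolation bounds controlling how errors at the random evaluation points amplify when extrapolating back to the worst-case point. I would also need to verify that every intermediate $U(t)$ lies inside $\mathcal{U_A}(\bm\theta)$, so that the reduction uses only ground states of the permissible architecture rather than generic unitaries outside it; this is the step that keeps the statement a conjecture rather than a theorem in its fullest form, since current robust interpolation techniques fall short of the $\tfrac{3}{4}+1/\poly(n)$ success probability at additive precision $1/2^{O(n)}$.
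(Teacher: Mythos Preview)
The statement you are attempting to prove is explicitly labeled a \emph{conjecture} in the paper, and the paper does not supply a proof of it. It is taken from Bouland--Fefferman--Nirkhe--Vazirani and used as a standing assumption for the subsequent hardness results (Lemma~\ref{lem:LowerCompute} and Theorem~\ref{thm:LowerLearn}). The appendix even remarks that BFNV only prove a weaker statement (hardness at exponentially small additive error rather than the $\epsilon_c/2^n$ with $\epsilon_c=1/\poly(n)$ claimed here), and that the full statement as written remains conjectural.

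Your proposal is therefore not comparable to any proof in the paper, because there is none. That said, your outline is an accurate sketch of the BFNV worst-to-average-case reduction: the Feynman path expansion yielding a low-degree polynomial in the gate parameters, the one-parameter curve through a worst-case circuit, and robust polynomial interpolation to recover the hard instance from average-case oracle answers. You also correctly flag the gap that keeps this a conjecture: robust interpolation at the stated success probability $\tfrac{3}{4}+1/\poly(n)$ and additive precision $\epsilon_c/2^n$ is not known to go through. So your final paragraph is exactly right, and it is the honest status of the statement. There is no genuine error in your write-up; the only correction is that you should not frame this as a proof of the conjecture but rather as the heuristic evidence supporting it, which is precisely how the paper treats it.
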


\noindent \comments{The probability distribution of a truly random quantum state $|\psi\rangle$ possesses the Porter-Thomas (PT) distribution ${\rm Pr}(|\langle\bm j|\psi\rangle|^2)=2^ne^{-2^n |\langle\bm j|\psi\rangle|^2}$, which is known to be classically hard to sample~\cite{brody1981random, boixo2018characterizing}. Whether ground states $|\psi(\bm x)\rangle$ of a family of Hamiltonian $H(\bm x)$ satisfy conjecture~\ref{con:lowerApprox} can be verified by comparing probability distribution of $|\psi(\bm x)\rangle$ to the PT distribution. We show that if the probability distribution of a ground state $|\psi(\bm x)\rangle$ is $\mathcal{O}(n^{-1})$-close to PT distribution by means of trace distance, then sample from  $|\psi(\bm x)\rangle$ is classically hard. Details refer to Theorem~\ref{thm:verify} in the Appendix~\ref{App:SamplingComplexity}.}
Using this result and the relationship between ground states and variational quantum states, the hardness result for the GLP problem can be stated as the following lemma.

\begin{lemma}
With the assumption that Conjecture~\ref{con:lowerApprox} holds, and the Polynomial Hierarchy (PH) in the computational complexity theory does not collapse, there exists an $n$-qubit Hamiltonian $H(\bm a)$ and an observable $\mathcal{M}$, such that their corresponding GLP problem cannot be efficiently calculated by any classical algorithm.
\label{lem:LowerCompute}
\end{lemma}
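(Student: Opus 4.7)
My plan is to argue by contraposition: if every instance of GLP admitted a classical polynomial-time solver, Conjecture~\ref{con:lowerApprox} would be refuted and the polynomial hierarchy would collapse. The first step is to import the construction from Sec.~\ref{sec:meas_distribute}, which places the brickwork variational family inside the random-circuit family $\mathcal{U_A}$. Combined with the deterministic QITE preparation of Appendix~\ref{app:archHaar}, this allows me to exhibit, for any brickwork random circuit $U$, a Hamiltonian $H(\bm a)$ whose ground state is $|\psi(\bm a)\rangle = U|0^n\rangle$ (up to negligible error), so the ground-state family on which GLP must succeed already contains the outputs of arbitrary brickwork random circuits.

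The second step reduces amplitude estimation to GLP. Fixing a bit-string $\bm j\in\{0,1\}^n$ and taking the unit-norm observable $\mathcal{M}_{\bm j}=|\bm j\rangle\langle\bm j|$ gives $\langle\psi(\bm a)|\mathcal{M}_{\bm j}|\psi(\bm a)\rangle = p_U(\bm j)$, so any classical polynomial-time GLP solver returns an additive $1/\mathrm{poly}(n)$ estimate of $p_U(\bm j)$. To sharpen this to the additive error $\epsilon_c/2^n$ required by Conjecture~\ref{con:lowerApprox}, I would invoke the Stockmeyer-counting strategy used in~\cite{bouland2019complexity}: writing $p_U(\bm j)$ as a low-degree multilinear polynomial in the two-qubit-gate entries via its Feynman-path expansion, a $\mathrm{BPP}^{\mathrm{NP}}$ procedure can then amplify the coarse GLP-derived estimator into a sharp amplitude estimator of additive precision $\epsilon_c/2^n$.

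Putting the two steps together, the hypothetical polynomial-time classical GLP solver would place the \#P-hard amplitude-estimation task of Conjecture~\ref{con:lowerApprox} inside $\mathrm{BPP}^{\mathrm{NP}}$, and hence $\mathrm{P}^{\#\mathrm{P}}\subseteq \mathrm{BPP}^{\mathrm{NP}}$; by Toda's theorem this forces PH to collapse, contradicting the standing hypothesis that PH is infinite. The genuinely difficult piece is the precision-amplification step of the second paragraph, namely turning an additive $1/\mathrm{poly}(n)$ expectation-value oracle into an additive $\epsilon_c/2^n$ amplitude oracle without an exponential overhead. This is where both the polynomial structure supplied by the Feynman-path expansion and the anti-concentration property of brickwork random circuits (as in Theorem~\ref{thm:verify}) must be used in concert; without either ingredient, the reduction from additive expectation-value error to additive amplitude error would break.
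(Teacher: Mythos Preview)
Your reduction has a genuine gap at exactly the point you flag as ``genuinely difficult.'' With the choice $\mathcal{M}_{\bm j}=|\bm j\rangle\langle\bm j|$, the quantity $\langle\psi(\bm a)|\mathcal{M}_{\bm j}|\psi(\bm a)\rangle=p_U(\bm j)$ is typically of order $2^{-n}$, so an additive $1/\mathrm{poly}(n)$ estimate carries essentially no information---returning $0$ always satisfies the GLP promise. There is no black-box way to ``amplify'' such an oracle to additive precision $\epsilon_c/2^n$: Stockmeyer's theorem gives a multiplicative approximation to an \emph{exponential sum of efficiently computable nonnegative terms}, not a precision booster for an arbitrary real-valued oracle. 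The Feynman-path expansion and anti-concentration results you cite from~\cite{bouland2019complexity} serve different purposes (polynomial interpolation for worst-to-average reduction, and relating multiplicative to $O(2^{-n})$ additive error, respectively); neither bridges the gap from additive $1/\mathrm{poly}(n)$ to additive $2^{-n}$.

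The paper's proof sidesteps this obstruction by choosing a different family of observables. Rather than the projector $|\bm j\rangle\langle\bm j|$, it takes the diagonal Pauli strings $\mathcal{M}(\bm s)=Z_1^{s_1}\otimes\cdots\otimes Z_n^{s_n}$, whose expectation values $o_{\bm s}=\langle\psi|\mathcal{M}(\bm s)|\psi\rangle\in[-1,1]$ are $O(1)$ quantities and are exactly the Boolean Fourier coefficients of the output distribution: $o_{\bm s}=\sum_{\bm j}(-1)^{\bm j\cdot\bm s}p(\bm j)$, hence $p(\bm j)=2^{-n}\sum_{\bm s}(-1)^{\bm j\cdot\bm s}o_{\bm s}$. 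If a classical GLP solver could compute every $o_{\bm s}$ efficiently, then $p(\bm j)$ is expressed as a $2^n$-term average of efficiently computable bounded quantities, and it is \emph{this} average to which Stockmeyer applies, yielding a multiplicative approximation of $p(\bm j)$ in $\mathrm{BPP}^{\mathrm{NP}}$ and forcing the PH collapse. The key idea you are missing is therefore the Fourier inversion step: move the hardness from the exponentially small projector expectation to the $O(1)$ Pauli expectations, so that the GLP additive-error guarantee becomes meaningful and Stockmeyer can be legitimately invoked on the resulting exponential sum.
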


\noindent We provide the detailed proof in the Appendix~\ref{proof_lemma_lower_compute}. This lemma also serves for the hardness of the LO-QPR problem. Let $|\psi(\bm a)\rangle=U(\bm\theta_{\bm a})|0^{n}\rangle$ be the ground state of Hamiltonian $H(\bm a)$ satisfying Lemma \ref{lem:LowerCompute}, where $U(\bm\theta_{\bm a})\in\mathcal{U_A}(\bm\theta)$. For a general family of Hamiltonian $H(\bm a)$, the explicitly mathematical expression of its ground state is hard to be characterized, and variational quantum circuit is a natural expression which characterizes the circuit complexity of these concerned ground states, but not change their properties~\cite{kandala2017hardware, cade2020strategies, peruzzo2014variational,mcardle2019variational}.
As a starting point of worst case scenario $|\psi(\bm a)\rangle=U(\bm\theta_{\bm a})|0^{n}\rangle$, we provide a method on constructing ground states $\mathcal{T}=\{|\psi(\bm a_i)\rangle\}$ by using variational circuit $U(\bm\theta)$. Detail refers to the Appendix~\ref{proof_them_1}. Then we have the following theorem for the average case hardness of the LO-QPR problem on $\mathcal{T}$. Given classical training data $\mathcal{S}$ (given $\bm a_i$, its label $b_i$ can be computed by some classical Turing machines), we prove that no classical learning algorithm can efficiently learn the hypothesis $h^*$ such that $R(h^*(\bm x))$ is upper bounded by $1/{\rm poly}(n)$ for $\bm x\in\mathcal{T}$.
\begin{theorem}
Given training data $\mathcal{S}=\{(\bm a_{\bm i},b_i)\}_{i=1}^N$ (acquired from classical methods) 
for which $(\bm{a_i}, b_i)$ indicate the external parameter and phase value associated with the $n$-qubit Hamiltonian $H(\bm{a_i})$, 
there exists a testing set
$\mathcal{T}=\{(\bm x_{\bm i}, y_i)\}_{i=1}^M$, such that predicting $8/9$ of $y_i\in\mathcal{T}$ with additive error $1/{\rm poly}(n)$ is hard for any classical ML algorithm,
with the assumption that Conjecture~\ref{con:lowerApprox} holds and the PH does not collapse, where the scale of testing data $M={\rm poly}(N)$ and $N={\rm poly}(n)$.
\label{thm:LowerLearn}
\end{theorem}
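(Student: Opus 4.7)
The plan is to lift the worst-case classical hardness of the GLP problem (Lemma~\ref{lem:LowerCompute}) to an average-case learning hardness by a polynomial-interpolation reduction in the spirit of Bouland et al.~\cite{bouland2019complexity}, specialised to the LO-QPR setting. First I would fix a worst-case hard instance: a Hamiltonian $H(\bm a^*)$ whose ground state is realised by a brickwork circuit $U(\bm\theta_{\bm a^*})\in\mathcal{U_A}(\bm\theta)$ for which approximating $b^* = \langle\psi(\bm a^*)|\mathcal{M}|\psi(\bm a^*)\rangle$ to additive error $1/\text{poly}(n)$ is classically intractable under Conjecture~\ref{con:lowerApprox} and the non-collapse of PH. I would then embed $\bm a^*$ inside a one-parameter path $\bm\theta(t) = \bm\theta_{\bm a^*} + t\Delta$ for a carefully chosen perturbation $\Delta$, producing a testing set $\mathcal{T} = \{(\bm a(t_i), y_i)\}_{i=1}^M$ at $M = \text{poly}(N)$ real nodes $t_i$ with $t_0 = 0$, while letting $\mathcal{S}$ be any $N=\text{poly}(n)$ classically tractable instances whose labels can be produced by a classical Turing machine, as permitted by the theorem.

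The key structural observation is a Feynman path-integral expansion: each entry of each two-qubit gate $U_i(\theta_i(t))$ is a constant-degree polynomial in $t$, so
\begin{equation}
b(t) \;=\; \langle 0^n | U(\bm\theta(t))^\dagger \mathcal{M} U(\bm\theta(t)) | 0^n \rangle
\end{equation}
is a univariate polynomial of total degree $D = \text{poly}(n)$, and the labels $y_i = b(t_i)$ are evaluations of one fixed such polynomial. Suppose, for contradiction, that some classical learner produces a hypothesis $h$ with risk $R(h) \le \varepsilon$ on a distribution $\mathcal{D}$ supported on $\mathcal{T}$; Markov's inequality then forces $|h(\bm a(t_i)) - y_i| \le 3\sqrt{\varepsilon}$ on at least a $8/9$ fraction of the testing points. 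Taking $M > 9D/7$ and choosing Chebyshev-distributed nodes, a robust Berlekamp--Welch-style decoder for real polynomial evaluation codes recovers in classical polynomial time the unique degree-$D$ polynomial $\tilde{b}$ consistent with that $8/9$-fraction of noisy values, and in particular outputs $\tilde{b}(0)$ approximating $b^*$ to additive error $1/\text{poly}(n)$. This yields a classical polynomial-time algorithm for the worst-case GLP problem, contradicting Lemma~\ref{lem:LowerCompute}; by Toda's theorem together with Conjecture~\ref{con:lowerApprox}, such an algorithm would collapse PH to its second level.

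The hard part will be satisfying three coupled constraints simultaneously: (i) the path $\bm\theta(t)$ must remain inside the variational family $\mathcal{U_A}(\bm\theta)$ so every $|\psi(\bm a(t))\rangle$ is genuinely a ground state of a valid LO-QPR Hamiltonian and the perturbed parameters lie in the support of $\mathcal{D}$; (ii) the induced polynomial degree must stay $\text{poly}(n)$, which is compatible with (i) because the brickwork architecture contains $\text{poly}(n)$ two-qubit blocks each contributing constant degree, so restricting $\Delta$ to a few blocks around the worst-case gates suffices; and (iii) the numerical stability of the real-node decoder must preserve the $1/\text{poly}(n)$ additive error from the predictions to the recovered value $\tilde{b}(0)$. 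The third constraint is the most delicate and is what pins down the required training-set size $N$, since $\varepsilon = \mathcal{O}(\sqrt[4]{\log(1/\delta)/N})$ must be pushed below the numerical-stability threshold of Chebyshev interpolation on $[-1,1]$, forcing $N$ to be a sufficiently large polynomial in $n$ but still within the scaling allowed by the theorem.
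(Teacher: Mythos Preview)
Your proposal follows the same worst-to-average polynomial-interpolation strategy as the paper, and the high-level logic is sound, but the execution differs from the paper's in two substantive ways and contains one technical slip.

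\textbf{Polynomial structure.} You parameterise a \emph{univariate} path $\bm\theta(t)=\bm\theta_{\bm a^*}+t\Delta$ and decode via Berlekamp--Welch. The paper instead Taylor-truncates each two-qubit gate $U_r(\bm\theta_r)=\exp(-i\langle\bm\theta_r,\bm P_r\rangle)$ to order $K$ and shows that $\langle 0^n|U^\dagger(\vec{\bm\theta})\mathcal{M}U(\vec{\bm\theta})|0^n\rangle$ is approximated to $2^{-\mathrm{poly}(n)}$ error by a \emph{multivariate} polynomial in $\vec{\bm\theta}$ with only $\binom{R}{q}K^q=\mathrm{poly}(n)$ monomials; fitting that few-term polynomial from $M=100\binom{R}{q}K^q$ evaluations recovers the worst-case value. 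Your assertion that each gate entry is a ``constant-degree polynomial in $t$'' is not correct as written: for non-commuting Pauli terms, $\exp(-i\langle\bm\theta_r^*+t\bm\delta_r,\bm P_r\rangle)$ is not a polynomial or even a trigonometric polynomial in $t$. You must either insert the paper's Taylor-truncation step or restrict $\Delta$ so that each perturbed gate factorises as $U_r(\bm\theta_r^*)\exp(-it h_r)$ for a fixed two-qubit Hermitian $h_r$, which does give degree-$O(1)$ trigonometric polynomials per gate. Either fix works, but the step is missing.

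\textbf{Testing-set construction.} Your constraint (i) is exactly where the paper invests extra effort. You move along a line in $\bm\theta$-space and must then argue that each $U(\bm\theta(t))|0^n\rangle$ is the ground state of some $H(\bm a(t))$ in the specified family; this inversion is not automatic. The paper runs the construction the other way: it perturbs the \emph{external} parameter $\bm x\mapsto\bm x+\delta\bm x$ and tracks the induced variational parameter via a McLachlan-style linear system $B(\bm\theta)\delta\bm\theta=E(\bm\theta)$, so each testing point is a bona fide ground state by construction, and the polynomial fit then operates on the supplied map $\bm a_i\mapsto\vec{\bm\theta}(\bm a_i)$.

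\textbf{Complexity bookkeeping.} You close by appealing directly to Lemma~\ref{lem:LowerCompute}. The paper is more explicit: it models the classical learner with classical data as a $\mathrm{BPP/samp}$ machine, routes the hypothetical predictor through Stockmeyer counting to get a $\mathrm{BPP}^{\mathrm{NP}^{\mathrm{BPP/samp}}}\subseteq\mathrm{BPP}^{\mathrm{NP}}/\mathrm{poly}$ algorithm for $p(\bm j)$, and only then invokes PH non-collapse. Since your $\mathcal{S}$ is C-Data, your shortcut is defensible (the learner plus its efficiently computable training data is still simulable in $\mathrm{BPP}$), but the paper's explicit class tracking is what supports the phrase ``any classical ML algorithm'' in the statement.
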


Theorem~\ref{thm:LowerLearn} gives average-case hardness results for the C-Learning Alg.+ C-Data on the testing set $\mathcal{T}$. The meaning of C-Learning Alg.+C-Data refers to Sec.~\ref{sec:complexity}, and proof details are provided in the Appendix~\ref{subapp:LowerLearn}. 

\begin{figure*}[t]
\centering
  \includegraphics[width=0.9\textwidth]{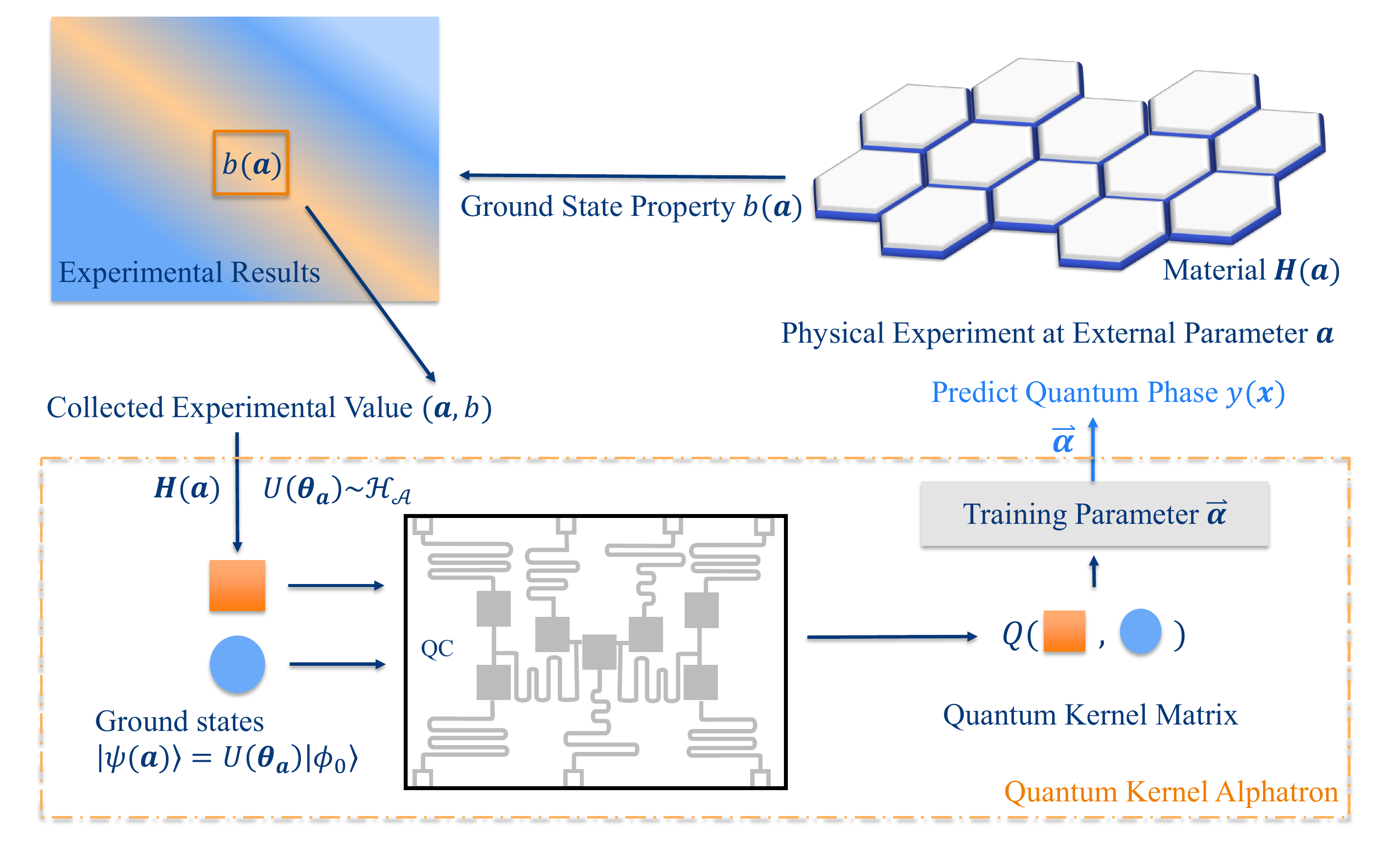}
  \caption{
  The procedure of the proposed quantum learning algorithm. Here, $|\phi_0\rangle$ denotes the initial state of the quantum system.}
  \label{fig:kernel}
\end{figure*}
\vspace{8pt}

\section{Quantum learning algorithm}
\label{sec:qalg}

\subsection{Supervised learning with 
quantum feature space}
Here, we denote $(\bm a,b)$ (or $(\bm x, y)$) as a pair of the datum $\bm a$ ($\bm x$) and the corresponding label $b$ ($y$) in the training set $\mathcal{S}$ (testing set $\mathcal{T}$).
Generally, the task of supervised learning is to learn a label $y$ of the testing datum $\bm x\in\mathcal{T}\subset\mathcal{X}$ from a distribution $\mathcal{D}(\bm x)$ defined on the space $\mathcal{X}$ according to some decision rule $h$. The decision rule $h$ is assigned by a selected machine learning model from the training set $\mathcal{S}=\{(\bm a_{\bm i},b_i)\}_{i=1}^N$, where $\bm a_{\bm i}\in \Xcal$ 
follows distribution $\mathcal{D}(\bm a_{\bm i})$, the label $b_i=h(\bm a_{\bm i})$, and $N$ is the size of the training set.
Given the training set $\mathcal{S}$, an efficient learner needs to generate a classifier $h$ in 
poly$(N)$ time, to minimize the error
\begin{align}
R(h) = \Pr_{\bm x\sim \Dcal}[h(\bm x)\ne y].
\label{Eq:generalized_error}
\end{align}
The datum $\bm x$ is sampled randomly according to $\mathcal{D}(\bm x)$, in both the training and testing procedure, and the size $N$ of the training set is polynomial in the data dimension.

The kernel method has played a crucial role in the development of supervised learning~\cite{mohri2018foundations,goel2019learning,kalai2009isotron}, which provides an approach to increase the expressivity and trainability of the original training set.
We can describe a kernel function $\Kcal:\mathcal{X}\times \mathcal{X} \rightarrow \Rbb$ as $\Kcal(\bm x, \bm x')=\widetilde{\Psi}(\bm x)^{T}\widetilde{\Psi}(\bm x')$, where $\widetilde{\Psi}: \Xcal \rightarrow \Hcal$ is the feature map
which maps the datum $\bm x\in\mathcal{X}$ to a higher-dimensional space $\Hcal$ (feature space).
Tremendous classical kernel methods~\cite{kalai2009isotron, goel2019learning} have been proposed to learn the non-linear functions or decision boundaries. With the rapid development of quantum computers, there is a growing interest in exploring whether the quantum kernel method can surpass the classical kernel~\cite{schuld2019quantum,blank2020quantum,schuld2021quantum, shaydulin2021importance,liu2021representation,shirai2021quantum, nakaji2021quantum, jerbi2021quantum, sancho2021quantum, li2021quantum, henry2021quantum, kubler2021inductive, glick2021covariant,hubregtsen2021training,klus2021symmetric,wang2021towards,huusari2021entangled,kusumoto2019experimental}.
Here we leverage the \emph{quantum kernel} as our kernel function--$Q(\bm x, \bm x')=\abs{\langle\psi(\bm x)|\psi(\bm x')\rangle}^2$, where $\ket{\psi(\bm x)}$ is the ground state of $H(\bm x)$. 

\subsection{Quantum Kernel Alphatron}
Here, we show the possibility of solving the LO-QPR problem with quantum data by leveraging the quantum kernel method combined with the Alphatron algorithm~\cite{goel2019learning}. 
From the learning theory perspective, training can be phrased as the empirical risk minimization, and the associated learning model $h^\ast$ for the minimise of the empirical risk follows the representer theorem.

\begin{algorithm}[htbp]
\SetKwInOut{Input}{Input}
\SetKwInOut{Output}{Output}
\SetKwFor{While}{while}{do}{}%
\SetKwFor{For}{for}{do}{}
\SetKwIF{If}{ElseIf}{Else}{if}{do}{elif}{else do}{}%
\Input{training set $\mathcal{S}=\{(\bm a_{\bm i},b_i)\}_{i=1}^N\in\mathbb{R}^d\times [0,1]$, 
variational quantum circuit $U\pbra{\bm\theta}\in\mathcal{U_A}(\bm\theta)$, quantum kernel function $\hat{Q}(\bm a_i.\bm x)=|\langle\psi(\bm a_i)|\psi(\bm x)\rangle|^2$, learning rate $\lambda>0$, number of iterations $T$, testing data $\mathcal{T}=\{(\bm x_{\bm j},y_j)\}_{j=1}^M\in\mathbb{R}^d\times [0,1]$}
\Output{$\hat{h}^r$}

\For{$i=1,2,...,N$}{
\emph{Prepare quantum state $|\psi(\bm{a_i})\rangle=U(\bm\theta_{\bm a_i})|0^{\otimes n}\rangle$, where $\bm\theta_{\bm a_i} := \arg\min_{\bm\theta} \langle 0^{\otimes n}|U^{\dagger}(\bm\theta)|H(\bm{a_i})|U(\bm\theta)|0^{\otimes n}\rangle$}\;
}
$\bm \alpha^{\bm 1}:=0\in\mathbb{R}^N$\;
\For{$t=1,2,...,T$}{
$\hat{h}^{t}(\bm x):=\sum_{i=1}^N\alpha_i^t \hat{Q}(\bm a_{\bm i},\bm x)$\;
\For{$i=1,2,...,N$ }{
 $\alpha_i^{t+1}=\alpha_i^t+\frac{\lambda}{N}(b_i-\hat{h}^t(\bm a_{\bm i}))$\;
}
}
\emph{Let $r=\arg \min_{t\in\{1,...,T\}}\sum_{j=1}^M(\hat{h}^t(\bm x_j)-y_j)^2$}\;
\Return{$\hat{h}^r$}
\caption{Quantum Kernel Alphatron (QKA)}\label{alg:QKernel_alphatron}
\end{algorithm}

\begin{theorem}[Representer theorem~\cite{mohri2018foundations}]
Let $\mathcal{S}=\{(\bm a_{\bm i},b_i)\}_{i=1}^N$ and corresponding feature states $\{|\psi(\bm a_i)\rangle\}_{i=1}^N$ be the training data set.
$Q:\mathcal{X}\times\mathcal{X}\mapsto\mathbb{R}$ be a quantum kernel with the kernel space $\Hcal$. Consider a strictly monotonic increasing regularisation function $g:[0,\infty)\mapsto\mathbb{R}$, and regularised empirical risk
\begin{align}
    \hat{R}_L(h^\ast)=\frac{1}{N}\sum\limits_{i=1}^N\pbra{h^\ast(\bm a_{\bm i})-b_i}^2+g\left(\|h^\ast\|_\mathcal{H}\right).
\end{align}
Then any minimizer of the empirical risk $\hat{R}_L(h^\ast)$
admits a representation of the form $h^\ast(\bm x)=\sum_{i=1}^N\alpha_iQ(\bm a_{\bm i},\bm x)$, where $\alpha_i\in\mathbb{R}$ for all $i\in\{1,2,...,N\}$, $\bm x$ and $\bm a$ are drawn from the same distribution. 
\label{thm:representer}
\end{theorem}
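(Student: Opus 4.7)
The plan is to use the standard orthogonal decomposition argument, adapted to the quantum kernel setting, exploiting the reproducing property of $Q$ in the associated reproducing kernel Hilbert space $\mathcal{H}$. First I would set up the relevant Hilbert space structure: $\mathcal{H}$ is the RKHS induced by the quantum kernel $Q(\bm x, \bm x')=|\langle\psi(\bm x)|\psi(\bm x')\rangle|^2$, and for any $h\in\mathcal{H}$ the reproducing property gives $h(\bm x) = \langle h, Q(\bm x, \cdot)\rangle_\mathcal{H}$. In particular, the values of $h$ on the training inputs $\bm a_{\bm i}$ depend only on the inner products of $h$ with the functions $Q(\bm a_{\bm i}, \cdot)$.

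Next I would decompose an arbitrary minimizer $h^\ast\in\mathcal{H}$ as
\begin{equation}
h^\ast = h_\parallel + h_\perp,
\end{equation}
where $h_\parallel\in\mathrm{span}\{Q(\bm a_{\bm i},\cdot)\}_{i=1}^N$ and $h_\perp$ is orthogonal (in $\mathcal{H}$) to this span. The reproducing property then gives
\begin{equation}
h^\ast(\bm a_{\bm i}) = \langle h_\parallel + h_\perp,\, Q(\bm a_{\bm i},\cdot)\rangle_\mathcal{H} = h_\parallel(\bm a_{\bm i}),
\end{equation}
so the empirical loss term $\tfrac{1}{N}\sum_i (h^\ast(\bm a_{\bm i}) - b_i)^2$ is invariant under replacing $h^\ast$ by $h_\parallel$.

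The next step handles the regularizer. By Pythagoras in $\mathcal{H}$,
\begin{equation}
\|h^\ast\|_\mathcal{H}^2 = \|h_\parallel\|_\mathcal{H}^2 + \|h_\perp\|_\mathcal{H}^2 \ge \|h_\parallel\|_\mathcal{H}^2,
\end{equation}
with equality only when $h_\perp = 0$. Since $g$ is strictly monotonically increasing on $[0,\infty)$, this implies $g(\|h^\ast\|_\mathcal{H}) \ge g(\|h_\parallel\|_\mathcal{H})$, again with equality iff $h_\perp=0$. Combining the two observations, $\hat{R}_L(h_\parallel) \le \hat{R}_L(h^\ast)$, so minimality of $h^\ast$ forces $h_\perp=0$ and hence $h^\ast = h_\parallel = \sum_{i=1}^N \alpha_i Q(\bm a_{\bm i},\cdot)$ for some real coefficients $\alpha_i$.

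The only subtle point I foresee is justifying that $\mathcal{H}$ genuinely is an RKHS for the specific quantum kernel $Q(\bm x,\bm x')=|\langle\psi(\bm x)|\psi(\bm x')\rangle|^2$; this reduces to checking that $Q$ is symmetric and positive semidefinite, which follows because $Q$ is the squared modulus of an inner product of normalized states, and symmetric PSD kernels always induce an RKHS via the Moore--Aronszajn theorem. Once this foundational point is in place, the decomposition argument above is routine and the claim of the theorem follows immediately.
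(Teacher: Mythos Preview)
Your argument is correct and is exactly the standard RKHS orthogonal-decomposition proof of the representer theorem. Note, however, that the paper does not actually prove this statement: it is quoted as a known result from the cited textbook~\cite{mohri2018foundations} and used as background for the quantum kernel construction, so there is no ``paper's own proof'' to compare against beyond observing that your proof is the classical one the citation points to.
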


\begin{figure*}[htb]
\centering
  \includegraphics[width=0.85\textwidth]{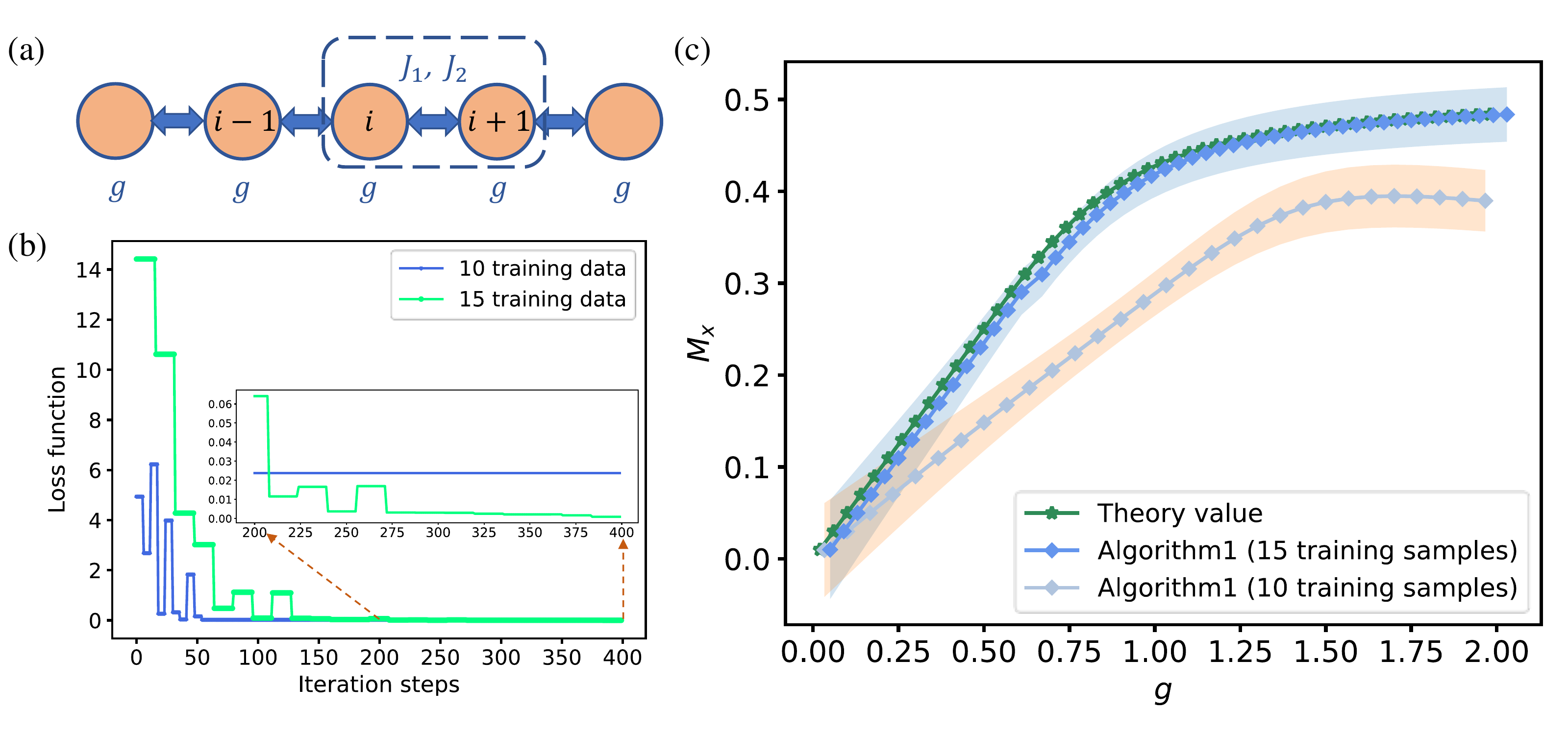}
  \caption{Numerical results for predicting ground-state properties in a $S=1/2$ XXZ spin model with $n=16$ qubits. (a) Illustration of the concerned spin model geometry. (b) The two curves depict the tendency of $R_L(h)$ in the training procedure, where the green (blue) curve represents the number of training data $N=15$ ($N=10$). (c) For fixed iterations (400), we randomly select the training data ($N=10$ or $N=15$) over $10$ trials and plot the average performance by Alg.~\ref{alg:QKernel_alphatron} in predicting the magnetization $M_x$.}
  \label{Fig:QPR_demo_1}
\end{figure*}

\begin{figure*}[htb]
\centering
  \includegraphics[width=1.0\textwidth]{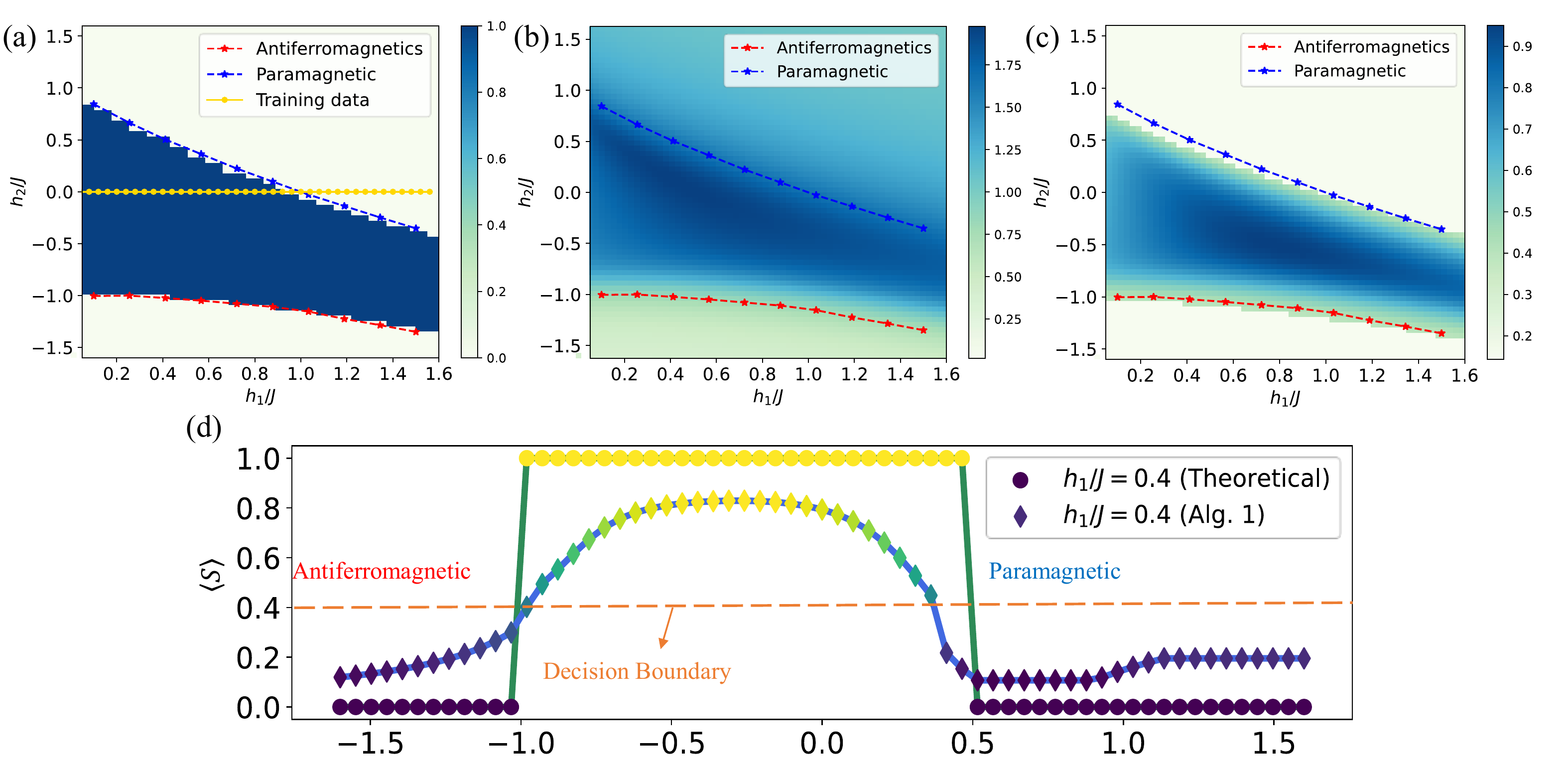}
  \caption{Numerical results for recognizing a  $Z_2\times Z_2$ Symmetry-Protected-Topological (SPT) phase of the Haldane Chain. (a) The exact phase diagram of the Haldane Chain (Eq.~\ref{Eq:HaldaneChain}), where the phase boundary points (blue and red curves) are extracted from the Ref~\cite{cong2019quantum}, and the background shading represents the phase function of $(h_1/J,h_2/J)$. The yellow line indicates $40$ training points on the line $h_2=0$. \comments{(b) The classification result by using $N=15$ training data. (c) The quantum phase classification result of Haldane Chain by using Alg.~\ref{alg:QKernel_alphatron} with $N=40$ training data. Here, the classification accuracy $v_s=0.985$ which has a better performance compared to the $2$-layer QCNN method~\cite{cong2019quantum} ($v_s=0.971$, detail refers to SI material). (d) The quantum phase function at cross-section $h_1/J=0.4$ of the Haldane Chain proposed by Alg.~\ref{alg:QKernel_alphatron}.}
  }
  \label{Fig:QPR_demo_2}
\end{figure*}
\begin{figure*}[htb]
\centering
  \includegraphics[width=0.85\textwidth]{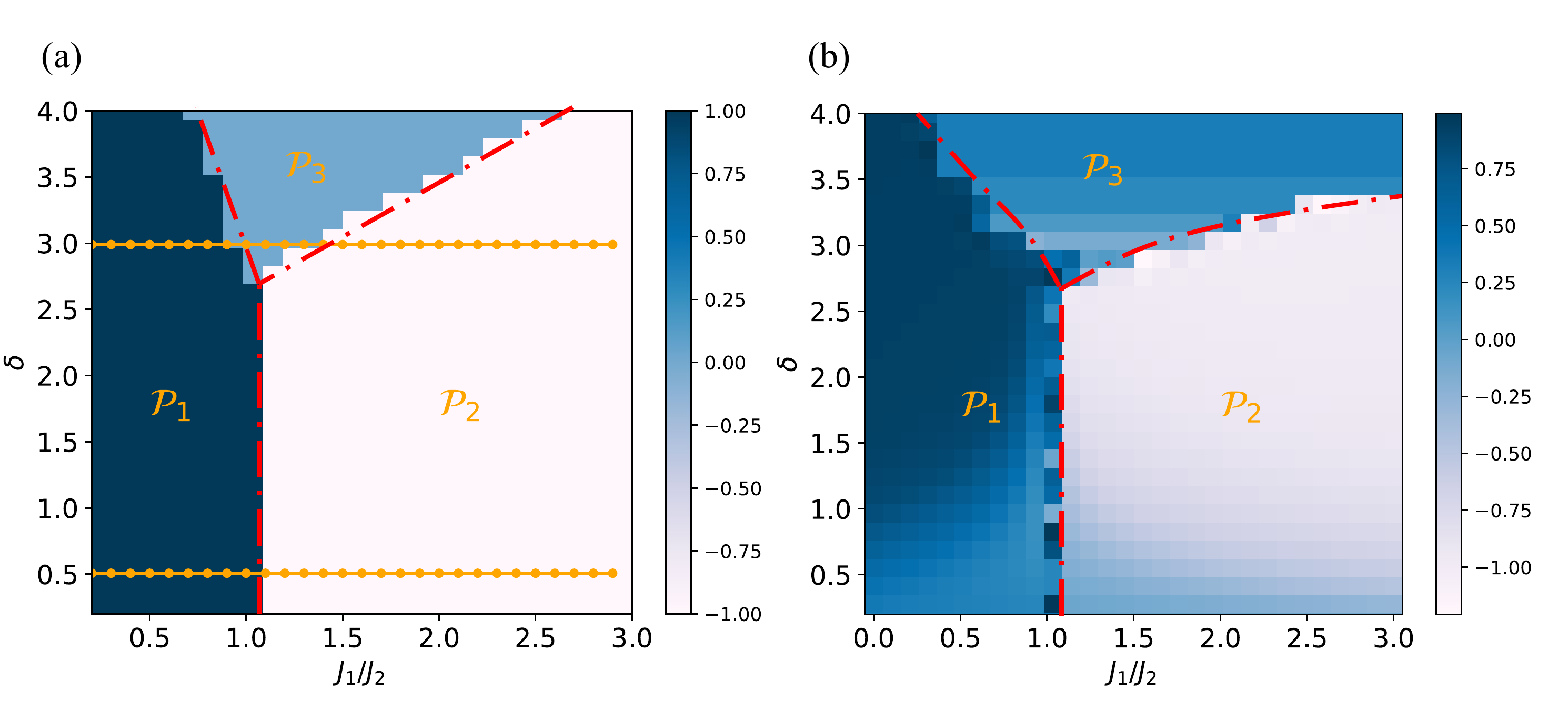}
  \DeclareGraphicsExtensions.
   \caption{Numerical results for recognizing three distinct phases of bond-alternating XXZ model. (a) The system's three distinct phases are characterized by the topological invariant $Z_R$ discussed in the Ref~\cite{elben2020many}. The invariant $Z_R=+1$ marks the Trivial phase  $\mathcal{P}_1$, $Z_R=-1$ marks the  Topological phase $\mathcal{P}_2$ and $Z_R=0$ marks the Symmetry broken phase $\mathcal{P}_3$. Here, the (white and blue) background shading indicates the theoretical quantum phase value, and the red curves depict the quantum phase transition boundary. The two orange lines $\delta=3.0$ and $\delta=0.5$ represent $60$ training points. (b) The quantum phase diagram predicted by the Alg.~\ref{alg:QKernel_alphatron}. }
  \label{Fig:QPR_demo_3}
\end{figure*}
\begin{figure*}[htb]
\centering
  \includegraphics[width=1.0\textwidth]{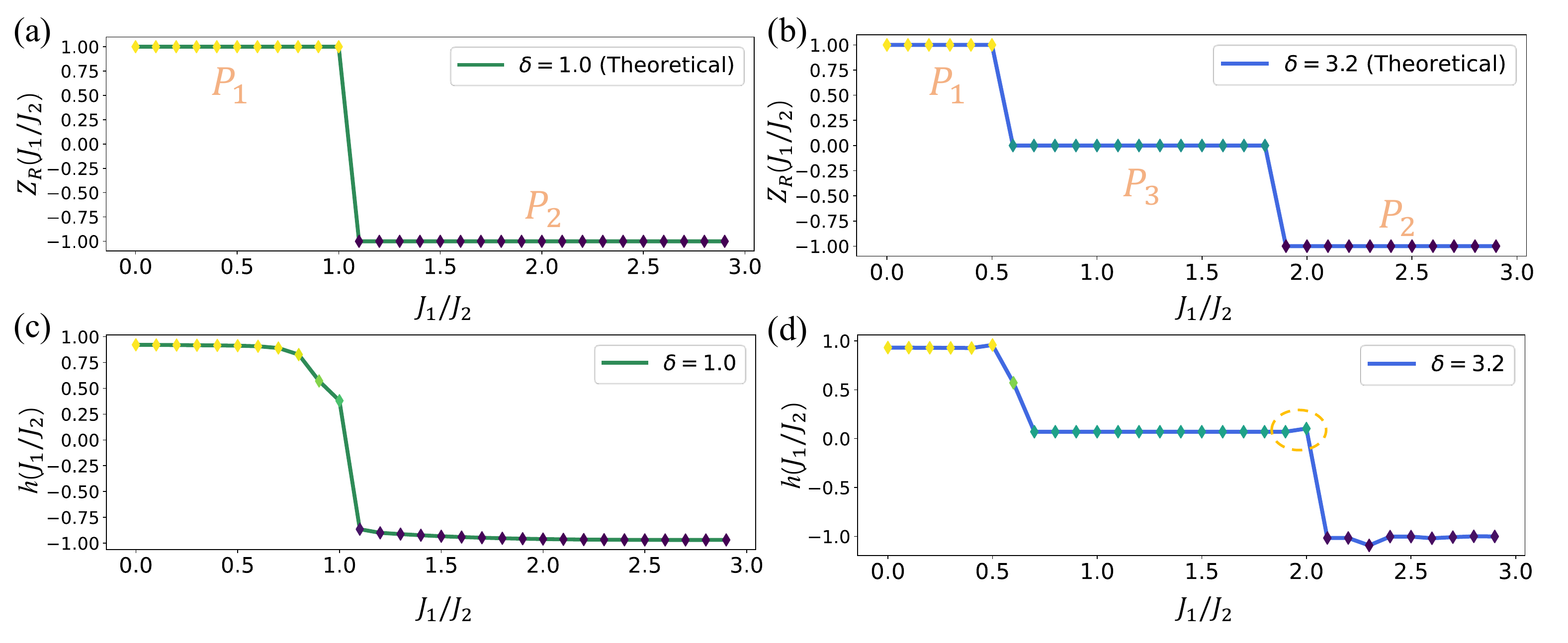}
  \DeclareGraphicsExtensions.
  \caption{The function $Z_R(J_1/J_2)$ at cross sections $\delta=3.2$ and $\delta=1.0$ of the bond-alternating XXZ model phase diagram. Figures (a) and (b) are extracted from Fig.~\ref{Fig:QPR_demo_3} (a) represent the theoretical value on the two lines, and Figures (c) and (d) are extracted from Fig.~\ref{Fig:QPR_demo_3} (b) represent the predictive results. The dotted circle marks the mis-classification ground states.}
  \label{Fig:QPR_demo_3_1}
\end{figure*}

According to the above theorem, one of the options to the quantum kernel is $Q(\bm a_{\bm i},\bm x)=\abs{\langle\psi(\bm a_{\bm i})|\psi(\bm x)\rangle}^2$ where $|\psi(\bm a_i)\rangle$ represents the ground state of $H(\bm a_i)$, and unknown order parameter observable $\mathcal{M}$ thus can be represented as a linear combination of feature states, that is, $\mathcal{M}\approx\sum_i\alpha_i|\psi(\bm a_{\bm i})\rangle\langle\psi(\bm a_{\bm i})|$. Given the kernel matrix $\mathcal{Q}=[Q(\bm{a_i}, \bm a_j)]_{N\times N}$, the optimal weight parameters $\alpha_i$ in the expression of $\Mcal$ has a closed-form solution by leveraging  linear regression algorithms, and it requires $\mathcal{O}(N^{2.373})$ time complexity for solving such problem~\cite{le2014powers}.

Here, we provide a more advanced method in learning $\alpha_i$. From the learning theory perspective, training can be phrased as the empirical risk minimization, and the associated learning model $\hat{h}^\ast$ for the minimizer of the empirical risk can be learned as Alg.~\ref{alg:QKernel_alphatron}. As a result, given the quantum kernel matrix $\mathcal{Q}$, the LO-QPR problem can be solved in $T\times N=\mathcal{O}(N^{1.5})$ running time, as shown in Theorem \ref{thm:AlgorithmBound}. The outline of quantum learning algorithm is shown in Fig.~\ref{fig:kernel}.

\begin{theorem}
Let quantum kernel $Q(\bm a_{\bm i},\bm x)=|\langle\psi(\bm a_{\bm i})|\psi(\bm x)\rangle|^2$, and $\mathcal{S}=\{(\bm a_{\bm i},b_i)\}_{i=1}^N$ be the training set such that 
\begin{align}
  \Ebb[b_j|\bm a_{\bm j}]=\sum_{i} \alpha_i \abs{\langle\psi(\bm a_{\bm i})|\psi(\bm a_{\bm j})\rangle}^2 + g(\bm a_{\bm j}),  
\end{align}
$g:[0,\infty)\mapsto[-G,G]$ is a strictly monotonic increasing regularisation function such that $\Ebb[g^2]\leq \varepsilon_g$ and $\sum_{ij}\alpha_i\alpha_j \abs{\langle\psi(\bm a_{\bm i})|\psi(\bm a_{\bm j})\rangle}^2<B$. Then for failure probability $\delta\in(0,1)$, $\Ord{N^{5/2}}$ copies of quantum states to estimate $Q(\bm a_{\bm i},\bm x)$, Alg.~\ref{alg:QKernel_alphatron} outputs a hypothesis $\hat{h}^\ast$ such that the training error $\hat{R}_L(\hat{h}^\ast)$ can be bounded by
\begin{align}
\Ord{\sqrt{\varepsilon_g} +  G \sqrt[4]{\frac{\log (1/\delta)}{N}} + B\sqrt{\frac{\log (1/\delta)}{N}} }
\label{eq:risk_qkernel}
\end{align}
 by selecting $\lambda=1$, $T=\mathcal{O}(\sqrt{N/\log(1/\delta)})$ and $M=\mathcal{O}(N\log(T/\delta))$, and 
$R(\hat{h}^*)$ (Eq.~\ref{Eq:generalized_error}) can be upper bounded by
\begin{align}
R(\hat{h}^*)\leq \hat{R}_L(\hat{h}^\ast)+\mathcal{O}\left(\sqrt{\frac{\log(2/\delta)}{N}}\right).
\end{align}
\label{thm:AlgorithmBound}
\end{theorem}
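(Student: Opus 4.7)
The plan is to combine the convergence analysis of the classical Alphatron algorithm~\cite{goel2019learning} with a concentration argument for the quantum kernel estimates and a standard generalization bound. Alphatron is essentially a projected-gradient method on the coefficient vector $\bm\alpha^t$, and its guarantee follows from a potential-decrease inequality: when run with the exact kernel, at each step $\|\bm\alpha^{t+1} - \bm\alpha\|_2^2 \le \|\bm\alpha^t - \bm\alpha\|_2^2 - \Omega\!\left(\lambda\,\hat R_L(\hat h^t)/N\right)$ whenever $\hat R_L(\hat h^t)$ exceeds the noise floor $\mathcal{O}(\sqrt{\varepsilon_g})$. Telescoping over $T$ iterations and using $\|\bm\alpha\|_2^2 \lesssim B$ forces at least one iterate to have small empirical risk; choosing $\lambda=1$ and $T=\mathcal{O}(\sqrt{N/\log(1/\delta)})$ reproduces the $G(\log(1/\delta)/N)^{1/4} + B(\log(1/\delta)/N)^{1/2}$ rate of Eq.~\ref{eq:risk_qkernel}, where the $G$-term comes from Hoeffding concentration of the empirical MSE and the $B$-term from the RKHS norm budget.

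First I would verify that Alg.~\ref{alg:QKernel_alphatron}'s update $\alpha_i^{t+1} = \alpha_i^t + (\lambda/N)(b_i - \hat h^t(\bm a_i))$ is precisely a gradient step on the dual objective $\tfrac{1}{2N}\sum_i (b_i - \hat h^t(\bm a_i))^2$, so the Alphatron potential argument applies verbatim once the kernel is held fixed. Next I would propagate quantum estimation error: each entry $\hat Q(\bm a_i,\bm a_j)$ is an empirical mean of SWAP-test outcomes, so Hoeffding plus a union bound over the $\mathcal{O}(N^2)$ Gram-matrix entries shows that $K = \mathcal{O}(\sqrt{N}\log(N/\delta))$ copies per pair suffice to guarantee $\max_{i,j}|\hat Q(\bm a_i,\bm a_j) - Q(\bm a_i,\bm a_j)| \le \mathcal{O}(N^{-1/4})$ with probability $1-\delta/3$. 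This precision is calibrated to match the $GN^{-1/4}$ Alphatron term, and the total sample cost is $\mathcal{O}(N^2\cdot\sqrt{N}) = \mathcal{O}(N^{5/2})$.

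The main obstacle is to show that the per-entry kernel noise enters the Alphatron bound only additively, and does not amplify across the $T$ iterations. This works because every iterate reuses the \emph{same} noisy Gram matrix $\hat Q$; writing $\hat Q = Q + E$ with $\|E\|_\infty \le \mathcal{O}(N^{-1/4})$, the noise acts as a fixed perturbation $E\bm\alpha$ of the regression target, whose $\ell_2$-norm is $\mathcal{O}(B N^{1/4})$, which after dividing by $N$ contributes an extra $\mathcal{O}(B/\sqrt{N})$ absorbed by the existing $B$-term. To select $\hat h^r$ from the $T$ candidates I would apply Hoeffding and a union bound to the $T$ validation errors on the held-out set $\mathcal{T}$: taking $M = \mathcal{O}(N\log(T/\delta))$ points guarantees, with probability $1-\delta/3$, that the selected iterate lies within $\mathcal{O}(1/\sqrt{N})$ of the best training empirical risk, preserving Eq.~\ref{eq:risk_qkernel}.

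Finally, the generalization claim follows from a standard Rademacher-complexity argument for bounded-norm kernel classes: since $Q(\cdot,\cdot)\in[0,1]$ and the RKHS norm of $\hat h^\ast$ is controlled by $B$, the empirical Rademacher complexity of the hypothesis class is $\mathcal{O}(B/\sqrt{N})$, and McDiarmid's inequality applied to the $[0,1]$-bounded squared loss converts this into the claimed $\mathcal{O}(\sqrt{\log(2/\delta)/N})$ gap between $R(\hat h^\ast)$ and $\hat R_L(\hat h^\ast)$. A concluding union bound over the three failure events—kernel concentration, validation-set selection, and generalization—absorbs the $\delta/3$ factors into the overall $\delta$ of the theorem statement.
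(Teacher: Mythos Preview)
Your high-level structure---reduce to the classical Alphatron guarantee, control the kernel-estimation error, then append a uniform generalization bound---matches the paper's. The divergence is in how the kernel noise is handled. The paper does \emph{not} argue that reusing a single noisy Gram matrix prevents amplification; instead it runs two copies of the iteration in parallel, one with the exact $Q$ and one with the estimated $\hat Q$, sets $\delta_\alpha^t=\hat\alpha_i^t-\alpha_i^t$, and derives the recurrence $|\delta_\alpha^t|\lesssim|\delta_\alpha^{t-1}|+\tfrac{t}{N}\epsilon_Q$, which gives $|h^t(\bm x)-\hat h^t(\bm x)|\le O(t^2\epsilon_Q)$. With $T=O(\sqrt{N/\log(1/\delta)})$ this forces $\epsilon_Q=O(N^{-5/4}\sqrt{\log(1/\delta)})$, i.e.\ $O(N^{5/2})$ SWAP-test shots \emph{per kernel evaluation}, not in total as you assume.

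Your perturbation argument has a gap at exactly this point. You budget $O(\sqrt N\log(N/\delta))$ shots per entry, giving $\|E\|_\infty=O(N^{-1/4})$; plugged into the paper's trajectory bound this yields $|h^T-\hat h^T|=O(T^2N^{-1/4})=O(N^{3/4})$, which diverges rather than being absorbed. The alternative you sketch---treat $\hat Q$ as a fixed kernel and push $E\bm\alpha$ into the residual $g$---would need (i) $\hat Q$ to still be a legitimate kernel (PSD), and (ii) the target coefficient vector $\bm\alpha$ to have bounded norm in the $\hat Q$-RKHS. Neither is automatic: an entrywise $N^{-1/4}$ perturbation of a PSD matrix need not be PSD, and the hypothesis $B>\bm\alpha^\top Q\bm\alpha$ does not control $\|E\bm\alpha\|_2$ without a bound on $\|\bm\alpha\|_1$ or on $\|E\|_{\mathrm{op}}$ (which can be as large as $N\|E\|_\infty$). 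So the claimed $O(B/\sqrt N)$ contribution of the kernel noise is not justified; you either need the paper's per-entry precision $\epsilon_Q=O(N^{-5/4})$ together with its iterative comparison, or a substantially more careful perturbation argument than the one outlined.
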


The essential difference between Alg. \ref{alg:QKernel_alphatron} and the original Alphatron algorithm is that we substitute the quantum kernel $Q(\bm{a_i}, \bm{x})$ into the classical kernel function. Although estimating the quantum kernel will introduce an additive error $\epsilon_q$, we rigorously prove the robustness of Alg. \ref{alg:QKernel_alphatron} when encountering measurement errors. \comments{Specifically, utilizing $\Ord{N^{5/2}}$ copies of quantum states to estimate the quantum kernel function, the estimation error $\epsilon_q$ of $Q(\bm{a_i}, \bm{x})$ can be upper bounded by $$\epsilon_q=\Ord{N^{-5/4}\sqrt{\log(1/\delta)}},$$ which provides the quantum learner $\Ord{N^{5/2}}$ quantum joint measurement overhead. This 
gives an upper bound of $\|h^t\pbra{\bm x} - \hat{h}^t\pbra{\bm x}\|$ at the $t$-th iteration step, and the quantum empirical error $R(\hat{h})$ will saturate the upper bound as indicated in Eq.~\ref{eq:risk_qkernel}. }

\emph{\comments{Remark (properties of Alg.~\ref{alg:QKernel_alphatron})}} With the increase of training data scale $N$, Alg.~\ref{alg:QKernel_alphatron} enables $\hat{R}(\hat{h^*})$ of QPR convergence to a low-level empirical risk, which is promised by Theorem~\ref{thm:AlgorithmBound}. Specifically, if the quantum kernel matrix $Q$ can be exactly calculated, then by Goel and Klivans~\cite{goel2019learning}, QKA will output a hypothesis $h^\ast$ with an $\mathcal{O}(\sqrt[4]{\log (1/\delta)/N})$ empirical risk, where $\delta$ represents faliure probability. However, quantum kernel $Q(\bm a_i,\bm x)$ is actually obtained by performing  Destructive-Swap-Test algorithm~\cite{garcia2013swap} finite rounds, and the estimated $\hat{Q}$ has an additive error $\epsilon_q$ to the exact $Q$. In this quantum scenario, utilizing $\Ord{N^{5/2}}$ copies of quantum feature states to estimate the quantum kernel function suffices to provide an estimation of $Q(\bm{a_i}, \bm{x})$ with $\epsilon_q=\Ord{N^{-5/4}\sqrt{\log(1/\delta)}}$ additive error, and this leads the quantum empirical error $\hat{R}(\hat{h^*})$ to saturate $\mathcal{O}(\sqrt[4]{\log (1/\delta)/N})$ risk in $\mathcal{O}(\sqrt{N}t_{Q})$ time complexity, where $t_{Q}$ is the time required to compute kernel function $\hat{Q}$. The above procedure introduces $\Ord{N^{5/2}}$ quantum joint measurements overhead to the quantum learner, nevertheless, it provides a convincing performance for QKA. Further details are explained in the Appendix~\ref{proof_algorithmbound}.

\section{Numerical Simulations}
\label{sec:numericalRes}
\comments{Given a general parameterized Hamiltonian $H(\bm a)$, there only exists specific choices of parameters $\bm a$ for which the ground state $|\psi(\bm a)\rangle$ can be classically solved. Thus the number of collected training data is often limited. In this paper, we explore the capability of QKA algorithm with small-scale of training data in recognizing quantum phases for several instances of LO-QPR tasks. }

Firstly, we consider a warm-up case that detects the appearance of the staggered magnetization for the $S=\frac{1}{2}$XXZ spin chain in the Ising limit~\cite{hieida2001anisotropic}. The Hamiltonian $H_w(g)$ is defined as
\begin{eqnarray}
\begin{split}
\sum\limits_{i=1}^nJ_1\left(S_i^xS_{i+1}^x+S_i^yS_{i+1}^y\right)+J_2S_i^zS_{i+1}^z-g\sum\limits_{i=1}^nS_i^x
\end{split}
\end{eqnarray}
where $S_i^{\alpha}$ is the $\alpha$-component of the $S=1/2$ spin operator at the $i$-th site, and $g$ is the strength of the transverse field. The exchange coupling constant in $xy$ plane is denoted by $J_1$ and that of the $z$-axis direction by $J_2$. Here, we set $J_1=0.2$, $J_2=1$ and depict the phase diagram $M_x=\langle X\rangle$ as a function of $g$ (see the green curve in Fig. \ref{Fig:QPR_demo_1} (c)), where the expectation is under the ground state of Hamiltonian $H_w$. In this case, the number of qubits $n=16$, the training data $\mathcal{S}=\{(g_i, M_x(g_i))\}_{i=1}^N$ where $g_i$ is randomly sampled from the interval $[0,2]$, and the testing data contains all the point from $\{g_t=0.067t\}$ for $0\leq t\leq30$. The predictions proposed by Alg.~\ref{alg:QKernel_alphatron} are illustrated in Fig. \ref{Fig:QPR_demo_1}, \comments{which shows Alg.~\ref{alg:QKernel_alphatron} yields higher accuracy prediction for the training set with more training data. This provides a simulation support for the theoretical bound in Eq.~\ref{eq:risk_qkernel}}

Secondly, we consider a $Z_2\times Z_2$ symmetry-protected topological (SPT) phase $\mathcal{P}$ which contains the $S=1$ Haldane chain. The ground states $\{|\psi(h_1/J,h_2/J)\rangle\}$ belongs to a family of Hamiltonians
\begin{eqnarray}
    \begin{split}
    H_s(h_1/J,h_2/J)=&-J\sum\limits_{i=1}^{n-2}Z_iX_{i+1}Z_{i+2}\\
    &-h_1\sum\limits_{i=1}^nX_i-h_2\sum\limits_{i=1}^{n-1}X_iX_{i+1},
    \label{Eq:HaldaneChain}
\end{split}
\end{eqnarray}
where $X_i$, $Z_i$ are Pauli operators for the spin at site $i$, $n$ is the number of spins, and $h_1$, $h_2$ and $J$ are parameters of $H_s$. In Fig.~\ref{Fig:QPR_demo_2}~(a), the blue and red curves show the phase boundary points, and the background shading (colored tape) represents the phase diagram as a function of $\bm x=(h_1/J,h_2/J)$. When the parameter $h_2=0$, the ground states of $H_s$ can be exactly solvable via the Jordan-Wigner transformation, and it can be efficiently detected by global order parameters whether these ground states belong to the SPT phase $\mathcal{P}$. Here, we utilize $N=40$ data pairs $\{\bm a=(h_1/J,h_2/J), b\}$ as the training data, in which $h_2=0$ and $ b$ indicates phase value on $\bm a$ (see yellow points in Fig. \ref{Fig:QPR_demo_2} (a)). Our target is to identify whether a given, unknown ground state $|\psi(\bm x)\rangle$ belongs to $\mathcal{P}$. In principle, the SPT phase $\mathcal{P}$ can be detected by measuring a non-local order parameter~ \cite{haegeman2012order, pollmann2012detection}
$S_{ij}=Z_{i}X_{i+1}X_{i+3}...X_{j-3}X_{j-1}Z_{j}$, where $X,Z$ are Pauli operators, indices $i<j$ and $i,j\in[n]$, $n$ denotes the number of qubits in concerned Hamiltonian. \comments{Here, we choose $S=Z_1X_1...X_{15}Z_{16}$ and utilize $N^{5/2}=40^{2.5}\approx 10^5$ quantum measurement to estimate the quantum kernel function. The classification results for $n=16$ and $N=15, 40$ are illustrated as Fig.~\ref{Fig:QPR_demo_2}~(b) and (c) respectively, which show the performance on the approximation of the order parameters can be systematically improved by increasing the number of training samples. Fig. \ref{Fig:QPR_demo_2} (c) shows that Alg.~\ref{alg:QKernel_alphatron} can reproduce the phase diagram with high accuracy on $M=4096$ testing points, where $61$ points are mis-classified in the vicinity of paramagnetic boundary, and the classification accuracy $v_s=0.985$ in this case.}

Although the training data is only on the line with $h_2=0$, which can be classically simulated~\cite{cong2019quantum}, a classical learner cannot learn from these data to predict the target quantum phase if the testing data satisfies Conjecture~\ref{con:lowerApprox}. However, we demonstrate that even if the training data can be classically simulated, quantum kernel Alphatron still works. Here, we provide more discussions on this result that shows training in classical data provides a predictive model for quantum points. The reason relies on that the order parameter observable is approximated by a linear combination of feature states in the training set, that is, $\mathcal{M}\approx\sum_i\alpha_i|\psi(\bm a_{\bm i})\rangle\langle\psi(\bm a_{\bm i})|$, and the prediction of the order parameter is significantly determined by the quality of training data. Although the training data are classically simulated, $40$ ground states on line $h_2=0$ suffice to approximate an observable that accurately classifies the quantum phase transition boundary. A quantum learner then can utilize SWAP-test technique to efficiently estimate the quantum kernel $|\langle\psi(\bm x)|\psi(\bm a_i)\rangle|^2$ and predict the quantum phase of $|\psi(\bm x)\rangle$. \comments{In our model, we introduced a regularised term $g(\|h\|)$ to avoid  over-fitting and to enhance its generalization ability. In other words, this model will not necessarily match all training data $\mathcal{S}$, but it has more generalized ability on the testing data $\mathcal{T}$.  As a result, the qualitative domain walls are correctly mapped, but there exists some error in the vicinity of training data. 
} 

We also note that the quantum convolution neural network (QCNN) method~\cite{cong2019quantum} has been proposed to solve the same problem by applying a CNN quantum circuit to the quantum state. Given $n$-qubit ground states $|\psi(\bm x)\rangle$, the QCNN method requires additionally $\mathcal{O}(\frac{7n}{2}(1-3^{1-d})+n3^{1-d})$ multi-qubit operations and $4d$ single-qubit rotations to provide an output
at depth $d$, and the proposed quantum kernel Alphatron requires additionally $\mathcal{O}(n)$ two-qubit operations to estimate the quantum kernel matrix. The measurement complexity of QCNN is determined by the number of iteration steps which is hard to be theoretically analyzed, however the quantum kernel Alphatron only requires $\mathcal{O}(N^2/\epsilon^2)$ quantum measurement in the whole training procedure. The comparison of the computational resources  is summarized in Table~\ref{tab:Compare}. 

\begin{table*}
    \centering
\begin{tabular}{c|cc}
    \hline\hline
   Quantum Resources & QKA & QCNN\\
    \hline
    \multirow{1}{*}{Quantum Gate Complexity} & $\mathcal{O}(n)$ & $\mathcal{O}(\frac{7n}{2}(1-3^{1-d})+n3^{1-d})$\\
    \hline
    \multirow{1}{*}{Quantum Measurement Complexity} & $\mathcal{O}(N^2/\epsilon^2)$, where $\epsilon=N^{-5/4}$  & $\mathcal{O}(dNT/\epsilon^2)$\\
    \hline
    \multirow{1}{*}{The number of iteration Steps $T$} & $\mathcal{O}(\sqrt{N\log(1/\delta)})$ & hard to analyze\\
    \hline\hline
\end{tabular}
    \caption{Quantum computational resources comparison between Quantum Kernel Alphatron (QKA) and Quantum Convolutional Neural Networks (QCNN).}
    \label{tab:Compare}
\end{table*}

Finally, we consider the bond-alternating XXZ model
\begin{align}
    H_b=\sum\limits_{i:\rm{even}}J_{1}H_i + \sum\limits_{i:\rm{odd}}J_{2}H_i,
\end{align}
where $H_i=X_iX_{i+1}+Y_iY_{i+1}+\delta Z_iZ_{i+1}$, and $J_1, J_2, \delta$ are coupling parameters of $H_b$. The XXZ model has three different phases that can be detected by the topological invariant $Z_R(J_1/J_2,\delta)$~\cite{elben2020many}. Here, we select totally $N=60$ pairs $\{\bm a=(J_1/J_2,\delta),\bm b=Z_R(J_1/J_2,\delta)\}$ as the training data on the $\delta=0.5,\delta=3.0$ horizontal lines.
In Fig. \ref{Fig:QPR_demo_3} (b) and Fig. \ref{Fig:QPR_demo_3_1}, we utilize Alg.~\ref{alg:QKernel_alphatron} to generate the phase diagram as a function of $\bm x=(J_1/J_2,\delta)$, where the colored shading background represents the phase classification results on a $16$-qubit system. The data in phase diagram $\mathcal{P}_3$ is post-processed by the averaging scheme. The testing data contains $900$ ground states, where $59$ points are mis-classified in the vicinity of quantum phase transition boundary, and the classification accuracy $v_s=0.934$.

\section{The Power of Learning Algorithms}
\label{sec:complexity}

In this paper, relationships between four different machine learning classes in terms of the method that produces the training data and the learning algorithm are discussed. 
\begin{itemize}
    \item Here, ``Q-Learning Alg.'' (``C-Learning Alg.'') represents learning algorithms with quantum (classical) computer. 
    \item ``Q-Data'' point $(\bm a, b)$ represents the property value $b$ can be observed from physical experiments associated with the system $H(\bm a)$, and ``C-Data'' point $(\bm a, b)$ represents the property $b$ can be efficiently computed by some classical Turing machines given $\bm a$.
\end{itemize}
In this paper, a separation is proved between (C-Learning Alg.~$+$ C-Data) and (Q-Learning Alg.~$+$  Q-Data). Noting that the definition of C-Learning Alg.~$+$ C-Data is different to the `classical' machine learning in Ref.~\cite{huang2021provably}, and the proposed theoretical result does not contradict to their statement.

(1) As shown in the Ref~\cite{huang2021power} (also see Appendix~\ref{App:BPPSamp}), the power of C-Learning Alg.~$+$ C-Data will gradually enhance with the accumulating of training (advice) data, and the set of problems can be solved by classical learning algorithms is defined as the BPP/poly class. With the increase of the training data set, the learner will obtain more and more advicing data, and BPP/poly class will convergence to the P/poly class. It has been proved that the relationship ${\rm BPP}\subseteq{\rm BPP/poly}\subseteq{\rm P/poly}$ holds. Hence, a machine learning task where some data ({even generated classically}) is provided can be considerably different than commonly studied computational tasks. Classical learning algorithm with classical data (C-Learning Alg.~$+$ C-Data) has recently proven successful for many practical applications~\cite{lecun2015deep,zeraati2022flexible, vaswani2017attention, silver2017mastering,jumper2021highly}. However, the direct application of these learning algorithms is challenging for intrinsically quantum problems. This is because the extremely large Hilbert space hinders the efficient translation of many-body problems into a classical learning framework. A natural question is thus raised--where is the limitation of such learning algorithms in quantum problems? Our first result rigorously proved that there exists a LO-QPR problem that cannot be solved by such classical learning algorithms under standard complexity assumptions (see Lemma~\ref{lem:LowerCompute} and Theorem~\ref{thm:LowerLearn}).

{(2) Quantum learning algorithm with quantum data (Q-Learning Alg.~$+$  Q-Data) is another significant theme in this paper. The Q-Learning Alg. can utilize quantum-enhanced feature spaces in the learning process, and ground states are provided by 
quantum circuits with polynomial circuit size.
Here we select variational quantum circuits to provide the quantum-enhanced feature state.
The proposed quantum learner first learns the order parameter from the training data $(\bm a, b)$, then utilizes this approximated order parameter to predict the quantum phase transition. Our second result claims that the LO-QPR problem can be efficiently solved by Q-Learning Alg.~$+$  Q-Data (see Alg.~\ref{alg:QKernel_alphatron} and Theorem~\ref{thm:AlgorithmBound}). These two results thus imply that ``Q-Learning Alg.~$+$  Q-Data'' is strictly stronger than ``C-Learning Alg.~$+$  C-Data'' with suitable complexity assumptions. 
This implies that the quantum kernel Alphatron with quantum data can efficiently solve the LO-QPR problem if there exists a quantum algorithm (or a quantum circuit) that provides the ground state of the concerned Hamiltonians.}

{(3) Another interesting class is classical learning algorithm with quantum data (C-Learning Alg.~$+$  Q-Data). Here, we briefly discuss a special learning scenario for the LO-QPR problem. 
Suppose the training data is provided by a quantum computer, and the quantum linear property is described by a non-local order parameter $\mathcal{M}$. In this setting, the training data has several choices, such as the external parameter and quantum phase $(\bm a, b)$ results from a quantum computer or the classical shadow representation used in \cite{huang2021provably}. Given an unknown ground state $|\psi(\bm x)\rangle$, here we discuss whether a classical learner can predict its quantum phase. Similar to the quantum learning process, a possible follow-up classical learning steps are: (a)  approximate the order parameter $\mathcal{\widetilde{M}}=\sum_iO_i$ from the quantum training data where $O_i$ is the tensor product of Pauli operators; (b) Given the learned $\mathcal{\widetilde{M}}$, estimate the quantum phase value by using pre-obtained random computational basis measurements ($\{0,1\}^n$ bit-string samples) from the quantum state $|\psi(\bm x)\rangle$. Under these two steps, can a classical learner solve the LO-QPR efficiently?
}
{The answer is yet unknown, but we are pessimistic about it. According to Theorem 2 in Ref.~\cite{huang2020predicting}, at least $\Omega(3^{L(\widetilde{\Mcal})}\|\mathcal{\widetilde{M}}\|_{\infty}/\epsilon^2)$ samples are needed to provide an approximation of $\langle\psi(\bm x)|\mathcal{\widetilde{M}}|\psi(\bm x)\rangle$ with additive error $\epsilon$, where $\|\cdot\|_{\infty}$ denotes the spectral norm, and $L(\widetilde{\Mcal})$ is the locality of $\Mcal$. Then in the worst case such that $L(\widetilde{\Mcal}) = n$, it needs an exponential number of samples to obtain the order parameter. This example implies that a classical learner (without a quantum computer) might not efficiently predict a quantum phase transition phenomenon described by non-local order parameters. However, for this non-local order parameter $\Mcal$, we can successfully learn it with Q-Learning Alg.~+ Q-Data. 
}

{(4) The last category is the quantum learning algorithm with classical data (Q-Learning Alg.~$+$  C-Data). Since the relationship $\rm{BPP}\subseteq\rm{BQP}$ holds\footnote{Bounded error Probabilistic Polynomial time (BPP): the class of decision problems solvable by an $\rm NP$ oracle such that: if the answer is `yes' then accept it with at least $2/3$ probability, if the answer is `no' then accept it with at most $1/3$ probability.}, ``Q-Learning Alg.~$+$  C-Data'' could also be strictly stronger than ``C-Learning Alg.~$+$  C-Data''. In this paper, we provide simulation results on this scenario (see Fig.~\ref{Fig:QPR_demo_2}): quantum algorithm learns knowledge from classical data and can predict a classical-hardness quantum phase with high accuracy. Combined with the classical hardness results in Theorem \ref{thm:LowerLearn}, we conjecture that even if the training data are classically simulated, quantum learning algorithm might still have quantum advantages. We leave the rigours proof of this scenario to a future work. }


 We summarize the complexity relationship of these four categories in Fig.~\ref{fig:Problem_class}, where ``Q-Learning Alg.'' refers to the use of a quantum computer, while ``C-Learning Alg.'' relies only on a classical computer; ``Q-Data'' represents learning data directly observed from physical quantum experiments, while ``C-Data'' are  efficiently producible by classical Turing machines.

\begin{figure}[thb]
\centering
  \includegraphics[width=0.50\textwidth]{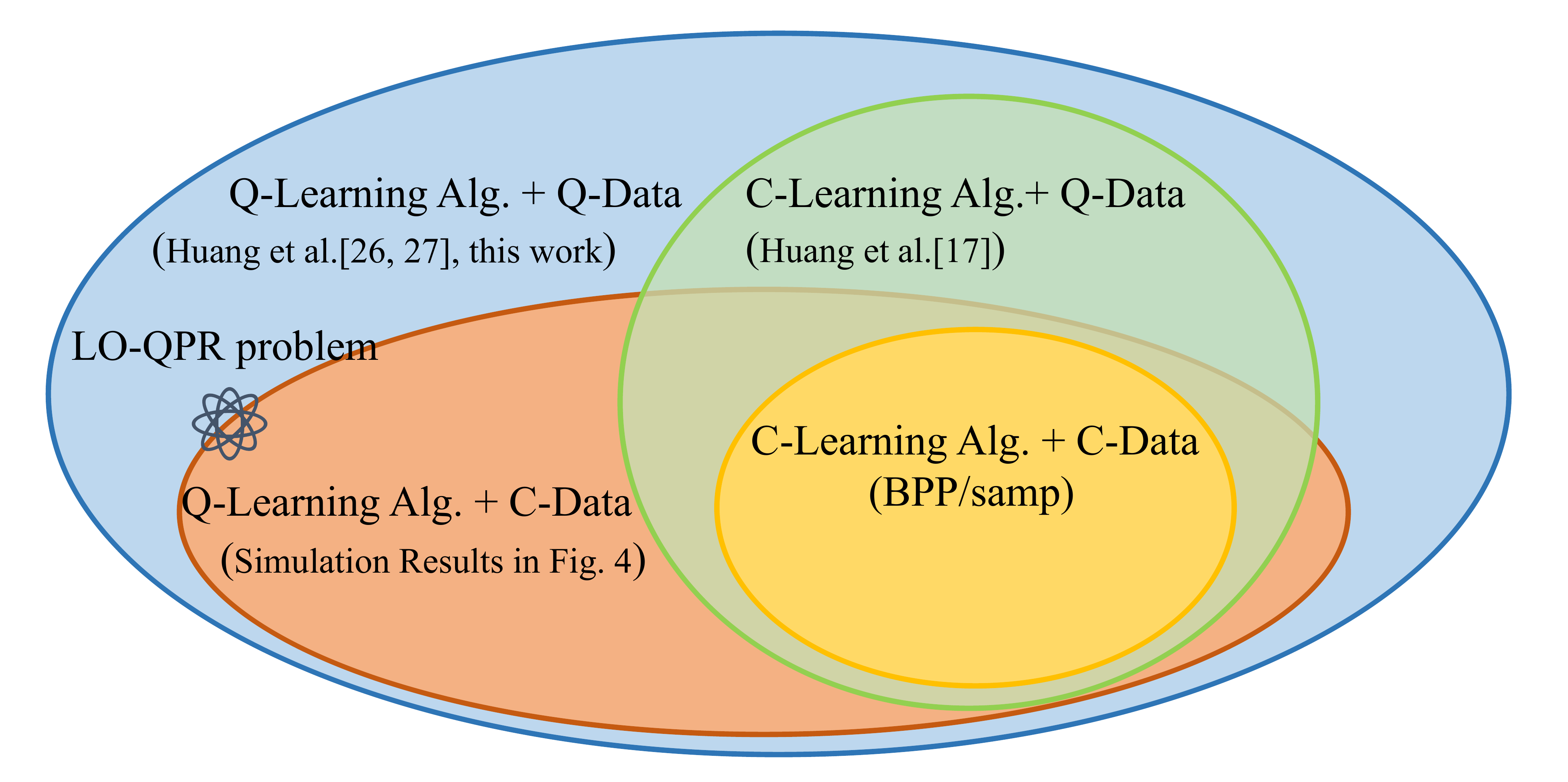}
  \caption{
  Visualization of the learning ability in terms of learning algorithms and data acquiring methods.
  }
  \label{fig:Problem_class}
\end{figure}

\section{Conclusion}
\label{sec:dis}
In this paper, we study the power of classical and quantum learning algorithms in solving LO-QPR problems. Specifically, we prove that under widely accepted assumptions, there exist some LO-QPR problems that cannot be efficiently solved by classical machine learning with classical data. We then prove that LO-QPR problems can be efficiently solved by leveraging the QKA algorithm with quantum data. Furthermore, we provided strong numerical evidence showing that the LO-QPR problems can be solved by the QKA algorithm with quantum data. In some cases, the QKA algorithm succeeded even with only classical data. Based on the above-mentioned theoretical and simulation results, we discussed the complexity relationships of four different machine learning classes in terms of the training data resources and the learning algorithm. We believe the proposed complexity classification helps us understand the power and limitation of classical and quantum learning algorithms.

This work leaves room for further research. For example, our numerical results witnessed the possibility of efficiently solving some LO-QPR problems by QML with classical data, then whether theoretical guarantees exist in showing that the LO-QPR problem belongs to the ``Q-Learning Alg.+C-Data'' class deserves to be further investigated. Finally, exploring the influences of noisy quantum channels on the effectiveness of quantum learning algorithms in solving LO-QPR would be important in practice.

\section*{Author Contributions}
Y. Wu and B. Wu contributed equally to this work. All authors contributed to the discussion of results and
writing of the manuscript.

\section*{Acknowledgement}
We would like to thank Y. Song and P. Rebentrost to provide helpful suggestions for the manuscript. This work is supported by the China Scholarship
Council (Grant No.~202006470011), the High-performance Computing Platform of Peking University, the National Natural Science Foundation of China (Grants No.~12175003, No.~12147133), and Zhejiang Lab's International Talent Fund for Young Professionals.


\appendix

\clearpage
\widetext

\section{Comparison to related works}
Refs.~\cite{huang2022quantum,huang2021information} focused on designing efficient measurement protocols to learn knowledge from an unknown density matrix, then predict its linear property by using accumulated measurement results, which is a learning analogue of shadow tomography problem as shown in Task~\ref{task:dml}.

Authors in~\cite{huang2022quantum,huang2021information} proved that the entangled Bell measurement protocol can efficiently solve Task~\ref{task:dml}, meanwhile, it is classically hard in the \emph{worst-case} scenario in estimating ${\rm Tr}\left(\rho Q\right)$ for some $Q\in\mathcal{P}$. Noting that the claimed quantum advantages might disappear without entangled measurement when $Q$ represents a global observable. In contrast, the proposed quantum advantage in this paper does not depend on the entanglement of multiple copies of quantum states, while the power of quantum-enhanced feature space plays an essential role.

In Task~\ref{def:QPRLearn}, the training data set $\mathcal{S}$ only contains external parameters $\bm a$ and corresponding phase values $b$ rather than the artificially designed density matrix. Furthermore, the order parameter observable $\mathcal{M}$ is unknown, and the learning protocol utilizes Quantum Kernel Alphatron (QKA) to approximate $\mathcal{M}$ by extracting abstract patterns from the data set $\mathcal{S}=\{(\bm a_{\bm i},b_i)\}_{i=1}^N$, then using the approximated $\mathcal{M}$ to construct a prediction model $h(\bm a)$. Then we rigorously proved that there exits a testing set $\mathcal{T}=\{(\bm x_i,y_i)\}_{i=1}^M$ such that $8/9$ of $y_i\in\mathcal{T}$ cannot be efficiently predicted by any classical ML algorithm under some standard complexity assumptions. In the table~\ref{Compare}, we summarize the mainly differences between previous works and this paper. 

\begin{table}[h]
    \centering
\begin{tabular}{c|cc}
    \hline\hline
   Key Properties & This paper & Refs.~\cite{huang2022quantum,huang2021information}\\
    \hline
    \multirow{1}{*}{Training Data} & External parameter $\bm a$ and phase value $b$ & Artificial density matrix $\rho$\\
    \hline
    {Prediction Task} & Given new $\bm a^*\in\mathcal{X}$, predict $b^*$ & Predict ${\rm Tr}\left(Q\rho\right)$\\
    \hline
    \multirow{1}{*}{Related Observable} & Unknown $\mathcal{M}$ & A provided $Q\in\mathcal{P}$\\
    \hline
    \multirow{1}{*}{Classical Hardness Result} & Average-case hardness on $\mathcal{T}$  & Worst-case hardness on $Q$\\
    \hline
    \multirow{1}{*}{Quantum Learning Complexity} & $\mathcal{O}(\sqrt{N\log(1/\delta)}/\epsilon^2)$ & $\mathcal{O}(1/\epsilon^4)$\\
    \hline
    \multirow{1}{*}{Source of Advantage} & Quantum-enhanced feature space & Bell measurement\\
    \hline\hline
\end{tabular}
    \caption{Comparison between this paper and Refs.~\cite{huang2022quantum,huang2021information}.}
    \label{Compare}
\end{table}
From the above comparison, it is clear that previous approaches focus on learning from a single density matrix, but our paper learns patterns from a series of external parameters and their corresponding quantum phases. Then the Bell measurement methods in~\cite{huang2022quantum,huang2021information}
might not be directly applied to the problem studied in this paper.

Recently, Huang et al.~\cite{huang2021provably} utilized an unsupervised learning method to learn samples from provided ground states, which can be summarized as Task~\ref{task:qpl}. Then we show that the required sample complexity $T$ by shadow-tomography based classical ML is expected to increase exponentially with respect to the system size, when the order parameter $\mathcal{M}$ performs on $\mathcal{O}(n)$ qubits. As discussed in~\cite{huang2021provablysup}, only a few LO-QPR problems determined by a global observable have a few-body observable approximation. Therefore, it is reasonable to consider a scenario where the shadow-tomography-formed training data is provided by a quantum computer, and the quantum phase transition can only be determined by a non-local order parameter $\mathcal{M}$. Given an unknown ground state $|\psi(\bm x)\rangle$, here we discuss whether a classical learner can predict its quantum phase. Similar to the quantum learning process, possible follow-up classical learning steps are: 

\begin{itemize}
    \item [(a)]  Approximate the order parameter $\mathcal{\widetilde{M}}=\sum_{i}O_i$ from the quantum training data where $O_i$ is the tensor product of Pauli operators;
     \item [(b)] Given the learned $\mathcal{\widetilde{M}}$, estimate the quantum phase value by using pre-obtained random computational basis measurements ($\{0,1\}^n$ bit-strings) from the quantum state $|\psi(\bm x)\rangle$.
\end{itemize}
 Under these two steps, can a classical learner solve this LO-QPR efficiently? We are pessimistic about it.  According to Theorem 2 in~\cite{huang2020predicting}, at least $\Omega(3^{L(\widetilde{\Mcal})}\|\mathcal{\widetilde{M}}\|_{\infty}/\epsilon^2)$ samples are needed to provide an approximation of $\langle\psi(\bm x)|\mathcal{\widetilde{M}}|\psi(\bm x)\rangle$ with additive error $\epsilon$, where $\|\cdot\|_{\infty}$ denotes the spectral norm, and $L(\widetilde{\Mcal})$ is the locality of $\Mcal$. Then in the worst case such that $L(\widetilde{\Mcal}) = n$, it is expected to require exponential number of samples to solve the LO-QPR problem. This example implies that a classical learner (without a quantum computer) might not efficiently predict a quantum phase transition phenomenon that only can be determined by non-local order parameters.

 While our work utilizes quantum machine learning to extract high-level abstractions from observed data and directly process quantum ground states information by a quantum computer. Here, the ground state $|\psi(\bm a)\rangle$ of $H(\bm a)$ embeds classical external parameter $\bm a$ onto a specific quantum-enhanced feature space, where inner products of such quantum feature states give rise to a quantum kernel, a metric to characterize distances in the feature space. As a result, predicting the ground state property can be transformed into quantum state overlap computation, and thus bypasses the required exponential sample complexity.

Then we emphasize that the definition of ``classical ML'' is different to that in ~\cite{huang2021provably}. In our paper, we discuss the complexity relationship of four categories in terms of the method that produces the training data and the learning algorithm. Here, ``Q-Learning Alg.'' refers to the use of a quantum computer, while ``C-Learning Alg.'' relies only on a classical computer; ``Q-Data'' represents learning data directly observed from physical quantum experiments, while ``C-Data'' are  efficiently producible by classical Turing machines. In this paper, a separation is proved between C-Learning Alg.~$+$ C-Data and Q-Learning Alg.~$+$  Q-Data.

Finally, we point that C-Learning Alg.~$+$ C-Data represents an nontrivial class. As shown in~\cite{huang2021power}, the power of classical learning algorithms will gradually enhance with the accumulating of training (advice) data, and the set of problems can be solved by classical learning algorithms is defined as the BPP/poly class. With the increase of the training data set, the learner will obtain more and more advicing data, and BPP/poly class will be convergence to the P/poly class. It has been proved that the relationship ${\rm BPP}\subseteq{\rm BPP/poly}\subseteq{\rm P/poly}$ holds. Hence, a machine learning task where some  data ({even generated classically}) is provided can be considerably different than commonly studied computational tasks. In our manuscript, we want to demonstrate quantum advantages by introducing quantum computational resources into learning algorithms, and our main contribution is to prove that there exists some LO-QPR problems cannot be efficiently solved by any `C-Learning Alg.~+ C-Data', however, the `Q-Learning Alg.~+ Q-Data' can efficiently solve this learning problem which thus illustrates quantum advantages.

\section{Construct Ground States with variational quantum circuit}
\label{app:archHaar}

\begin{definition}[Architecture]
An architecture $\mathcal{A}$ is a collection of directed acyclic graphs, one for each integer $n$. Each graph consists of $m<{\rm{poly}}(n)$ vertices, and the degree of each vertex $v$ satisfies ${\rm deg}_{in}(v)={\rm deg}_{out}(v)\in\{1,2\}$.
\end{definition}

\begin{definition}[Haar random circuit distribution]
Let $\mathcal{A}$ be an architecture over circuits and let the gates in the architecture be $\{G_i\}_{i=1,...,m}$. Define the distribution $\mathcal{H_A}$ over circuits in $\mathcal{A}$ by drawing each gate $G_i$ independently from the Haar measure.
\end{definition}


We first review the method on constructing a quantum random circuit. As presented in Ref~\cite{boixo2018characterizing}, the quantum random circuit can be constructed in an iterative method in the realistic physical experiment. The construction starts with an initial layer of Hadamard gates to rotate the $X$ basis, and the next $D$ layers alternately insert controlled-Z (${\rm CZ}$) configurations. And one-qubit gates are also randomly sampled from the set $\{{\rm X^{1/2}, Y^{1/2}, T}\}$ and are placed between two ${\rm CZ}$ configurations. Theoretically, the \emph{brickwork} architecture is also can be used to generate quantum random circuits. The brickwork is a kind of structure formed as follows: Perform a string of $2$-qubit gates $U_1\otimes U_2\otimes...\otimes U_{n/2}$ as the first layer, then perform a staggered string of gates, as illustrated in Fig. 2 (a) of the main file. 
\begin{definition}[Haar random quantum circuit]
Let $\mathcal{A}$ be an architecture over circuits and let the gates in the architecture be $\{U_i\}_{i=1,..,R}$. Define the distribution $\mathcal{H_A}$ over circuits in $\mathcal{A}$ by drawing each $2$-qubit gate $U_i$ independently from the Haar measure. Then construct the unitaries along the edges of $\mathcal{A}$, and each constructed circuit is defined as a Haar random quantum circuit.
\end{definition}

In the field of quantum computation, the variational quantum circuit is a popular method for approximating the ground state of  $H(\bm x)$. The key idea of using variational quantum circuit is that the parameterized quantum state $|\Psi(\bm \theta)\rangle$ is prepared and measured on a quantum computer, and the classical optimizer updates the parameters $\bm \theta$ according to the measurement information. The quantum state $|\Psi(\bm \theta)\rangle$ can be prepared by
\begin{align}
|\Psi(\bm \theta)\rangle=U(\bm\theta)|0^n\rangle=\prod\limits_{d=1}^DU_d(\bm\theta_d)|0^n\rangle,
\end{align}
where $U(\bm\theta)$ is composed of $D$ unitaries $U_d(\bm\theta_d)$. Noting that the variational quantum circuit $U(\bm\theta)$ has the same architecture to that of random circuit $U$, and two-qubit gates $U_i(\theta_i)$ are sampled from a subset of $SU(4)$. Then the relationship $\mathcal{U_A}(\bm\theta)\subseteq\mathcal{U_A}$  holds, where $\mathcal{U_A}(\bm\theta)$ and $\mathcal{U_A}$ denote the set of $U(\bm\theta)$ and random quantum circuit $U$ based on $\mathcal{A}$, respectively. 

Then we will show how to utilize one of the instances $U(\bm\theta)\in\mathcal{U_A}(\bm\theta)\subseteq\mathcal{U_A}$ to generate ground states of a family of Hamiltonians $H(\bm x)$~\cite{mcardle2019variational}. The ground state $|\psi(\bm x)\rangle$ of $H(\bm x)$ can be obtained from the imaginary time evolution, that is
\begin{align}
    |\psi(\bm x)\rangle={\rm lim}_{\beta\rightarrow \infty}|\eta(\beta, \bm x)\rangle={\rm lim}_{\beta\rightarrow \infty}A(\beta)e^{-\beta H(\bm x)}|0^n\rangle,
\end{align}
where $\beta$ indicates the inverse temperature, $A(\beta)=1/\sqrt{\langle\phi_0|e^{-2\beta H(\bm x)}|0^n\rangle}$. If we consider the imaginary time evolution of the Schr\"{o}dinger  equation on the variational circuit state space, the parameter dynamics is governed by
\begin{equation}
\sum_{i} \frac{\partial|\Psi(\bm\theta(\beta))\rangle}{\partial \theta_{i}} \dot{\theta_{i}}=-\left(H(\bm x)-E_{\beta}(\bm x)\right)|\Psi(\bm\theta(\beta))\rangle,
\end{equation}
where the term $E_{\beta}(\bm x)=\langle\eta(\beta,\bm x)|H(\bm x)|\eta(\beta, \bm x)\rangle$ and $\bm\theta(\beta)$ denotes the varational parameter in the circuit $U(\bm\theta)$. Applying the McLachlan’s variational principle to minimize the distance between the evolution of variational quantum state $\frac{\partial|\Psi(\bm\theta(\beta))\rangle}{\partial \beta}$ and $-(H(\bm x)-E_{\beta}(\bm x))|\Psi(\bm\theta(\beta))\rangle$, we have
\begin{equation}
\delta \left\|\left(\frac{\partial}{\partial\beta}+H(\bm x)-E_{\beta}(\bm x)\right)|\Psi(\bm\theta(\beta))\rangle\right \|=0,
\end{equation}
and the evolution of parameters $\bm\theta(\beta)$ is obtained from the function
\begin{equation}
\sum_{j} A_{i, j}(\beta) \dot{\theta}_{j}=-C_{i}(\beta),
\end{equation}
in which 
\begin{equation}
\begin{aligned} 
A_{i, j}(\beta) =\Re \left( \frac{\partial\langle\Psi(\bm\theta(\beta))|}{\partial \theta_{i}} \frac{\partial|\Psi(\bm\theta(\beta))\rangle}{\partial \theta_{j}} \right),\\~ C_{i}(\beta) =\Re \left( \frac{\partial\langle\Psi(\bm\theta(\beta))|}{\partial \theta_{i}} H(\bm x)|\Psi(\bm\theta(\beta))\rangle \right).
\end{aligned}
\end{equation}
With the matrix $A(\beta)$ and vector $C(\beta)$, the quantum imaginary time evolution over a small interval $\delta\beta$ can be approximated by solving a linear system $\dot{\bm\theta}(\beta)=A^{-1}(\beta)C(\beta)$, and the variational parameter can be updated by
\begin{align}
    \bm\theta(\beta+\delta\beta)= \bm\theta(\beta)+A^{-1}(\beta)C(\beta)\delta\beta.
\end{align}
Given a large enough $\beta$, and repeat this procedure $N=\beta/\delta(\beta)$ times, the varational quantum state $|\Psi(\bm\theta(\beta))=U(\bm\theta(\beta))|0^n\rangle$ will be an approximation of the ground state $|\psi(\bm x)\rangle$, that is $|\psi(\bm x)\rangle\approx U(\bm\theta(\beta))|0^n\rangle$.
The relationships between unitary sets $\mathcal{U_A}$,  $\mathcal{U_A}(\bm\theta)$ and ground states of $H(\bm x)$ are visualized in Fig.~\ref{fig:random_circuit_2}.

\begin{figure}[t]
\centering
  \includegraphics[width=0.8\textwidth]{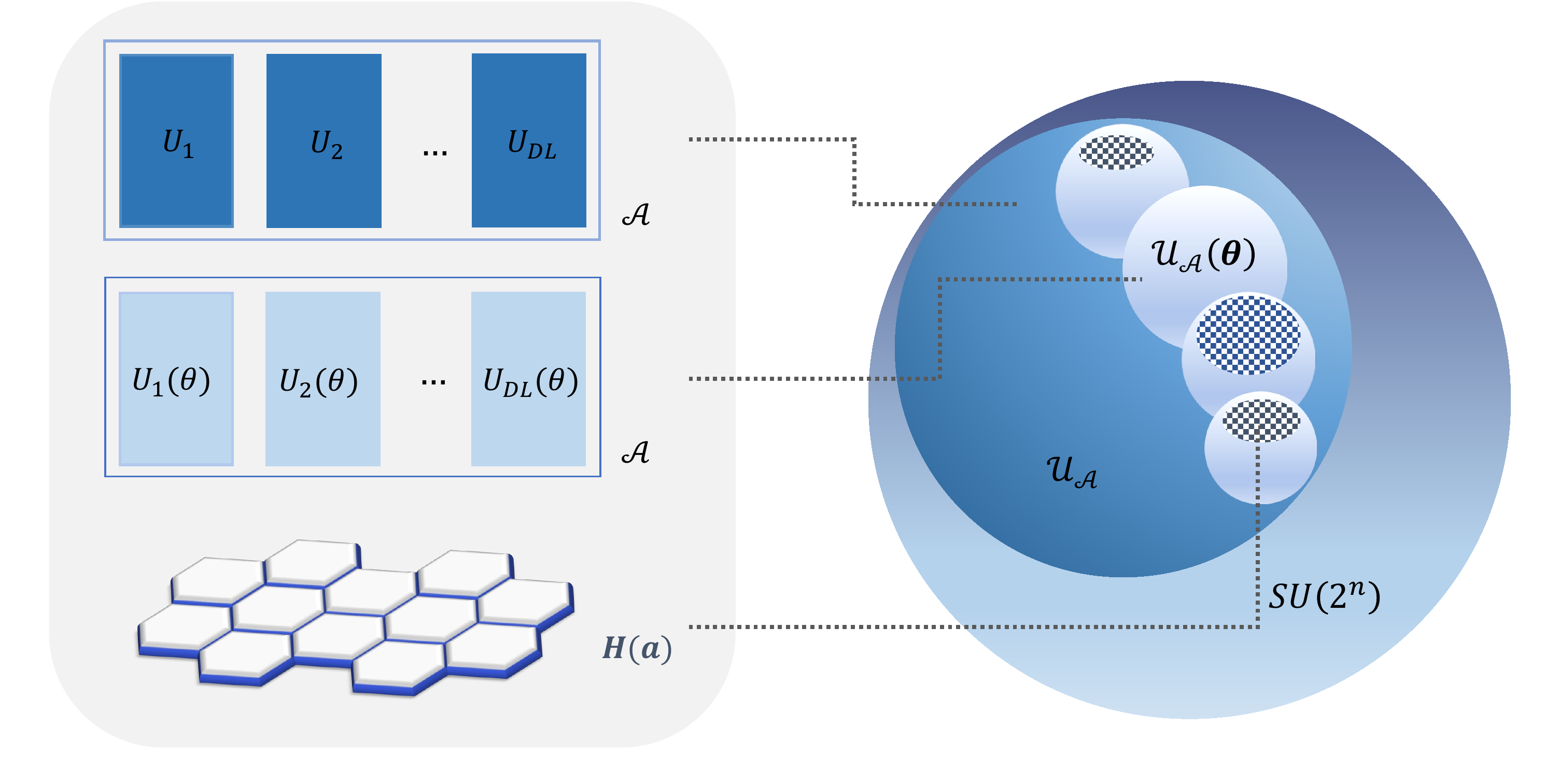}
  \caption{Visualization of the relationships between random quantum states set $\mathcal{U_A}$, variational quantum states set $\mathcal{U_A}(\bm\theta)$ and ground states from a family of Hamiltonian $H(\bm x)$.}
  \label{fig:random_circuit_2}
\end{figure}

\vspace{8pt}

\section{Proof of theorems}
\label{app:ProofThms}
Here, we provide technical details for the proof of theorems in the main text.

\subsection{Proof of Lemma~\ref{lem:LowerCompute}}
\label{proof_lemma_lower_compute}

We first review several lemmas and assumptions which are closely related to our proof.\\
\begin{lemma}[Stockmeyer Theorem~\cite{stockmeyer1985approximation}]
Given as input a function $f:\{0,1\}^n\mapsto\{0,1\}^m$ and any $y\in\{0,1\}^m$ there is a procedure that runs in randomized time ${\rm{poly}}(n,1/\epsilon)$ with access to an ${\rm}{\NP}$ oracle that outputs an $\alpha$ such that 
\begin{align}
    (1-\epsilon)p\le\alpha\le(1+\epsilon)p
\end{align}
for the value
    $$p=\frac{1}{2^n}\sum\limits_xf(x)$$
if the function $f$ can be computed efficiently given $x$.
\end{lemma}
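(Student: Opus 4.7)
The plan is to reduce the statement to approximate counting and then solve the latter via pairwise-independent hashing with an $\NP$ oracle. Writing $N = \sum_x f(x)$, we have $p = N/2^n$, so a multiplicative $(1\pm\epsilon)$-approximation to $p$ is equivalent to the same approximation to $N$, and I would focus on estimating $N$ in randomized time $\poly(n,1/\epsilon)$ using $\NP$-oracle queries.

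The key tool is a family of pairwise-independent hash functions $H_k = \{h:\{0,1\}^n \to \{0,1\}^k\}$ --- e.g., affine maps $h_{A,b}(x) = Ax+b$ over $\mathbb{F}_2$. For a uniformly random $h\in H_k$, define $T_h = |\{x : f(x)=1 \wedge h(x)=0^k\}|$. Linearity of expectation gives $\E[T_h] = N\cdot 2^{-k}$, and pairwise independence yields $\Var[T_h] \le \E[T_h]$. By Chebyshev's inequality, whenever $\E[T_h] \ge C/\epsilon^2$ for a suitable constant $C$, the value $T_h$ lies within a multiplicative $(1\pm\epsilon)$ factor of its expectation with at least constant probability.

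The role of the $\NP$ oracle is to resolve the decision problem ``is $T_h \ge t$?'', which lies in $\NP$ because the witness is an explicit list of $t$ distinct inputs $x_1,\ldots,x_t$ each satisfying $f(x_i)=1$ and $h(x_i)=0^k$; verification is polynomial-time since $f$ is polynomial-time computable. Binary searching over $t\in\{0,1,\ldots,O(1/\epsilon^2)\}$ with one $\NP$ query per step therefore determines $T_h$ exactly within the feasible range. Running this procedure for each $k\in\{0,1,\ldots,n\}$ yields candidate estimates $\widehat{N}_k = 2^k\cdot T_h$; by the concentration argument, the estimate for the $k$ in which $\E[T_h]$ lands in $\Theta(1/\epsilon^2)$ is $(1\pm\epsilon)$-accurate.

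The last step is amplification and selection. Running the above $O(\log(1/\delta))$ times independently with fresh hash choices and taking the median yields a success probability $1-\delta$, while staying inside $\poly(n,1/\epsilon)$ time. The main technical obstacle is that $N$ is unknown in advance, so the algorithm must decide which $k$ to trust: for $k$ too small the threshold is trivially passed and no useful gap emerges, while for $k$ too large $T_h$ is typically zero. The clean resolution I would follow is to pick the largest $k$ for which the $\NP$ oracle certifies $T_h \ge c/\epsilon^2$ (for a fixed constant $c$) and output $2^k T_h$; showing that this choice reliably sits in the concentration regime --- and that the Chebyshev tail from pairwise independence suffices to rule out spurious under- or over-estimates at adjacent values of $k$ --- is the crux of the argument.
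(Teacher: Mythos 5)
The paper does not actually prove this lemma: it is imported verbatim as Stockmeyer's approximate-counting theorem, with a citation to \cite{stockmeyer1985approximation}, and is then used as a black box in the proofs of Lemma~1 and Theorem~1. So there is no in-paper proof to compare against; what you have written is a reconstruction of the standard proof from the complexity-theory literature (Sipser--Stockmeyer hashing), and it is the right argument. Your key steps are all sound: pairwise independence gives $\Var[T_h]\le \E[T_h]=N2^{-k}$, Chebyshev gives a $(1\pm\epsilon)$ multiplicative guarantee once $\E[T_h]=\Omega(1/\epsilon^2)$, the predicate ``$T_h\ge t$'' is in $\mathrm{NP}$ because a witness is a list of $t\le O(1/\epsilon^2)=\poly(n)$ inputs verifiable in polynomial time when $f$ is, and binary search over $t$ costs $O(\log(1/\epsilon))$ oracle calls per scale $k$. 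Two small remarks. First, the lemma as stated in the paper involves $f:\{0,1\}^n\to\{0,1\}^m$ and a target $y$, so strictly you should count $|f^{-1}(y)|$ rather than $\sum_x f(x)$; this is handled by replacing $f$ with the Boolean indicator $g(x)=\mathds{1}[f(x)=y]$, which remains efficiently computable, so your specialization loses nothing. Second, the one step you explicitly defer --- showing that selecting the largest $k$ with $T_h\ge c/\epsilon^2$ lands in the concentration regime, with the adjacent-scale failure events controlled by a union bound over the $O(n)$ values of $k$ before median amplification --- is a genuine piece of the argument rather than bookkeeping, but it is standard and fillable; as written your proposal is an essentially complete and correct outline of the theorem the paper is citing.
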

\vspace{5px}
\begin{conjecture}[Ref.~\cite{bouland2019complexity}]
There exists an $n$-qubit quantum circuit $U$ such that the following task is \# P-hard: approximate $p_U(\bm j)=\abs{\langle\bm j | U|0^n\rangle}^2$ to additive error $\epsilon_{c}/2^n$ with probability $\frac{3}{4}+\frac{1}{\poly(n)}$, where $\bm j$ is a $\{0,1\}^n$ bit string and $\epsilon_c=1/{\rm{poly}}(n)$.
\end{conjecture}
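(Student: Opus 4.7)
The plan is to follow the worst-to-average-case reduction framework pioneered by Bouland, Fefferman, Nirkhe, and Vazirani, combined with Stockmeyer's theorem. The strategy has three pieces: (i) a worst-case \#P-hardness result for exact output probabilities of some circuit $U^\star$, (ii) a random self-reduction that lifts worst-case hardness to hardness on a random (Haar-like) instance with constant success probability, and (iii) hardness amplification to push the average-case success probability up to $3/4+1/\poly(n)$ while keeping the additive error at $\epsilon_c/2^n$. The existence of a classical \#P-hard circuit $U^\star$ then follows as the outcome of step~(ii) when seeded by the worst-case instance of step~(i).

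First, I would establish worst-case hardness. Writing out the Feynman path-integral expansion, $\langle \bm j | U | 0^n\rangle = \sum_{\bm z_1,\dots,\bm z_{m-1}} \prod_{k=1}^{m} \langle \bm z_k | U_k | \bm z_{k-1}\rangle$ with $\bm z_0 = 0^n$ and $\bm z_m = \bm j$, so that $p_U(\bm j)$ is a $\mathrm{GapP}$ quantity squared. Exactly computing such quantities, or even approximating them to additive error $2^{-\poly(n)}$, is known to be $\#\mathrm{P}$-hard for a suitable circuit $U^\star$ built out of classical reversible gates encoding the permanent or a \#SAT counting function. At this stage the hardness holds only for this adversarial $U^\star$, not for Haar-random $U$.

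Next, I would run the random self-reduction. Given a Haar-random $U$ drawn from $\mathcal{H}_{\mathcal{A}}$, interpolate gate-by-gate between $U$ and the worst-case circuit $U^\star$ along a Cayley-style path $U(\theta) = \prod_k \exp(i \theta H_k)\, U_k$, where the generators $H_k$ are chosen so that $U(0) = U$ and $U(\theta^\star) = U^\star$ for some small $\theta^\star$, and the marginal distribution of $U(\theta)$ at each $\theta$ remains $O(\theta^2 m)$-close to Haar in total variation. Then $p_{U(\theta)}(\bm j)$ is a polynomial in $\theta$ of degree at most $2m$, so querying an average-case oracle at $\poly(m)$ values of $\theta$ and running a Berlekamp-Welch style robust polynomial interpolation would recover $p_{U^\star}(\bm j)$ to the precision required by step~(i). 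Since each queried $U(\theta)$ is statistically indistinguishable from Haar, the oracle succeeds on an overwhelming fraction of those queries. A union bound combined with Paturi-type amplification then boosts the success probability to the claimed $3/4+1/\poly(n)$. Finally, Stockmeyer's theorem (invoked against this average-case oracle inside an $\NP$ machine) realises the resulting multiplicative estimate in $\mathrm{BPP}^{\NP}$, so a classical polynomial-time sampler plus the hypothesised approximation algorithm would place $\#\mathrm{P}\subseteq\mathrm{FBPP}^{\NP}$, giving the \#P-hardness claim.

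The main obstacle, and the reason this is stated as a conjecture rather than a theorem, is the mismatch between the additive error budget $\epsilon_c/2^n = 1/(2^n\,\poly(n))$ allowed by the average-case oracle and the much tighter precision needed after polynomial interpolation. The Berlekamp-Welch recovery blows the error up by a factor scaling with the Lebesgue constant of the interpolation nodes and with the ratio $\theta^\star/\theta_{\max}$, which for generic architectures is super-polynomial in $m$. Closing this gap requires either a fundamentally different interpolation scheme with polynomially-bounded condition number on Haar-random nodes, or an anticoncentration-based argument that replaces additive precision by multiplicative precision on the typical outcome $p_U(\bm j)\approx 2^{-n}$. I would focus effort precisely here, as every other ingredient already exists in the literature and slots into the argument above essentially unchanged.
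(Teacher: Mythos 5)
This statement is a \emph{conjecture} in the paper (imported verbatim from Bouland, Fefferman, Nirkhe, and Vazirani), and the paper offers no proof of it: it is adopted as a complexity-theoretic assumption on the same footing as the non-collapse of the polynomial hierarchy, and all of the paper's hardness results (Lemma~1, Theorem~1) are stated conditionally on it. So there is no ``paper's own proof'' to compare against, and your proposal cannot be judged as matching or diverging from one.

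As for the proposal itself: what you have written is an accurate account of the known partial progress toward the conjecture, not a proof of it, and you correctly diagnose why. The Feynman path-integral expansion and Stockmeyer argument do establish worst-case $\#$P-hardness, but only for additive error $2^{-\poly(n)}$; the Cayley-path interpolation and Berlekamp--Welch recovery do lift this to average-case instances, but the robust polynomial extrapolation step inflates the tolerable error from $\epsilon_c/2^n = 1/(2^n\poly(n))$ down to roughly $2^{-\poly(m)}$, because extrapolating a degree-$2m$ polynomial from noisy values near $\theta=0$ out to $\theta^\star$ has a super-polynomial condition number. Closing that gap is precisely the open problem, which is why the statement is a conjecture rather than a theorem in both the cited reference and this paper. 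Your final paragraph names this obstruction correctly, which means your own write-up concedes that the argument does not go through; you should present this material as evidence for the plausibility of the conjecture (as the paper does, e.g.\ in the discussion following the conjecture and in Appendix~E via comparison to the Porter--Thomas distribution), not as a proof.
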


Here, a candidate of the worst-case $U\in \Cbb^{2^n\times 2^n}$ is a size $m\leq \poly(n)$ unitary where each basic gate is a two-qubit gate following some fixed gate position architecture $\Acal$. We denote this distribution as $\Hcal_{\Acal}$. Note that the presented conjecture assets that it is $\# P$-hard to compute anything in an interval of radius $1/(2^n{\rm poly}(n))$ around the point $p_U(\bm j)$ on the choice of $U$, however, Bouland et al. proved that it is $\# P$-hard to compute a truncated property $p_{U^{'}} (\bm j)$ which is close to $p_U(\bm j)$ with an exponentially small error. Since this hardness interval is completely contained within the domain of conjectured hardness, their result is necessary for the conjecture. Therefore, if this conjecture holds, it implies computing  most of $p_{U^{'}} (\bm j)$ is $\# P$-hard, and the worst and average-case quantum circuit instances share the same property in the architecture $\mathcal{A}$. 

\vspace{5px}
\begin{proof}[Proof of Lemma~\ref{lem:LowerCompute}] Consider a family of Hamiltonian $\mathcal{H}=\{H(\bm x)\}_{\bm x}$ that is invariant under the Clifford gate, that is $CHC^{\dagger}\in\mathcal{H}$ for any $H\in\mathcal{H}$. 
For a ground state $|\psi(\bm x)\rangle$ of $H(\bm x)\in\mathcal{H}$ satisfies conjecture~\ref{con:lowerApprox}, we can project it to any computational basis $\ket{\bm j}$ with probability $p(\bm j) = \abs{\langle\bm j|\psi(\bm x)\rangle}^2$. The \emph{hiding argument} shows that if one can approximate the probability $p(\bm j)$, then one can approximate $p( 0^n) = \abs{\langle0^n|\psi(\bm x)\rangle}^2$. Therefore Conjecture~\ref{con:lowerApprox} suggests that approximating the $p(\bm j)$ to additive error $2^{-{\rm{poly}}(n)}$ is \# P-hard. 

Then we prove that for $\mathcal{M}\in \mathcal{P}=\{I,X,Y,Z\}^{\otimes n}\setminus I^{\otimes n}$, there exists a ground state $|\psi(\bm y)\rangle$ of $H(\bm y)\in\mathcal{H}$ such that computing $\langle\psi(\bm y)|\mathcal{M}|\psi(\bm y)\rangle$ is classically hard. Consider the observable set $\{\mathcal{M}(\bm s)|\mathcal{M}(\bm s)=Z_1^{s_1}\otimes \cdots \otimes Z_n^{s_n}\}$, where $Z_k$ denotes Pauli-$Z$ operator acts on the $k$-th qubit, and $\bm s=s_1s_2...s_n\in \{0,1\}^{n}$. Then we have
\begin{align}
    o_{\bm s}=\langle\psi(\bm x)|\mathcal{M}(\bm s)|\psi(\bm x)\rangle=\sum_{\bm j}p(\bm j)(-1)^{\bm j\cdot\bm s},
\end{align}
and $o_{\bm s}/2^n$ is the Fourier transformation of $p(\bm j)$. Based on the algebra symmetry between $p(\bm j)$ and $o_{\bm s}/2^n$, we have
\begin{align}
p(\bm j)=\sum_{\bm s}o_{\bm s}(-1)^{\bm j\cdot\bm s}/2^n.
\end{align}
If $o_{\bm s}$ can be efficiently approximated by a classical computer given $\bm s$, there exists a $\rm{BPP^{NP^{BPP}}}$ algorithm that can approximate $p(\bm j)$ with the multiplicative error $1/{\rm{poly}}(n)$ based on a theorem by Stockmeyer~\cite{stockmeyer1985approximation}. Considering $\rm{BPP}\subseteq P/poly$ and approximating $p(\bm j)$ is $\# P$-hard, these yield $\rm{P^{\# P}}\subseteq \rm{BPP^{NP^{BPP}}}\subseteq\rm{BPP^{NP}}/poly$. Since $\rm{NP^{NP}}\subseteq P^{\# P}$, one has $\rm{NP^{NP}}\subseteq \rm{BPP^{NP}}/poly$,
which implies PH collapses to the second level~\cite{arora2009computational}.

Therefore, there does not exist a classical algorithm that can efficiently calculate $o_{\bm s^*}$ for some $\bm s^*\in\{0,1\}^n$ based on the assumption that PH does not collapse and Conjecture~\ref{con:lowerApprox} holds. Without loss of generality, let $\mathcal{M}(s^*)=Z_1^{s_1^*}\otimes Z_2^{s_2^*}\otimes \cdots \otimes Z_n^{s_n^*}$, and the interested physical order parameter observable $\mathcal{M}_t=P_1\otimes\cdots\otimes P_n$ (for example, ferromagnetic parameter $X$ or SPT parameter $Z_{i}X_{i+1}X_{i+3}...X_{j-3}X_{j-1}Z_{j}$). Since a Clifford gate $C$ maps a Pauli operator to another Pauli operator, then the target order parameter observable $\mathcal{M}_t$ can be expressed as $\mathcal{M}_t=\left(C\right)^{\dagger}Z_1^{s_1^*}\otimes Z_2^{s_2^*}\otimes \cdots \otimes Z_n^{s_n^*}\left(C\right)$, where $C$ represents a Clifford gate. Therefore, for any target $n$-qubit Pauli observable $\mathcal{M}_t$, the expectation value 
\begin{align}
\langle\psi(\bm x)|\mathcal{M}(\bm s^*)|\psi(\bm x)\rangle=\langle\psi(\bm x)|C\left(C^{\dagger}Z_1^{s_1^*}\otimes \cdots \otimes Z_n^{s_n^*}C\right)C^{\dagger}|\psi(\bm x)\rangle=\langle\psi(\bm y)|\mathcal{M}_t|\psi(\bm y)\rangle.
\end{align}
The last equality is valid because $CH(\bm x)C^{\dagger}$ belongs to the Hamiltonian family $\mathcal{H}$. If $|\psi(\bm x)\rangle$ represents the ground state of $H(\bm x)$, then $C|\psi(\bm x)\rangle$ will be the ground state of $CH(\bm x)C^{\dagger}$.
Therefore, for any interested order parameter observable $\mathcal{M}_t\in\mathcal{P}$, there exists a ground state $|\psi(\bm y)\rangle\in\mathcal{H}$ such that their corresponding quantum phase is classically hard.
\end{proof}

\subsection{Complexity argument for the power of data}
\label{App:BPPSamp}
Here, we review the power of classical ML algorithms that can learn from data by means of a complexity class, which is defined as ${\rm{BPP/poly}}$ in Ref.~\cite{huang2021power}. A language $L$ of bit strings is in ${\rm{BPP/poly}}$ if and only if the following holds. Suppose $M$ and $D$ are two probabilistic Turing machines, where $D$ generates samples $\bm x$ with $|\bm x|=n$ in polynomial time for any size $n$ and $D$ defines a sequence of input distributions $\{D_n\}$. $M$ takes an input $\bm x$ of size $n$ along with a set $\{(\bm x_i,y_i)\}_{i=1}^{{\rm{poly}}(n)}$, where $\bm x_i$ is sampled from $D_n$ using $D$ and $y_i$ indicates the corresponding label. If $\bm x_i\in L$, one has $y_i=1$, else $y_i=0$. Specifically, one requires:\\
(1) The probabilistic Turing machine $M$ processes all inputs $\bm x$ in polynomial time.\\
(2) For all $\bm x\in L$, $M$ outputs $1$ with probability greater than $2/3$.\\
(3) For all $\bm x\notin L$, $M$ outputs $0$ with probability less than $1/3$. 

From the above definition, we know that BPP is contained in this complexity class. Now we provide details on the separation between classical ML algorithms with classical data and BPP. Consider an undecidable language $L_{h}=\{1^n|n\in A\}$, where $A$ is a subset of the natural numbers set, and consider a classically easy language $L_e\in {\rm BPP}$. Assuming that for any input size $n$, there exists an input $a_n\in L_e$ and an input $b_n\notin L_e$. Then a new language can be defined:
\begin{align}
    L=\bigcup\limits_{n=1}^{\infty}\{x|\forall x\in L_e, 1^n\in L_h, |x|=n\}\cup \{x|\forall x\notin L_e, 1^n\notin L_h, |x|=n\}.
\end{align}
For each size $n$, if $1^n\in L_h$, the language $L$ would include all $x\in L_e$ with $|x|=n$, otherwise, the language $L$ would include all $x\notin L_e$ with $|x|=n$. That is to say, if one can decide whether a problem $x\in L$ for an input $x$ using a classical algorithm, we can output whether $1^n\in L_h$ by checking whether $x\in L_e$. This is impossible since the language $L_h$ is undicidable. Hence the language $L$ is not in BPP class. On the other hand, if the training data $\{x_i,y_i\}$ are provided, where the label $y_i$ represents whether $x_i$ belongs to $L$, and we thus can decide whether $1^n$ belongs to $L_h$.

Based on the above discussion, we know that the power of classical learning algorithms will gradually enhance with the accumulating of training (advice) data, and the set of problems can be solved by classical learning algorithms is defined as the BPP/poly class. With the increase of the training data set, the learner will obtain more and more advicing data, and BPP/poly class will be convergence to the P/poly class. Hence, a machine learning task where some data is provided can be considerably different than commonly studied computational tasks. In our manuscript, we want to demonstrate quantum advantages by introducing quantum computational resources into learning algorithms, and our main contribution is to rigorously prove that any `C-Learning Algorithm + C-Data' cannot solve the quantum phase learning problem. However, the `Q-Learning Algorithm + Q-Data' can efficiently solve this learning problem which thus illustrates quantum advantages.

\subsection{Proof of Theorem~\ref{thm:LowerLearn}}
\label{subapp:LowerLearn}
\noindent The following lemma gives an average-case hardness for the quantum phase computation problem.

\begin{lemma}
With the assumption that Conjecture~1 holds, and the PH in the computational complexity theory does not collapse, it is classically hard to approximate 8/9 of the quantum phase computation problem given a certain $n$-qubit Hamiltonian $H(\bm a)$ with additive error $\epsilon = 1/(\poly(n))$, where its ground state 
$\ket{\psi(\bm a) }= U(\vec{\bm \theta}(\bm a))|0^n\rangle$ and 
$U(\vec{\bm\theta}(\bm a))\in \Ucal_\Acal(\bm \theta)$.
\label{lem:Lower_average}
\end{lemma}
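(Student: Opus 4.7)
The plan is to establish average-case hardness by a worst-to-average-case reduction of the kind used by Bouland et al.~\cite{bouland2019complexity} for random circuit sampling, but adapted to Pauli expectation values rather than output probabilities. Lemma~\ref{lem:LowerCompute} has already supplied a worst-case hard instance: there is a Hamiltonian $H(\bm a^*)$ whose ground state admits a variational representation $\ket{\psi(\bm a^*)}=U(\bm\theta^*)\ket{0^n}$ with $U(\bm\theta^*)\in\Ucal_\Acal(\bm\theta)$, together with an observable $\Mcal\in\Pcal$, such that computing $f(\bm\theta^*):=\bra{0^n}U^{\dagger}(\bm\theta^*)\Mcal U(\bm\theta^*)\ket{0^n}$ to additive error $1/\poly(n)$ is $\#P$-hard under Conjecture~\ref{con:lowerApprox} and the non-collapse of PH. The goal is to convert this worst-case statement into a statement about the fraction of instances a classical learner can handle.

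First I would fix the brickwork architecture $\Acal$ and write the expectation $f(\bm\theta)=\bra{0^n}U^{\dagger}(\bm\theta)\Mcal U(\bm\theta)\ket{0^n}$ via a Feynman path integral. Expanding over computational-basis insertions between the $O(\poly(n))$ two-qubit gates, and using the fact that each gate entry is a polynomial of bounded degree in its local parameter, $f(\bm\theta)$ becomes a multivariate polynomial in $\bm\theta$ of degree $D=\poly(n)$. Next I would construct a one-parameter perturbation path $\bm\theta(t)=\bm\theta_0+t(\bm\theta^*-\bm\theta_0)$ (composed with the standard gate parameterization) from a trivial setting $\bm\theta_0$ to the worst-case $\bm\theta^*$; the restriction $p(t):=f(\bm\theta(t))$ is then a univariate polynomial of degree $D$ whose distribution over $t\sim\mathrm{Unif}([0,\eta])$ for small $\eta$ is statistically close to the distribution of parameters produced by a random instance of the quantum phase computation task.

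Suppose for contradiction that a classical algorithm $\Acal$ approximates $8/9$ of instances to additive error $1/\poly(n)$. Sampling $m=\Theta(D)$ evaluation points $t_1,\dots,t_m$ and calling $\Acal$ on each, a fraction $\geq 8/9$ of the returned values are within the required additive error of $p(t_i)$. Applying a noise-tolerant Berlekamp--Welch procedure for polynomial reconstruction, the $8/9$ correctness margin exceeds the decoding threshold (one needs the error fraction below roughly $(m-D)/2m$, which is arranged by taking $m$ sufficiently large in $D$), so $p(t)$ can be reconstructed to additive error $1/\poly(n)$ on all of $[0,1]$. Evaluating the reconstructed polynomial at $t=1$ recovers an approximation of $f(\bm\theta^*)$, contradicting Lemma~\ref{lem:LowerCompute} and hence forcing PH to collapse.

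The main obstacle is the robust-interpolation step: because $\Acal$ returns only approximate values, I need Berlekamp--Welch with inverse-polynomial noise, and the error at $t=1$ must stay within $1/\poly(n)$ after extrapolation off the sampling interval. This requires bounding the Lebesgue constant of the interpolation nodes and controlling the blow-up of errors away from $[0,\eta]$, which is precisely the subtlety handled in~\cite{bouland2019complexity}; adapting their analysis to the Pauli-expectation polynomial $f(\bm\theta)$ (rather than a single output amplitude) and verifying that the relevant polynomial degree is still $\poly(n)$ when $\Mcal\in\Pcal$ acts as a diagonal Clifford conjugate is the core technical step of the proof.
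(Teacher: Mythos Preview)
Your overall strategy is sound and would work, but it differs from the paper's route in one substantive technical choice. Both arguments start from Lemma~\ref{lem:LowerCompute}'s worst-case instance and both expand $f(\bm\theta)=\bra{0^n}U^\dagger(\bm\theta)\Mcal U(\bm\theta)\ket{0^n}$ via Feynman paths into a polynomial in the gate parameters (after Taylor-truncating each $e^{-i\langle\bm\theta_r,\bm P_r\rangle}$; note that the gate entries are not literally polynomial in $\bm\theta$, so you do need this truncation step just as the paper does). The divergence is in how the worst-to-average reduction is executed.

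You follow Bouland et al.\ directly: restrict to a one-parameter path $\bm\theta(t)$, obtain a univariate polynomial $p(t)$ of degree $\poly(n)$, and invoke noise-tolerant Berlekamp--Welch to reconstruct $p$ from an $8/9$-correct oracle, then extrapolate to $t=1$. The paper instead performs a second truncation showing that $f(\vec{\bm\theta},\Mcal)$ is well-approximated by a \emph{sparse} multivariate polynomial with only $C_R^q K^q=\poly(n)$ monomials, and then argues that $M=100\,C_R^q K^q$ evaluations at parameter points $\{\vec{\bm\theta}(\bm a_i)\}$---of which $8/9$ are correct---suffice to \emph{fit} this sparse polynomial directly and hence evaluate it at the worst-case $\vec{\bm\theta}$. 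What your approach buys is a clean import of the Bouland et al.\ decoding machinery; what it costs is exactly the obstacle you flagged, namely controlling the extrapolation blow-up when moving from the sampling interval to $t=1$. The paper's sparse-multivariate route sidesteps the extrapolation issue entirely (a sparse polynomial is determined by a linear system in its coefficients, with no Lebesgue-constant analysis needed), at the price of a somewhat ad hoc double-truncation argument and a less explicit treatment of the robust fitting step. Either route closes the lemma; yours is arguably the more standard execution of the Bouland et al.\ template.
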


\begin{proof}
{

Suppose we take a worst-case ground state $|\Psi\rangle=U(\vec{\bm\theta})|0^n\rangle$ of $H(\bm a)$ generated by a variational quantum circuit $U(\vec{\bm\theta})\in\mathcal{U_A}(\bm\theta)$, such that computing $p(\bm j)=|\langle \bm j|\Psi\rangle|^2$ to within additive error $2^{-{\rm{poly}}(n)}$ is $\# P$-hard (based on conjecture~1 in the main file). Since the two-qubit gate structures of $U(\vec{\bm\theta})=U_{DL}(\bm\theta_{DL})\cdots U_{1}(\bm\theta_{1})$ is provided in $\mathcal{A}$, where $U_r(\bm\theta_r)$ denotes the $r$-th 
two-qubit gate for $r\in[DL]$, $\vec{\bm\theta}=(\bm\theta_1,\dots,\bm\theta_{DL})$ and $\bm\theta_r\in\mathbb{R}^{15}$. Denote $R=DL$, each two-qubit gate
\begin{align}
    U(\bm\theta_r)=\exp\left(-i\sum\limits_{j_1,j_2=0}^4\theta_r(j_1,j_2)\left(P_{j_1}\otimes P_{j_2}\right)\right)=\exp\left(-i\langle\bm\theta_{r},\bm P_r\rangle\right),
\end{align}
where $P_{j}\in\{I,X,Y,Z\}$ and each $\theta_r(j_1,j_2)\in[0,2\pi]$. 
Using Taylor series, one obtains
\begin{align}
U(\vec{\bm\theta})=\prod\limits_{r=1}^{R}\sum\limits_{k=0}^{\infty}\frac{(-i\langle\bm\theta_{r},\bm P_r\rangle)^k}{k!}.
\end{align}
Denote 
\begin{align}
    U(\bm\theta_r)_{\rm tr}=\sum\limits_{k=0}^{K}\frac{(-i\langle\bm\theta_{r},\bm P_r\rangle)^k}{k!},
\end{align}
therefore $U(\bm\theta_r)-U(\bm\theta_r)_{\rm tr}=\sum_{k=K+1}^{\infty}\frac{(-i\langle\bm\theta_{r},\bm P_r\rangle)^k}{k!}$. For arbitrary bit-string $x,y$, we can apply standard bound on Taylor series to bound $\|\langle x|U(\bm\theta_r)-U(\bm\theta_r)_{\rm tr}|y\rangle\|_1\leq\kappa/K!$ for some constant $\kappa$.
Therefore for an arbitrary observable $\mathcal{M}$
\begin{eqnarray}
\begin{split}
    &\langle0^n|U^{\dagger}(\vec{\bm\theta})\mathcal{M} U(\vec{\bm\theta})|0^n\rangle=\sum\limits_{i,j=0}^{2^n-1}\mathcal{M}_{ij}\langle0^n|U^{\dagger}(\vec{\bm\theta})|i\rangle\langle j|U(\vec{\bm\theta})|0^n\rangle\\
    &=\sum\limits_{i,j=0}^{2^n-1}\mathcal{M}_{ij}\left(\sum\limits_{\substack{y_1,y_2,...y_{R-1}\in\{0,1\}^n\\y_{R}=i}}\prod\limits_{r=1}^{R}\langle0^n|U(\bm\theta_r)|y_r\rangle\right)\left(\sum\limits_{\substack{y_1,y_2,...y_{R-1}\in\{0,1\}^n\\y_{R}=j}}\prod\limits_{r=1}^{R}\langle y_r|U(\bm\theta_r)|0^n\rangle\right)\\
    &=\sum\limits_{i,j=0}^{2^n-1}\mathcal{M}_{ij}\left(\sum\limits_{\substack{y_1,y_2,...y_{R-1}\in\{0,1\}^n\\y_{R}=i}}\prod\limits_{r=1}^{R}\langle0^n|\sum\limits_{k=0}^{\infty}\frac{(-i\langle\bm\theta_{r},\bm P_r\rangle)^k}{k!}|y_r\rangle\right)\left(\sum\limits_{\substack{y_1,y_2,...y_{R-1}\in\{0,1\}^n\\y_{R}=j}}|\prod\limits_{r=1}^{R}\langle y_r|\sum\limits_{k=0}^{\infty}\frac{(-i\langle\bm\theta_{r},\bm P_r\rangle)^k}{k!}|0^n\rangle\right),
\end{split}
\end{eqnarray}
where the $y_1,.y_2,...$ represent Feymann integration path. Since $\langle y_r|U(\bm\theta_r)|0^n\rangle$ can be approximated by a polynomial of degree $K$ based on Taylor truncated method, the above expression can be rewritten by
\begin{eqnarray}
    \sum\limits_{r,s=0}^{2^n-1}\mathcal{M}_{rs}\left(f_r(\bm\theta_1,...\bm\theta_R)+\mathcal{O}\left(\frac{2^{Rn}}{(K!)^{R}}\right)\right)\left(f_s(\bm\theta_1,...\bm\theta_R)+\mathcal{O}\left(\frac{2^{Rn}}{(K!)^{R}}\right)\right)
    \label{Eq:mean}
\end{eqnarray}
where $f_r$ represents a polynomial of degree $RK$. Furthermore, $f_r(\bm\theta_1,...,\bm\theta_R)$ can be approximated by a low-degree function with at most $C_R^q(K)^q$ terms, where $q=\mathcal{O}(1)$ and $C_R^q$ represents the combination number. Let $f_r(\vec{\bm\theta})=\sum_{\vec{\bm i}}\bm\alpha_{\vec{\bm i}}\bm\theta_1^{i_1}\cdots \bm\theta_R^{i_R}$, where $\vec{\bm i}=(i_1,...,i_R)$ and each $i_l\in[K]$, $l\in[R]$. For every term $\bm\alpha_{\vec{\bm i}}\bm\theta_1^{i_1}\cdots \bm\theta_v^{i_v}$ with $i_1=\cdots=i_v=K$ and $v>q$, its corresponding parameter $\abs{\bm\alpha_{\vec{\bm i}}}\leq 1/(K!)^{q}$ (based on Taylor series). Therefore, the relationship
\begin{align}
\Delta f_r(\vec{\bm\theta})=\abs{f_r-\tilde{f}_r}=2^{Rn}\abs{\sum_{\vec{\bm i}}\bm\alpha_{\vec{\bm i}}\bm\theta_1^{i_1}\cdots \bm\theta_R^{i_R}-\sum_{\substack{j_1,...,j_q\leq K-1}}\bm\alpha_{\vec{\bm j}}\bm\theta_{j_1}^{j_1}\cdots \bm\theta_{j_q}^{j_q}}\leq 2^{Rn}\left(\frac{K^R}{(K!)^q}\right)
\end{align}
holds. Then let $q=\mathcal{O}(1)$, $R=\mathcal{O}(n^2)$, $K={\rm poly}(n)$ and $K\gg R$,  $\tilde{f}_r$ can provide an estimation to $f_r$ within $2^{-{\rm poly}(n)}$ additive error, and $\tilde{f}_r$ only has $C_R^q(K)^q={\rm poly}(n)$ terms. Then Eq.~\ref{Eq:mean} can be represented by a muti-variable polynomial function $f(\vec{\bm\theta},\mathcal{M})$ with $R$ variables and at most ${\rm poly}(n)$ terms, and the relationship
\begin{align}
    \|\langle0^n|U^{\dagger}(\vec{\bm\theta})\mathcal{M} U(\vec{\bm\theta})|0^n\rangle-f(\vec{\bm\theta},\mathcal{M})\|\leq 2^{-{\rm poly}(n)}
\label{Eq:fit}
\end{align}
holds.


Suppose the variational quantum circuit $U(\vec{\bm\theta})$ is powerful enough, such that it can calculate some ground states $\mathcal{G}=\{|\psi(\bm a_i)\rangle=U(\vec{\bm\theta}(\bm a_i))|0^n\rangle\}_{i=1}^M$ of the Hamiltonian $H(\bm a)$. The `worst-to-average-case' reduction can be achieved by \emph{proof of contradiction}:


Since the variational quantum circuit can generate a ground state set $\mathcal{G}$ for a family of Hamiltonian $H(\bm a)$, suppose there exists a classical algorithm $\mathbf{\mathcal{O}}$, which can efficiently approximate $8/9$ of $\{\langle0^n|U^{\dagger}(\vec{\bm\theta}(\bm a_i))\mathcal{M} U(\vec{\bm\theta}(\bm a_i))|0^n\rangle\}_{i=1}^M$, where $M=100C_R^q(K)^q$.
It implies that for at least 2/3 choices of $\{U(\vec{\bm\theta}(\bm a_i))\}$, $\mathbf{\mathcal{O}}$ correctly approximate $\{b_i=\langle0^n|U^{\dagger}(\vec{\bm\theta}(\bm a_i))\mathcal{M} U(\vec{\bm\theta}(\bm a_i))|0^n\rangle\}$. According to the assumption in $\mathcal{G}$, the variational quantum circuit provides a map between $\bm a_i\mapsto\vec{\bm\theta}(\bm a_i)$.
From Eq.~\ref{Eq:fit}, one can fit a polynomial function in $\bm\theta$ that recovers the value of $\langle0^n|U^{\dagger}(\bm\theta)\mathcal{M} U(\bm\theta)|0^n\rangle$ by using $\{\vec{\bm\theta}(\bm a_i)\}$. However, according to the Lemma~\ref{lem:LowerCompute}, successful approximating $\langle0^n|U^{\dagger}(\bm\theta)\mathcal{M} U(\bm\theta)|0^n\rangle$ (worst-case scenario) by a $\rm BPP$ algorithm will yield PH collapse. Then, it is hard to approximate 8/9 of the $\{b_i\}$.  Then the above $(\vec{\bm\theta}(\bm a_i),b_i)$ can be used in constructing a testing set
\begin{align}
\mathcal{T}=\{(\bm x_i,y_i)||\psi(\bm x_i)\rangle\in\mathcal{G},y_i=\langle\psi(\bm x_i)|\mathcal{M}|\psi(\bm x_i)\rangle\},
\end{align}
where $|\psi(\bm x_i)\rangle=U(\vec{\bm\theta}(\bm a_i))|0^n\rangle$ and $y_i=b_i$.
}

\end{proof}

\noindent \emph{Note.} One might think that the above procedure could inspire a classical learning algorithm in predicting a hard quantum phase by using quantum data, however this cannot be directly used in solving LO-QPR problems. The reason is that the above procedure fits a polynomial function in $\bm\theta$ rather than the external parameter $\bm a$ in the training data set, then the proof is not sufficient in proving the efficiency of C-Learning Alg.+Q-Data on LO-QPRs determined by a global linear observable.

\begin{proof}[Proof of Theorem~\ref{thm:LowerLearn}]
Our proof depends on the quantum circuit representation of concerned ground states. We firstly provide the method on constructing a
test set $\mathcal{T}=\{\bm x_{\bm i},y_i\}_{i=1}^M$. According to the construction in Sec.~\ref{app:archHaar}, the variational quantum circuit state $|\Psi(\vec{\bm\theta}(\bm x))\rangle$ can approximate a ground state $|\psi(\bm x)\rangle$ of the Hamiltonian $H(\bm x)$. Starting from $|\Psi(\vec{\bm\theta}(\bm x))\rangle$, we want to approximate the ground state $|\psi(\bm x_i+\delta\bm x)\rangle$ of $H(\bm x_i+\delta\bm x)$ at external parameter $\bm x_i+\delta\bm x$ by $|\Psi(\vec{\bm\theta}(\bm x)+\delta\bm{\theta})\rangle$. The value of $\delta\bm{\theta}=(\delta\bm{\theta}_1,\delta\bm{\theta}_2,...,\delta\bm{\theta}_{DL})$ can be determined by minimizing the distance
\begin{align}
\mathcal{L}(\delta\bm{\theta})=\|\mathbf{d}|\psi(\bm x_i+\delta\bm x)\rangle-\mathbf{d}|\Psi(\vec{\bm\theta}(\bm x)+\delta\bm{\theta})\rangle\|,
\label{Eq:difference}
\end{align}
where
\begin{align}
\mathbf{d}|\psi(\bm x_i+\delta\bm x)\rangle=|\psi(\bm x_i+\delta\bm x)\rangle-|\psi(\bm x_i)\rangle
, \end{align}
and
\begin{align}
\mathbf{d}|\Psi(\vec{\bm\theta}(\bm x)+\delta\bm{\theta})\rangle=\sum\limits_{d=1}^{DL}\frac{\partial|\Psi(\vec{\bm\theta}(\bm x))\rangle}{\partial\bm{\theta}_d}\delta\bm{\theta}_d,
\end{align}
and the notation $\|\cdot\|$ represents the fidelity norm. Then the function $\mathcal{L}^2(\delta\bm{\theta})$ can be further computed as
\begin{eqnarray}
\begin{split}
&\mathbf{d}\langle\psi(\bm x_i+\delta\bm x)|\mathbf{d}|\psi(\bm x_i+\delta\bm x)\rangle-\sum\limits_{d=1}^{DL}\langle\psi(\bm x_i+\delta\bm x)|\frac{\partial|\Psi(\vec{\bm\theta}(\bm x))\rangle}{\partial\bm{\theta}_d}\delta\bm{\theta}_d\\
&-\sum\limits_{d=1}^{DL}\frac{\partial\langle\Psi(\vec{\bm\theta}(\bm x))|}{\partial\bm{\theta}_d}|\psi(\bm x_i+\delta\bm x)\rangle\delta\bm{\theta}_d+\sum\limits_{m,s}\frac{\partial\langle\Psi(\vec{\bm\theta}(\bm x))|}{\partial\bm{\theta}_m}\frac{\partial|\Psi(\vec{\bm\theta}(\bm x))\rangle}{\partial\bm{\theta}_s}\delta\bm{\theta}_m\delta\bm{\theta}_s.
\end{split}
\end{eqnarray}	
If we focus on the $m$-th variable $\delta\bm{\theta}_m$, the minimum of $\mathcal{L}^2(\delta\bm{\theta})$ obtains at
\begin{align}
\sum\limits_{s=1}^{DL}B_{s,m}\delta\bm{\theta}_m=E_m,
\end{align}
in which the parameter
\begin{align}
B_{s,m}=\Re\left(\frac{\partial\langle\Psi(\vec{\bm\theta}(\bm x))|}{\partial\bm{\theta}_s}\frac{\partial|\Psi(\vec{\bm\theta}(\bm x))\rangle}{\partial\bm{\theta}_m}\right),
\end{align}
and
\begin{align}
E_m=\Re\left(\frac{\partial\langle\Psi(\vec{\bm\theta}(\bm x))|}{\partial\bm{\theta}_m}\mathbf{d}|\psi(\bm x_i+\delta\bm x)\rangle\right).
\end{align}
Once each elements are estimated, the variation of parameters $\delta\bm{\theta}$ can be efficiently computed by solving the linear system
\begin{align}
B(\bm{\theta})\delta\bm{\theta}=E(\bm{\theta}),
\end{align}
where the matrix $B(\vec{\bm\theta}(\bm x))=(B_{s,m})_{DL\times DL}$ and $E(\vec{\bm\theta}(\bm x))=(E_1,...,E_{DL})^T$. Since the matrix $B$ is a real-valued symmetry matrix, the inverse of $B$ must exist. And $\vec{\bm\theta}(\bm x)$ can be updated by
\begin{align}
\vec{\bm\theta}(\bm x)+\delta\bm{\theta}=\vec{\bm\theta}(\bm x)+B^{-1}(\vec{\bm\theta}(\bm x))E(\vec{\bm\theta}(\bm x)).
\end{align}
Then, the ground state $|\psi(\bm x_i+\delta\bm x)\rangle$ can be approximated by $|\Psi(\vec{\bm\theta}(\bm x)+\delta\bm{\theta})\rangle$. In this iterative method, one can construct a series of $(|\Psi(\vec{\bm\theta}(\bm x))\rangle, |\Psi(\vec{\bm\theta}(\bm x)+\delta\bm\theta)\rangle,...)$ to represent the ground state $|\psi(\bm x)\rangle, |\psi(\bm x+\delta\bm x)\rangle,...$ from a family of Hamiltonian $H(\bm x)$, and this thus constructs a testing set $\mathcal{T}$ based on the Hamiltonian $H(\bm x)$ and the architecture $\mathcal{A}$.

Now we only need to prove that there does not exist efficient classical ML algorithm that can predict $y_i$ for $\bm x_{\bm i}\in\mathcal{T}$ (constructed in Lemma~\ref{lem:Lower_average}) with probability 8/9. The basic idea relies on: if the classical ML can predict all $y_i\in\mathcal{T}$, then we can design an efficient classical algorithm to solve the worst-case hardness GLP problem, which results in a contradiction. Given the classical training set $\mathcal{S}$ (C-Data), the power of classical ML can be characterized as the $\rm{BPP/samp}$ class~\cite{huang2021power}.  Suppose there exists a classical ML, which can efficiently predict $8/9$ of $\{y_i=\langle\psi(\bm x_i)|\mathcal{M}|\psi(\bm x_i)\rangle\}_{i=1}^M$, where $M=100KR$. Then given $\{(\vec{\bm\theta}(\bm x_i))\}_{i=1}^M$ and the vector $\vec{\bm\theta}(\bm x)$ (parameter in the worst-case scenario), one can fit a polynomial function in $\vec{\bm\theta}$ that recovers the value of $\langle0^n|U^{\dagger}(\vec{\bm\theta}(\bm x))\mathcal{M} U(\vec{\bm\theta}(\bm x))|0^n\rangle$ which is the worst-case scenario. According to Lemma~1, an algorithm $O$ that can approximate the worst-case scenario $\langle0^n|U^{\dagger}(\vec{\bm\theta})\mathcal{M} U(\vec{\bm\theta})|0^n\rangle$ with $1/{\rm poly}(n)$ additive error implies a $\rm{BPP^{NP^{O}}}$ algorithm can approximate $p(\bm j)=\abs{\langle\bm j|\Psi\rangle}^2=|\langle\bm j|U(\vec{\bm\theta})|0^n\rangle|^2$ with the multiplicative error $1/{\rm{poly}}(n)$ based on a theorem by Stockmeyer~\cite{stockmeyer1985approximation}. Therefore, if there exists a classical ML with classical data can efficiently predict $8/9$ of $y_i\in\mathcal{T}$, this implies a $\rm{BPP^{NP^{BPP/samp}}}$ algorithm that can approximate $p(\bm j)$ with the multiplicative error. Considering $\rm{BPP/samp}\subseteq P/poly$ and approximating $p(\bm j)$ is $\# P$-hard, these yield $$\rm{P^{\# P}}\subseteq \rm{BPP^{NP^{BPP/samp}}}\subseteq\rm{BPP^{NP}}/poly.$$ Since $\rm{NP^{NP}}\subseteq P^{\# P}$, one has $\rm{NP^{NP}}\subseteq \rm{BPP^{NP}}/poly$, which implies PH collapses to the second level~\cite{arora2009computational}. Hence with the assumption that PH does not collapse, classical machine learning with classical resources cannot solve LO-QPRs even in the average-case scenario on $\mathcal{T}$.
\label{proof_them_1}
\end{proof}

\subsection{Proof of Theorem~\ref{thm:AlgorithmBound}}
\label{proof_algorithmbound}
\begin{proof}[Proof sketch of Theorem \ref{thm:AlgorithmBound}]
Let $\bm w=\sum_{i=1}^N\alpha_i|\psi(\bm a_{\bm i})\rangle\otimes|\psi(\bm a_{\bm i})\rangle^\ast$, where $\ket{\psi(\cdot)}^\ast$ is the conjugate of $\ket{\psi(\cdot)}$, and the reproduced-kernel-feature-vector $\widetilde{\Psi}(\bm a_{\bm i})=|\psi(\bm a_{\bm i})\rangle\otimes|\psi(\bm a_{\bm i})\rangle^\ast$. 
Then
\begin{align}
    Q(\bm a_{\bm i},\bm x)&=|\langle\psi(\bm a_{\bm i})|\psi(\bm x)\rangle|^2=\langle \widetilde{\Psi}(\bm {a_i})|\widetilde{\Psi}(\bm x)\rangle,
\end{align}
and
\begin{align}
\sum_{i} \alpha_i \abs{\langle\psi(\bm a_{\bm i})|\psi(\bm x)\rangle}^2 =\abra{\bm w, \widetilde{\Psi}(\bm x)},
\end{align}
by the definition of $\bm w$ and $\ket{\Psi(\bm a_{\bm i})}$. Therefore, $\Ebb\sbra{b_j|\bm{a_j}} = \abra{\bm w, \Psi(\bm x)} + g\pbra{\bm{a_j}}$ and $$\vabs{\bm w}^2 = \sum_{ij}\alpha_i\alpha_j \abs{\langle\psi(\bm {a_i})|\psi(\bm{a_j})\rangle}^2< B.$$
Hence, this theorem is followed by substituting the quantum kernel $Q$ and feature map $\widetilde{\Psi}$ into Theorem 1 of Goel and Klivans~\cite{goel2019learning}, if we can implement $Q$ perfectly. Nevertheless, we can only approximate it with small additive error via quantum circuit. Specifically,
the quantum kernel can be approximated by independently performing the Destructive-Swap-Test~\cite{garcia2013swap} to $\mathcal{O}(\log (1/\delta)/\varepsilon^2)$ copies of $2n$-qubit state $\ket{\psi\pbra{\bm {a_i}}}\otimes \ket{\psi\pbra{\bm x}}$, with additive error $\epsilon_Q$ and failure probability $\delta$, see SI material for the details of the circuit implementation of quantum kernel. If the quantum kernel $Q(\bm a_i,\bm x)$ is estimated by performing Destructive-Swap-Test algorithm $\Ord{N^{5/2}}$ times, then
 \begin{align}
     \abs{h^t\pbra{\bm x} - \hat{h}^t\pbra{\bm x}} \leq \Ord{t^2\sqrt{\log\pbra{1/\delta}}/N^{5/4}}
  \label{pointed_bound}
 \end{align}
holds with $1-\delta$ probability, where $h^t\pbra{\bm x}$, $\hat{h}^t\pbra{\bm x}$  represent the ideal QKA model and estimated QKA model at the $t$-th iteration step.
\end{proof}

\comments{The anticipated range of Eq.~\ref{pointed_bound} matches perfectly to our simulation results, which can be checked from the generalized risk $\hat{R}_L(\hat{h}^\ast)$ at $t=T$, as demonstrated in Fig.~4 (b) and Supplementary material. For example, the theoretical upper bound in detecting SPT is $\hat{R}_L(\hat{h}^\ast)\leq 0.483$ ($N=40, \delta=0.1$), and the numerical risk is lower than $0.483$ after $15$ iteration steps which is consistent with the anticipated bound.}

\begin{lemma}[\cite{mohri2018foundations}]
Fix a data distribution $(\bm x,y)\sim \mathcal{D}$, kernel function $Q$ and training data size $N$. Then, with probability at least $1-\delta$,
\begin{align}
R(h)\leq R_N(h)+\mathcal{O}\left(\sqrt{\frac{\log(2/\delta)}{N}}\right)
\end{align}
holds.
\label{lemma:gen}
\end{lemma}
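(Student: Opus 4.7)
The plan is to prove this generalization bound via the textbook Rademacher-complexity route for kernel methods, exactly as in Mohri et al. The setup is standard: the hypothesis $h$ lies in a norm-bounded ball $\mathcal{H}_B = \{h : \|h\|_\mathcal{H} \leq B\}$ inside the RKHS associated to the bounded quantum kernel $Q$, where $B$ is the norm bound inherited from Theorem~\ref{thm:AlgorithmBound}, and we want to transfer empirical-risk control to population-risk control uniformly over this class.

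First I would invoke the symmetrization lemma to bound the expected supremum $\mathbb{E}_S\bigl[\sup_{h \in \mathcal{H}_B}(R(h) - R_N(h))\bigr]$ by twice the Rademacher complexity of the squared-loss class. The Ledoux--Talagrand contraction principle then reduces this to the Rademacher complexity $\mathcal{R}_N(\mathcal{H}_B)$ of the hypothesis class itself, up to the Lipschitz constant of the squared loss on the range of interest. For the quantum kernel we have $Q(\bm x, \bm x) = |\langle \psi(\bm x)|\psi(\bm x)\rangle|^2 = 1$, so the textbook Cauchy--Schwarz computation yields $\mathcal{R}_N(\mathcal{H}_B) \leq B\sqrt{\mathbb{E}[\mathrm{tr}(\mathcal{Q}_S)]}/N = \mathcal{O}(B/\sqrt{N})$, bounding the expected generalization gap by $\mathcal{O}(1/\sqrt{N})$.

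Second, I would upgrade the expectation bound to a high-probability statement using McDiarmid's bounded-differences inequality. Because $|h(\bm x)| \leq B\sqrt{Q(\bm x,\bm x)} \leq B$ by Cauchy--Schwarz in the RKHS, the squared loss on $[0,1]$-valued labels is bounded by a constant depending only on $B$, and replacing a single sample in the empirical risk shifts the supremum by at most $\mathcal{O}(1/N)$. McDiarmid then gives concentration of order $\sqrt{\log(2/\delta)/N}$ around the expectation with probability at least $1-\delta$. Combining this with the Rademacher expectation bound from the previous step produces the stated inequality.

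There is no real obstacle --- the argument is entirely routine once one verifies uniform boundedness of the squared loss on $\mathcal{H}_B$. The only care required is in bookkeeping the $B$-dependent constants (which are all absorbed into the $\mathcal{O}(\cdot)$ in the statement) and in choosing the two-sided $\log(2/\delta)$ form, which corresponds to applying McDiarmid to both the upper and lower tail and taking a union bound over the two.
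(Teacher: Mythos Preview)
Your proposal is correct and is precisely the standard Rademacher-complexity argument from Mohri et al.\ that the citation points to. Note, however, that the paper does \emph{not} actually prove this lemma: it is stated with a bare citation to~\cite{mohri2018foundations}, and the \texttt{proof} environment that immediately follows it in the appendix is in fact the detailed proof of Theorem~\ref{thm:AlgorithmBound} (it explicitly invokes Lemma~\ref{lemma:gen} in its final line, ``Combine the above inequality to Lemma~\ref{lemma:gen}, the upper bound of generalize error is obtained''). So there is no paper proof to compare against --- your write-up simply supplies the omitted textbook argument, and does so accurately.

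One minor remark on bookkeeping: in the paper's setting the loss is squared loss with labels in $[0,1]$ and hypotheses bounded by $B$, so the Lipschitz constant of the loss and the bounded-differences constant both scale with $B$ (or $(1+B)^2$); you are right that these get absorbed into the $\mathcal{O}(\cdot)$, but strictly speaking the hidden constant in the lemma as stated depends on $B$, which is consistent with how the paper uses it (the $B$-dependence is already explicit in the other terms of Theorem~\ref{thm:AlgorithmBound}).
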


\begin{proof}
Notice that if the quantum kernel $Q$ can be exactly calculated, then by Goel and Klivans~\cite{goel2019learning}, Quantum kernel Alphatron in the main file outputs a hypothesis $h^\ast$ such that
\begin{equation*}
  R(h^\ast)\leq \Ord{\sqrt{\varepsilon_g} + G\sqrt[4]{\frac{\log (1/\delta)}{N}} + B\sqrt{\frac{\log (1/\delta)}{N}}}.
\end{equation*}
This inequality is obtained by leveraging of
\begin{equation}
    R(h^\ast)\leq \hat{R}\pbra{h^{t\ast}} + \Ord{B\sqrt{\frac{1}{N}} + \sqrt{\frac{\log (1/\delta)}{N}}},
    \label{eq:err_ho}
\end{equation}
and
\begin{equation}
    \hat{R}\pbra{h^{t\ast}}\leq \Ord{\sqrt{\varepsilon_g} + G \sqrt[4]{\frac{\log (1/\delta)}{N}} + B\sqrt{\frac{\log (1/\delta)}{N}}}.
    \label{eq:errRh}
\end{equation}
for some $t^\ast\leq T = \Ord{N/\log(1/\delta)}$. Nevertheless, if the quantum kernel $Q$ is approximated via performing quantum circuits, Eq.~\eqref{eq:errRh} should be replaced with
\begin{equation}
        \hat{R}\pbra{\hat{h}^{t\ast}}\leq \Ord{\sqrt{\varepsilon_g} + G \sqrt[4]{\frac{\log (1/\delta)}{N}} + B\sqrt{\frac{\log (1/\delta)}{N}}}.
    \label{eq:errRhNew}
\end{equation}
where $\hat{h}^t\pbra{\bm x} = \sum_{i = 1}^m \alpha_i^t \hat{Q}\pbra{\bm{a_i}, \bm x}$, and $\hat{Q}$ is the approximation of $Q$. 

In the following, we will prove that $\abs{\hat{R}\pbra{\hat h^{t\ast}} - \hat{R}\pbra{h^{t\ast}}}$ is bounded, and hence $R\pbra{h^\ast}$
 is bounded by combining Eq.~\eqref{eq:err_ho},~\eqref{eq:errRh} and~\eqref{eq:errRhNew}.
 By Theorem~3 in the main file, $\hat{Q}\pbra{\bm{a_i}, \bm x}$ is an $\epsilon_Q$ approximation of $Q\pbra{\bm{a_i}, \bm x}$, \emph{i.e.},
\begin{equation}
    \abs{\hat{Q}\pbra{\bm{a_i}, \bm x} - Q\pbra{\bm{a_i}, \bm x}}\leq \epsilon_Q
\end{equation}
with high probability.

For convenience, in the later proof we require for all $i$, $\delta_Q^i = \hat{Q}\pbra{\bm{a_i}, \bm x} - Q\pbra{\bm{a_i}, \bm x}$ are the same, and $\delta_{\alpha_i}^t = \hat{\alpha}^t_i - \alpha^t_i$ are also the same, denoted them as $\delta_Q,\delta_{\alpha}^t$ respectively. Since $\delta_Q^i$ are in the same order for $i\in [N]$ ($\delta_{\alpha_i}^t$ similarly), hence it is reasonable for the assumptions. We will have the same upper bound for $R\pbra{h^\ast}$ without the assumptions and with a more tedious proof.
Then for any $i$, we have
\begin{align}
\begin{aligned}
       &-\delta_\alpha^t = \alpha_i^t - \hat{\alpha}_i^t \\
    &= \alpha_{i}^1 - \hat\alpha_{i}^1 +\frac{1}{N}\sum_{k = 1}^{t-1} \pbra{\hat{h}^{k}\pbra{\bm{a_i}} - h^{k}\pbra{\bm{a_i}}}\\
    &=\frac{1}{N}\sum_{k = 1}^{t-1} \sum_{j = 1}^N \pbra{\hat{\alpha}_j^k \hat Q\pbra{\bm{a_j}, \bm{a_i}} - \alpha_j^k Q\pbra{\bm{a_j}, \bm{a_i}}}\\
    &=  \sum_{k = 1}^{t-1} \pbra{A^k \delta_Q + \bar Q_i 
    \delta_\alpha^k + \delta_\alpha^k\delta_Q }
\end{aligned}
\label{eq:deltaRecur}
\end{align}
where $A^k = \frac{1}{N}\sum_{j = 1}^N \alpha_i^k$, and $\bar Q_i = \frac{1}{N}\sum_{j = 1}^N Q\pbra{\bm{a_j}, \bm{a_i}}$.
We can also obtain the value of $-\delta_{\alpha}^{t-1}$ by leveraging of Eq.~\eqref{eq:deltaRecur} and the recurrence relationship.  The following equations follows by subtracting $-\delta_{\alpha}^{t-1}$ by $-\delta_\alpha^t $,
\begin{align}
    \delta_\alpha^t = \pbra{\bar Q_i - 1}\delta_\alpha^{t-1} + A^{t-1}\delta_Q + \delta_\alpha^{t-1} \delta_Q,
\end{align}
hence with the fact that $0\leq \bar{Q}_i\leq 1$ and $A^k\leq \frac{k-1}{N}$, the absolute value of $\delta_\alpha^t$ satisfies the inequality
\begin{align*}
    \abs{\delta_\alpha^t}\leq\pbra{1 + \abs{\delta_Q}} \abs{\delta_\alpha^{t-1}} + \frac{t-2}{N} \abs{\delta_Q}\\
    =  \abs{\delta_\alpha^{t-1}} + \frac{3(t-2)}{N}\epsilon_Q
\end{align*}
By the recurrence of $\abs{\delta_\alpha^t}$, we have
\begin{align*}
    \abs{\delta_\alpha^t} &\leq \frac{3\epsilon_Q}{N}\sum_{k = 1}^{t - 2}k\\
    &\leq \frac{3t^2\epsilon_Q}{2N},
\end{align*}
then we have
\begin{align*}
    \abs{h^t\pbra{\bm x} - \hat h^t\pbra{\bm x}} &=\abs{\sum_{i = 1}^N  \pbra{\hat{\alpha}_i^t \hat{Q}\pbra{\bm{a_i},\bm x} - \alpha_i^t Q\pbra{\bm{a_i}, \bm x}}}\\
    &\leq \sum_{i = 1}^N \pbra{2\abs{\alpha_{i}^t} \epsilon_Q + 2 Q\pbra{\bm{a_i}, \bm x}|\delta_\alpha^t|}\\
    &\leq 2N\pbra{\frac{t-1}{N} \epsilon_Q + \abs{\delta_\alpha^t}}\\
    &\leq 4t^2\epsilon_Q,
\end{align*}
for large where the second inequality holds since $\abs{\alpha_i^t}\leq \frac{t-1}{N}$.

Therefore,
\begin{align*}
\abs{ \hat{R}\pbra{h^{t\ast}} -  \hat{R}\pbra{\hat{h}^{t\ast}} }&\leq \frac{1}{N}\abs{\sum_{i = 1}^{N}\pbra{h^t\pbra{\bm {a_i}} - \hat h^t\pbra{\bm {a_i}}} }\\
    &\leq 4t^2\epsilon_Q,
\end{align*}
where the firstly inequality holds by the definition of $\hat{R}\pbra{h^{t\ast}}$ and $\hat{R}\pbra{\hat{h}^{t\ast}}$ (Recall that $$\hat{R}\pbra{h^{t\ast}} = \frac{1}{N}\vabs{\sum_{i = 1}^N \pbra{\Ebb\sbra{b_i|\bm{a_i}} - h^t\pbra{\bm{a_i}}}\ket{\phi\pbra{\bm{a_i}}} }$$). 
The additive error $\epsilon_Q$ for the quantum kernel $Q\pbra{\bm{a_i}, \bm x}$ can be bounded to $\Ord{\frac{\sqrt{\log(1/\delta)}}{N^{5/4}}}$ with $\Ord{N^{5/2}}$ copies of the quantum states. 

Hence,
\begin{align*}
\begin{aligned}
    R(h^\ast)&\leq\Ord{\sqrt{\varepsilon_g} +  G \sqrt[4]{\frac{\log (1/\delta)}{N}} + B\sqrt{\frac{\log (1/\delta)}{N}} + {\frac{\log (1/\delta)}{N^{1/4}} }}\\
     &\leq \Ord{\sqrt{\varepsilon_g} +  G \sqrt[4]{\frac{\log (1/\delta)}{N}} + B\sqrt{\frac{\log (1/\delta)}{N}} },
\end{aligned}
\end{align*}
where the last inequality holds since $t =\Ord{N/\log(1/\delta)}$ and $G = \Omega\pbra{1}$. Combine the above inequality to Lemma~\ref{lemma:gen}, the upper bound of generalize error is obtained.
\end{proof}

\section{Implementation of quantum kernel with SWAP test}
\label{app:QkernelSWAP}
By leveraging of Chernoff bound, the quantum kernel can be approximated by independently performing the Destructive-Swap-Test~\cite{garcia2013swap} to $\mathcal{O}(\log (1/\delta)/\epsilon_Q^2)$ copies of $2n$-qubit state $\ket{\phi\pbra{\bm {a_i}}}\otimes \ket{\phi\pbra{\bm x}}$, with additive error $\epsilon_Q$ and failure probability $\delta$. The expectation of the measurement results of the Destructive-Swap-Test is
\begin{align}
   \langle\phi(\bm a_{\bm i})\otimes\phi(\bm x)|\mathbf{SWAP}|\phi(\bm a_{\bm i})\otimes\phi(\bm x)\rangle= Q(\bm a_{\bm i},\bm x),
\end{align}
where $\mathbf{SWAP}|\phi(\bm a_{\bm i})\otimes\phi(\bm x)\rangle=|\phi(\bm x)\otimes\phi(\bm a_{\bm i})\rangle$ denotes the $2n$-qubit swap operator.
For QPL problem, $\ket{\phi\pbra{\bm{a_i}}}$ and $\ket{\phi\pbra{\bm x}}$ can all be generated with polynomial-size circuit, hence the Destructive-Swap-Test can be performed efficiently.

\begin{figure*}[htb]
\centering
  \includegraphics[width=0.6\textwidth]{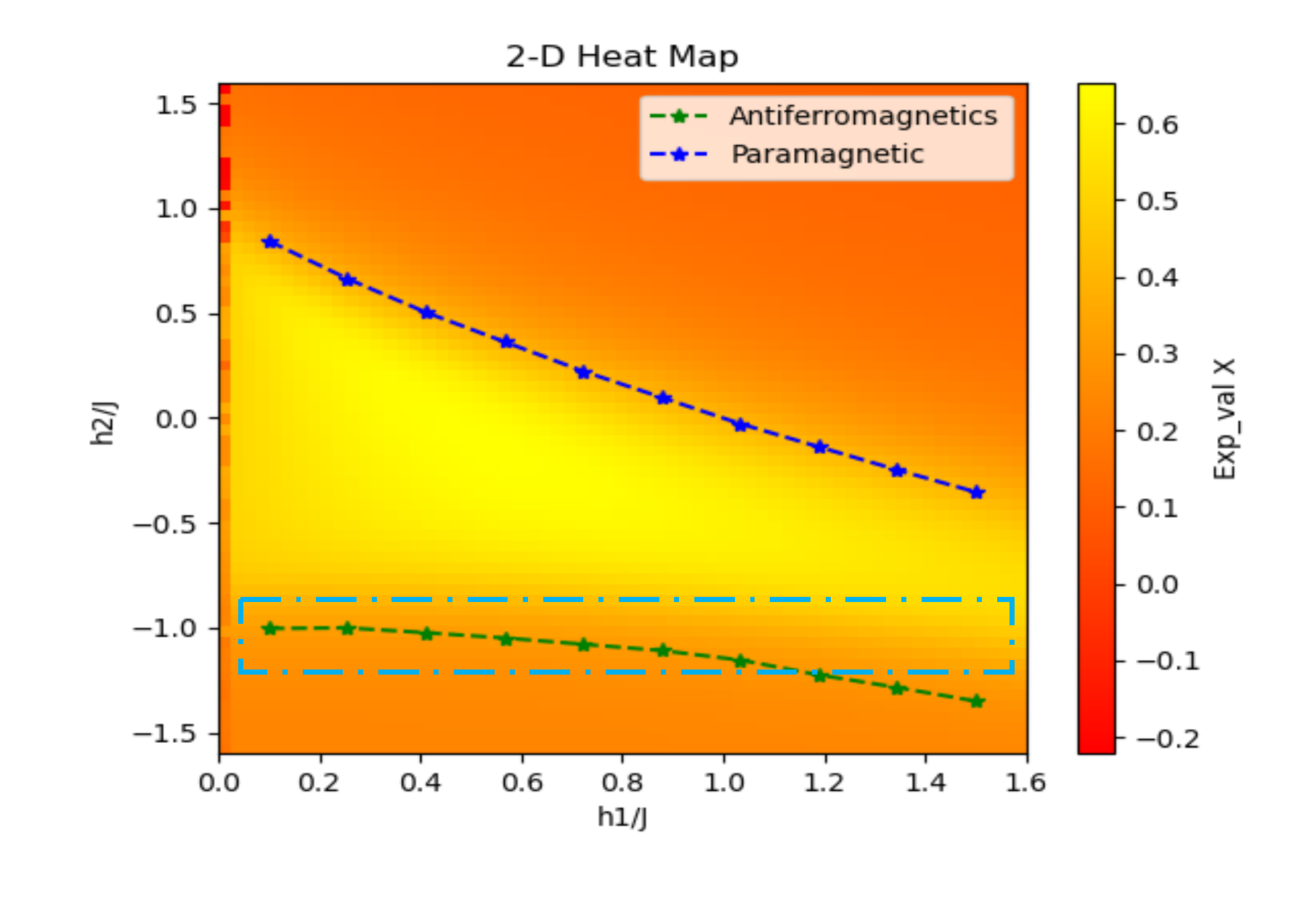}
  \caption{ Numerical results for QCNN to recognize a  $Z_2\times Z_2$ Symmetry-Protected-Topological (SPT) phase of Haldane Chain by using code~\cite{Jaybsoni2021}.}
  \label{Fig:QCNN}
\end{figure*}

\section{Discussions on sample complexity of ground states}
\label{App:SamplingComplexity}
Here, we carried out further theoretical analysis and numerical calculation on the distribution probabilities 
of ground states $|\psi_g(\bm x)\rangle$ of parameterized Hamiltonian $H(\bm x)$. We aim to provide a numerical window of $\bm x$, for which the corresponding Hamiltonian simulation is expected to be classically hard.

The probability distribution of a truly random quantum state $|\psi\rangle$ possesses the Porter-Thomas (PT) distribution ${\rm Pr}(|\langle\bm j|\psi\rangle|^2)=2^ne^{-2^n |\langle\bm j|\psi\rangle|^2}$, which is known to be classically hard to sample~\cite{brody1981random, boixo2018characterizing}. In the following we compare the distribution probabilities of ground states $|\psi_g(\bm x)\rangle$ of parameterized Hamiltonian $H(\bm x)$ with the Porter-Thomas distribution.

\begin{theorem}
Suppose $p_{\bm x}(\bm j)=|\langle\bm j|\psi_g(\bm x)\rangle|^2$ and $p(\bm j)=|\langle\bm j|\psi\rangle|^2$ represent probability distributions of ground state $|\psi_g(\bm x)\rangle$ and some random state $|\psi\rangle$, and their trace distance satisfies
\begin{align}
    {\rm Tr}\left({\rm Pr}(p_{\bm x}(\bm j)), {\rm Pr}(p(\bm j))\right)=\frac{1}{2}\sum\limits_{\bm j=0}^{2^n-1}|{\rm Pr}(p_{\bm x}(\bm j))-{\rm Pr}(p(\bm j))|<\epsilon.
    \label{KLD}
\end{align}
If the trace distance $\epsilon\leq n^{-1}$, then it is classically hard to sample from the ground state $|\psi_g(\bm x)\rangle$.
\label{thm:verify}
\end{theorem}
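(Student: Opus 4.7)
The plan is to reduce sampling hardness for the ground state distribution $p_{\bm x}$ to the conjectured classical hardness of sampling from the Porter-Thomas (PT) distribution, treating the hypothesized $\epsilon\le n^{-1}$ bound in trace distance as a ``bridge'' that lets the known Stockmeyer-style argument survive the substitution. Concretely, I would proceed by contradiction: suppose there is a classical polynomial-time sampler $\mathcal{C}$ whose output distribution $q_{\bm x}$ approximates $p_{\bm x}$ to within inverse-polynomial total variation distance. Chaining the triangle inequality with the assumed bound
\begin{equation}
\tfrac{1}{2}\sum_{\bm j}\bigl|\Pr(p_{\bm x}(\bm j))-\Pr(p(\bm j))\bigr|<\epsilon\le 1/n
\end{equation}
shows that $q_{\bm x}$ is also polynomially close to the PT distribution. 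The aim is then to invoke the classical hardness of sampling from PT to derive a contradiction.

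The technical core is the anticoncentration/Stockmeyer argument. First I would verify that the hypothesis implies an approximate anticoncentration statement for $p_{\bm x}$: since the PT distribution satisfies $\Pr_{\bm j}[p(\bm j)\ge \alpha/2^n]\ge c$ for absolute constants $\alpha,c$, closeness in trace distance by $1/n$ transfers this property to $p_{\bm x}$ up to an $O(1/n)$ correction. Next, assuming Conjecture~\ref{con:lowerApprox} for the ground state $|\psi_g(\bm x)\rangle$, I would apply Stockmeyer's approximate counting theorem to $\mathcal{C}$: given the hypothetical classical sampler, a $\mathrm{BPP}^{\mathrm{NP}}$ machine can estimate $q_{\bm x}(\bm j)$ to multiplicative error $1/\poly(n)$, and by the closeness this gives an additive $1/(2^n\,\poly(n))$ approximation to $p_{\bm x}(\bm j)$. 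Under the conjecture, this estimation is $\#\mathrm{P}$-hard, so we obtain $\mathrm{P}^{\#\mathrm{P}}\subseteq\mathrm{BPP}^{\mathrm{NP}}$, forcing the polynomial hierarchy to collapse to its third (and hence second) level, contradicting the standing assumption.

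The delicate step, and the place where I expect the main obstacle, is controlling the error budget precisely enough that the Stockmeyer conversion from total-variation closeness into a pointwise multiplicative estimator still yields an additive window of $\epsilon_c/2^n$ against $p_{\bm x}(\bm j)$. The $\epsilon\le n^{-1}$ threshold in the statement is exactly what allows the approximate-anticoncentration constant to remain $\Omega(1)$ and hence permits the standard ``multiplicative-to-additive'' conversion for heavy outputs; if $\epsilon$ were only inverse-polynomial in a larger power, the anticoncentration mass could be washed out. I would therefore devote care to (i) proving the quantitative anticoncentration transfer, and (ii) checking that the sampler's total variation error, after the triangle inequality, is still small relative to the anticoncentration floor so that the Stockmeyer estimate remains within the $\#\mathrm{P}$-hard window guaranteed by Conjecture~\ref{con:lowerApprox}. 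Once these quantitative bookkeeping steps are settled, the PH-collapse contradiction is immediate and yields the claimed classical hardness of sampling from $|\psi_g(\bm x)\rangle$.
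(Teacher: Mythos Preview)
Your high-level scaffold---assume a classical sampler, apply Stockmeyer's approximate counting, and derive a PH collapse---matches the paper. However, there is a genuine gap in your identification of the $\#\mathrm{P}$-hard target. You write that you will ``assume Conjecture~\ref{con:lowerApprox} for the ground state $|\psi_g(\bm x)\rangle$'' and then use Stockmeyer to approximate $p_{\bm x}(\bm j)$. But Conjecture~\ref{con:lowerApprox} for the ground state is exactly what the theorem is meant to \emph{justify} via the closeness-to-PT criterion; you cannot take it as a hypothesis. If that assumption were granted, the closeness bound would be nearly superfluous---you would need it only to inherit anticoncentration, which is not the role it plays here.

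The paper instead takes its $\#\mathrm{P}$-hardness from Bouland et al.\ for the \emph{random} state $|\psi\rangle$, not for $|\psi_g(\bm x)\rangle$, and uses the closeness hypothesis as the bridge between the two. A hypothetical classical sampler for $|\psi_g(\bm x)\rangle$, combined with Stockmeyer counting, yields an estimate of ${\rm Pr}(p_{\bm x}(\bm j))$; the trace-distance bound plus Markov's inequality then converts this into an estimate of ${\rm Pr}(p(\bm j))$; and finally---this is the step your plan omits entirely---the paper \emph{inverts the explicit Porter--Thomas density} ${\rm Pr}(p)=2^n e^{-2^n p}$ to recover $p(\bm j)$ itself with additive error $\mathcal{O}(1/(n\,2^n))$. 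The $\epsilon\le n^{-1}$ threshold enters precisely here, to keep the inverted estimate inside the $\#\mathrm{P}$-hard window. Your proposal to use the closeness only for ``anticoncentration transfer'' never reaches the hard quantity $p(\bm j)$, so the contradiction does not close without the additional, unjustified assumption.
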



\emph{Proof Sketch:} Here, we provide a proof by contradiction. On one hand, it is known $\# P$-hard to approximate $p(\bm j)$ to additive error $\mathcal{O}(1/(n2^n))$ with a constant probability~\cite{bouland2019complexity}. On the other hand, if $\epsilon\leq n^{-1}$ and assume that there exists a classical sample algorithm $\mathcal{A}$ that can efficiently sample from $|\psi_g(\bm x)\rangle$, then $\mathcal{A}$ can be used to efficiently estimate $p(\bm j)$ to additive error $\mathcal{O}(1/(n2^n))$ with probability $\frac{1}{4}$. This  leads to a contradiction. Therefore, if $\epsilon\leq n^{-1}$, such classical sample algorithm $\mathcal{A}$ does not exist.

\begin{proof}
Assume that there exists a classical sample algorithm $\mathcal{A}$ that can efficiently sample from the ground state $|\psi_g(\bm x)\rangle$. Let $(\bm j_1,\bm j_2,...)$ be samples generated by the classical algorithm $\mathcal{A}$, then an approximation of ${\rm Pr}(p_{\bm x}(j))$ can be obtained by using Stockmeyer Counting theorem~\cite{boixo2018characterizing, stockmeyer1983complexity,goldreich2008computational}, which is denoted as ${\rm Pr}(\hat{p}_{\bm x}(\bm j))$. Then we have
\begin{equation}
\begin{aligned}
    |{\rm Pr}(\hat{p}_{\bm x}(\bm j))-{\rm Pr}(p(\bm j))|&\leq |{\rm Pr}(\hat{p}_{\bm x}(\bm j))-{\rm Pr}(p_{\bm x}(\bm j))|+|{\rm Pr}(p_{\bm x}(\bm j))-{\rm Pr}(p(\bm j))|\\
    &\leq \frac{{\rm Pr}(p_{\bm x}(\bm j))}{{\rm poly}(n)}+|{\rm Pr}(p_{\bm x}(\bm j))-{\rm Pr}(p(\bm j))|\\
    &\leq \left(1+\frac{1}{{\rm poly}(n)}\right)|{\rm Pr}(p_{\bm x}(\bm j))-{\rm Pr}(p(\bm j))|+\frac{{\rm Pr}(p(\bm j))}{{\rm poly}(n)}
\end{aligned}
\end{equation}
where the second inequality comes from the Stockmeyer Counting theorem and the third inequality comes from the triangle inequality. According to Markov's inequality, one has
\begin{align}
    {\rm Pr}\left(|{\rm Pr}(p_{\bm x}(\bm j))-{\rm Pr}(p(\bm j))|\leq\frac{2\epsilon}{2^n\delta}\right)\geq 1-\delta,
\end{align}
where $\delta\in[0,1]$.
Setting $\delta=\alpha\epsilon$, where $\alpha$ is a positive constant value, the relationship
\begin{align}
    |{\rm Pr}(\hat{p}_{\bm x}(\bm j))-{\rm Pr}(p(\bm j))|\leq \frac{1+1/{\rm poly}(n)}{\alpha2^{n-1}}+\frac{{\rm Pr}(p(\bm j))}{{\rm poly}(n)}
\end{align}
is valid with the probability $1-\alpha\epsilon$. If $\epsilon<\alpha^{-1}$ (which means the above estimation is valid), an estimation of ${\rm Pr}\left(p(\bm j)\right)$ is obtained, that is
\begin{align}
    \frac{{\rm Pr}(\hat{p}_{\bm x}(\bm j))-a}{1+b}\leq{\rm Pr}\left(p(\bm j)\right)\leq \frac{{\rm Pr}(\hat{p}_{\bm x}(\bm j))-a}{1+b}+\epsilon_1
\end{align}
where $a=(1+1/{\rm poly}(n))/(\alpha2^{n-1})$, $b=1/{\rm poly}(n)$ and $\epsilon_1=2a/(1-b^2)$. 
Since $p(\bm j)$ satisfies PT distribution, then we have
\begin{align}
    \frac{n-\log(l(\bm x)+\epsilon_1)}{2^n}\leq p(\bm j)\leq \frac{n-\log(l(\bm x)+\epsilon_1)}{2^n}+\frac{\epsilon_2}{2^n},
\end{align}
where $l(\bm x)=\frac{{\rm Pr}(\hat{p}_{\bm x}(\bm j))-a}{1+b}$ and $\epsilon_2=\log(1+\epsilon_1/l(\bm x))=\log(1+2a/(1+b)({\rm Pr}(\hat{p}_{\bm x}(\bm j))-a))$. For the random quantum state $|\psi\rangle$, the expectation value of ${\rm Pr}(p(\bm j))$ is upper bounded by $2^n\int_{0}^1pe^{-2^np}dp\leq2^{-n}$~\cite{boixo2018characterizing}. Then using Markov's inequality, we have  ${\rm Pr}(p(\bm j))\geq 1/2^n$ with probability $1/e$. Therefore ${\rm Pr}(\hat{p}_{\bm x}(\bm j))\geq {\rm Pr}(p(\bm j))-1/(\alpha 2^n)\geq 1/2^n-1/(\alpha 2^n)$ with probability $1/e$.

Taking everything together, the classical algorithm $\mathcal{A}$ combined with classical post-processing can provide an estimation to $p(\bm j)$ within an additive error
\begin{align}
    \frac{\epsilon_2}{2^n}\leq \frac{1}{2^n}\log\left(1+\frac{1}{\alpha2^{n-2}({\rm Pr}(\hat{p}_{\bm x}(\bm j))-a)}\right)\leq \frac{1}{2^n}\log\left(1+\frac{4}{\alpha}\right)\leq\frac{1}{\alpha 2^{n-2}}
\end{align}
with probability $(1-\alpha\epsilon)e^{-1}$. Let $\alpha=(1-e/4)\epsilon^{-1}$ ($(1-\alpha\epsilon)e^{-1}=\frac{1}{4}$), the above estimation is valid with probability $1/4$.
Therefore, under the conditions $\epsilon\leq n^{-1}$, and assuming $\mathcal{A}$ can efficiently sample from the ground state $|\psi_g(\bm x)\rangle$, then there exists a classical algorithm which can estimate $p(\bm j)$ to a $\mathcal{O}(1/(n2^n))$ additive error with a constant probability.

However, we already know that it is $\# P$-hard to estimate $p(\bm j)$ to a $\mathcal{O}(1/(n2^n))$ additive error with a constant probability. Then, if $\epsilon<n^{-1}$, the existence of such classical algorithm $\mathcal{A}$ would lead to Polynomial Hierarchy collapse to its second level~\cite{arora2009computational}. Therefore, no classical algorithm can efficiently sample from $|\psi_g(\bm x)\rangle$ if $\epsilon<n^{-1}$.
\end{proof}

To illustrate the usefulness of theorem~\ref{thm:verify}, we show numerical calculations for (1) the lattice Transverse-field Ising model and (2) the Fermi-Hubbard model. The lattice Transverse-field Ising model is given by
$H_I(W,J.F)=W\sum_{i=1}^nZ_i+J\sum_{(i,j)}Z_iZ_{j}-\frac{F}{2}\sum_{i=1}^nX_i$,
where $X_i$ and $Z_i$ are Pauli operators on the $i$-th qubit, and $\bm x=(W,J,F)$ determines the relative strength of the Hamiltonian terms. And the Fermi-Hubbard model is given by
$H_H(t,U)=-t\sum_{\langle i,j\rangle, s}(a^{\dagger}_{i,s}a_{j,s}+a^{\dagger}_{j,s}a_{i,s})+U\sum_{i}n_{i\uparrow}n_{i\downarrow}$, where $\bm x=(t,U)$ determines the relative strength of the Hamiltonian terms, $a_{i,s}^{\dagger}$ and $a_{i,s}$ are fermionic creation and annihilation operators, $n_{i\uparrow}=a^{\dagger}_{i\uparrow}a_{i\uparrow}$ and similarly for $n_{i\downarrow}$. The notation $\langle i,j\rangle$ in the first sum associates sites that are adjacent in a $(n_a\times n_b)$ lattice, and $s\in\{\uparrow, \downarrow\}$. Here, we denote the above two Hamiltonians as $H(\bm x)$, and analyze the probability distribution of $|\psi_g(\bm x)\rangle$, namely the ground state of $H(\bm x)$.

\begin{figure*}[htb]
\centering
  \includegraphics[width=\textwidth]{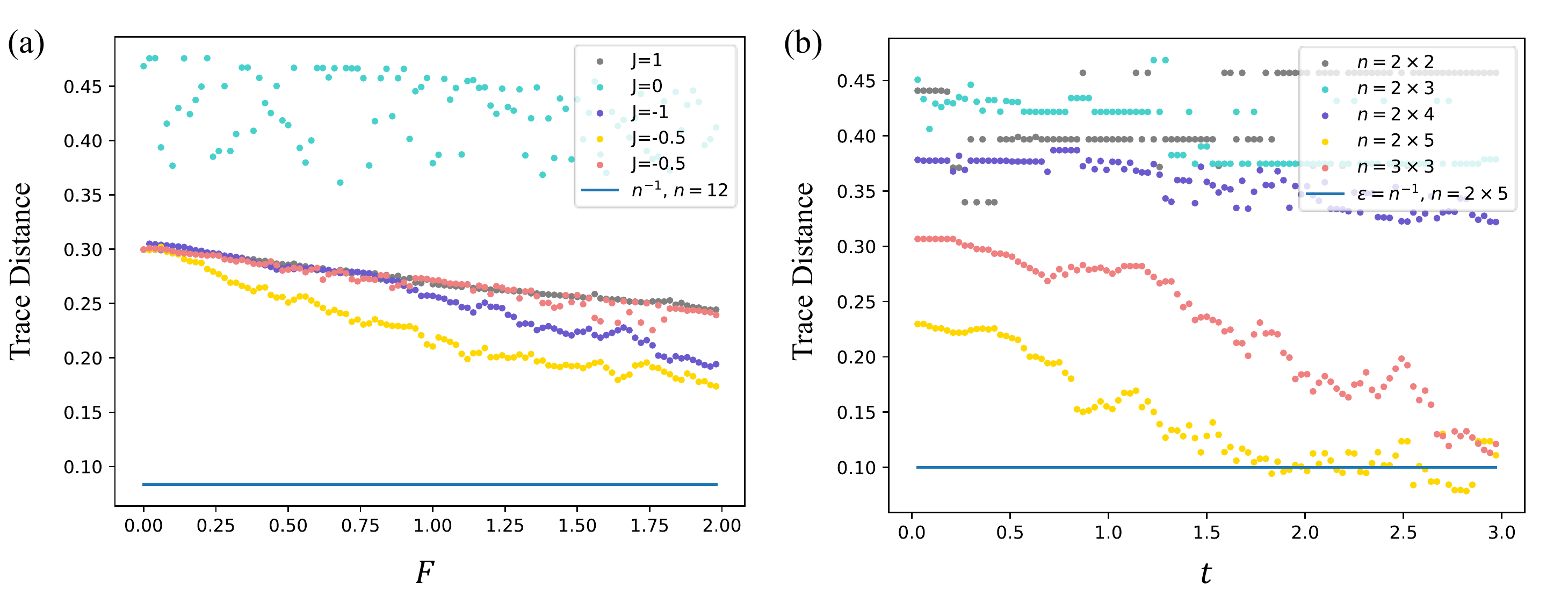}
  \caption{Numerical results to illustrate the sample complexity of $|\psi_g(\bm x)\rangle$. Each dot represents the KLD between the
  probability distribution ${\rm Pr}(p_x(\bm j))$ and PT distribution for (a) 2-dimensional lattice Hamiltonian and (b) 2-dimensional Fermionic Hubbard Hamiltonian.}
  \label{Fig:Amplitude}
\end{figure*}

We use the trace distance ${\rm Tr}(p, q)$ as a measure between two distributions, where ${\rm Tr}(p,q)\in[0,1]$; ${\rm Tr}(p,q)=0$ holds if and only if distribution $p=q$. In Figure~\ref{Fig:Amplitude}(a), we plot our results for lattice Transverse-field Ising model with $n=12$, $W=1$, and a range of $J$ and $F$ values. When the nearest neighbour coupling strength $J=0$, the corresponding ground states are classically solvable. sample from these ground states is easy for classical algorithms, as indicated by large values of trace distance in Figure~\ref{Fig:Amplitude}(a). On the other hand, $J\neq 0$ induces complex ground states $|\psi(\bm x)\rangle$, resulting much smaller values of  trace distances. However, the ground states of 2-dimensional lattice models donot saturate into the $\epsilon\leq n^{-1}$ domain which has been proved to be classically hard. 

In Figure~\ref{Fig:Amplitude}(b), we plot ground states of Hubbard model for $U=1$, and a range of $n=(n_a\times n_b)$ and $t$ values. We observe that the change of $t$ increases the sample complexity of $|\psi_g(\bm x)\rangle$ for different size of the Hubbard models. The ground states of $(2\times 5)$ Fermi-Hubbard model saturate into the $\epsilon=n^{-1}=0.1$ error. It also demonstrates Fermi-Hubbard ground states are more difficult compared with (a) lattice Ising model ground states, which is consistent with physical intuition. 

\vspace{10px}

\section{Numerical comparison to related works}
\subsection{Comparison to QCNN}
To provide a fairly comparison, we first elaborately provide the computational overhead of both method in each iteration step. For the proposed Algorithm 1, the quantum learner should first calculate an $N\times N$ quantum kernel method with $\mathcal{O}\left(N^2/\epsilon^2\right)$ sample complexity and a quantum circuit with $\mathcal{O}(n)$ controlled Z gates. After that, the proposed Algorithm 1 does not need any quantum resource in each iteration step. The QCNN loss function is
$$L(U,V,F)=\frac{1}{N}\sum\limits_{i=1}^N\left(y_i-f_{U,V,F}(|\psi(\bm x_i)\rangle)\right)^2,$$
where $f_{U,V,F}$ denotes the output of QCNN and $U, V, F$ represent the variational quantum circuits in QCNN. At depth $d$, the QCNN method requires $\mathcal{O}(\frac{7n}{2}(1-3^{1-d})+n3^{1-d})$ multi-qubit operations and $4d$ single-qubit rotations to provide an output. Repeat this quantum circuit $N$ times, the QCNN obtains a loss function $L(U,V,F)$ in a single iteration step. 

Consider a classification task on a set $C=\{c_1,c_2\}$ of $2$ classes in a supervised learning scenario. In such settings, a training set $\mathcal{S}$ and a testing set $\mathcal{T}$ both are assumed to be labeled by a map $m:\mathcal{S}\cup\mathcal{T}\mapsto C$, and both $\mathcal{S}$ and $\mathcal{T}$ are provided to the learner, where only the training set $\mathcal{S}$ has the label. Formally, the learner has only access to a restriction $\widetilde{m}(\bm x)$ of the indexing map:
\begin{align}
   \widetilde{m}: \mathcal{S}\mapsto C.
\end{align}
Suppose the learner outputs a model $h(\cdot)$ to predict the label of data $\bm x\in \mathcal{T}$, and the classification result can be defined as:
\begin{equation}
\label{eq6}
m(\bm x)=\left\{
\begin{aligned}
c_1 & , & h(\bm x)>t_1, \\
c_2 & , & h(\bm x)<t_2,
\end{aligned}
\right.
\end{equation}
where $t_1$ and $t_2$ are selected thresholds. Then the accuracy of the model is quantified by a classification success rate, proportional to the number of collisions:
\begin{align}
    v_{s}=\frac{|\{x\in\mathcal{T}|m(\bm x)=\widetilde{m}(\bm x)\}|}{|\mathcal{T}|}.
\end{align}
In the classification of SPT cases, the training data set $\mathcal{S}$ contains $40$ points on the line of $h_2=0$, and the testing data set $\mathcal{T}$ contains $4096$ points uniformly distributed on the domain $(h_1/J,h_2/J)\in[-1.5,1.5]\times[0,1.6]$. 

We test the proposed Algorithm 1 on $\mathcal{T}=\{\bm x_i,y_i\}$ where $61$ points are mis-classified in the vicinity of paramagnetic boundary, and the classification accuracy $v_s=0.985$ in this case. Since the Ref~\cite{cong2019quantum} does not provide the exactly accuracy of QCNN, we thus simulate the QCNN based on the code in the link (https://github.com/Jaybsoni/Quantum-Convolutional-Neural-Networks). The simulation results show that its output cannot perfectly fit the Antiferromagnetic boundary which is consistent with Fig. 4 in ref~\cite{cong2019quantum}. Using the same testing set with $4096$ data points, there are $116$ points are mis-classified, and its classification accuracy $v_s=0.971$. The predicted phase diagrams is illustrated as Fig~\ref{Fig:QCNN}.

We also numerically show our method takes less steps to a relatively stable landscape compared to the QCNN method (see Fig.~\ref{Fig:loss}).

\begin{figure*}[htb]
\centering
  \includegraphics[width=0.6\textwidth]{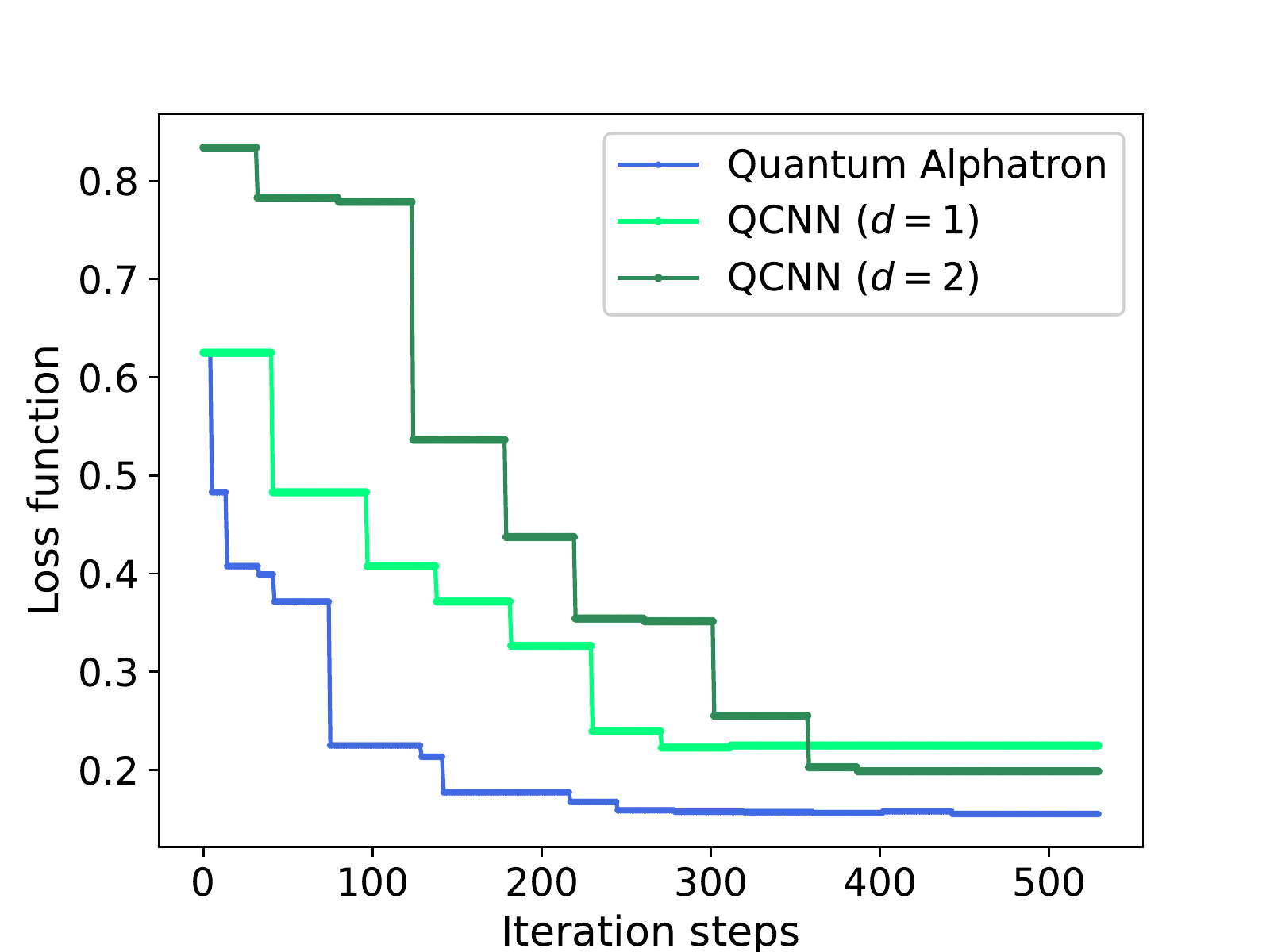}
  \caption{The three curves indicate the variation trend of the loss function $\hat{R}_L(h^\ast)$ in the training procedure.}
  \label{Fig:loss}
\end{figure*}

\subsection{Comparison to Ref.~~\cite{huang2021provably}}
Here, we utilize shadow-tomography based kernel-PCA method~\cite{huang2021provably} and common kernel-PCA method~\cite{hoffmann2007kernel} in recognizing $Z_2\times Z_2$ SPT phase of Haldane chain and three distinct phases of bond-alternating XXZ model.

We first consider the method proposed in~\cite{huang2021provably} that designed a special kernel function for classical shadows. Given an unknown density matrix $\rho$, implementing the randomrized single-qubit Pauli measurements, a classical shadow representation is obtained, where each shadow raw data corresponds to a two-dimensional array
\begin{align}
S_T(\rho)=\left\{|s_{i}^{t}\rangle: i \in\{1,...,n\}, t\in\{1,...,T\}\right\}\in \{|0\rangle,|1\rangle,|+\rangle,|-\rangle,|i+\rangle,|i-\rangle\}^{n\times T}.
\end{align}
And a classical representation of the density matrix $\rho$ is 
\begin{align}
\sigma_T(\rho)=\frac{1}{T}\sum\limits_{t=1}^T\sigma_1^{(t)}\otimes\sigma_2^{(t)}\otimes...\otimes \sigma_n^{(t)},
\end{align}
where $\sigma_i^{(t)}=3|s_i^{t}\rangle\langle s_i^t|-I$. Then the kernel function for shadow tomography can be computed by
\begin{align}
k^{\rm shadow}(\rho_1,\rho_2)=\exp\left(\frac{\tau}{T^2}\sum\limits_{t_1,t_2=1}^T\exp\left(\frac{\gamma}{n}\sum\limits_{i=1}^n{\rm Tr}\left(\sigma_i^{t_1}\sigma_i^{t_2}\right)\right)\right),
\end{align}
which is determined by hyper-parameters $\tau$ and $\gamma$.
Then perform kernel-PCA method on the kernel matrix $k^{\rm shadow}$, the classification result of quantum phases is obtained. Another strategy is directly applying kernel-PCA method on the provided shadow tomography, where the corresponding kernel matrix 
\begin{align}
k(\rho_1,\rho_2)=\exp\left(\tau{\rm Tr}\left(\rho_1\rho_2\right)\right)
\end{align}

Here, we test their performances in classifying $Z_2\times Z_2$ symmetry-protected-topological (SPT) phase of Haldane chain and three distinct phases of bond-alternating XXZ model. Let the sample complexity $T=500$, and simulation results are shown in Fig.~\ref{Fig:ML_for_spt} and Fig.~\ref{Fig:ML_for_xxz}. Noting that the shadow tomography kernel-PCA can recognize different quantum phases of matter, including SPT phase, symmetry-broken phase and trivial phase, while it is hard for common kernel-PCA method. Meanwhile, it is observed that shadow tomography kernel-PCA cannot provide a comparable classification result with QKA method (see Fig.~\ref{Fig:ML_for_spt} (c) and Fig.~\ref{Fig:ML_for_xxz} (c)), where some strange quantum phase assignments appear. For example, Fig.~\ref{Fig:ML_for_spt} (a) appears two mis-classified ground states in the top right corner which do not represent any quantum phase, and a similar phenomenon is observed on the left side in Fig.~\ref{Fig:ML_for_xxz} (a).

\begin{figure*}[htb]
\centering
  \includegraphics[width=0.8\textwidth]{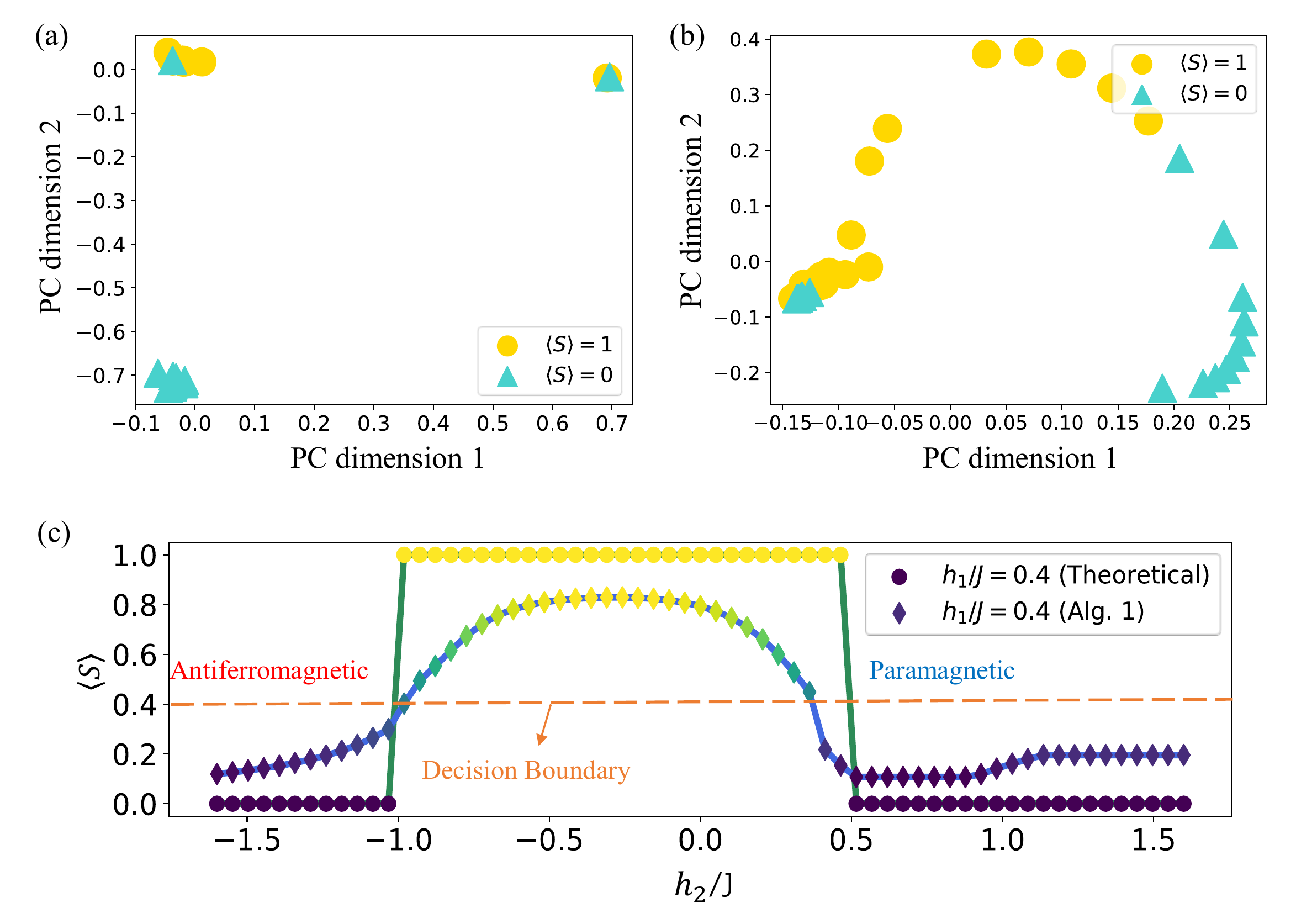}
  \caption{Numerical results for recognizing $Z_2\times Z_2$ Symmetry-Protected-Topological (SPT) phases of Haldane Chain at the cross-section $h_1/J=0.4$. (a) Classification results by using shadow tomography-based kernel-PCA ~\cite{huang2021provably} method by using code in (https://github.com/hsinyuan-huang/provable-ml-quantum). (b) Classification results by using kernel-PCA. (c) Prediction results by using quantum kernel method.}
  \label{Fig:ML_for_spt}
\end{figure*}

\begin{figure*}[htb]
\centering
  \includegraphics[width=0.8\textwidth]{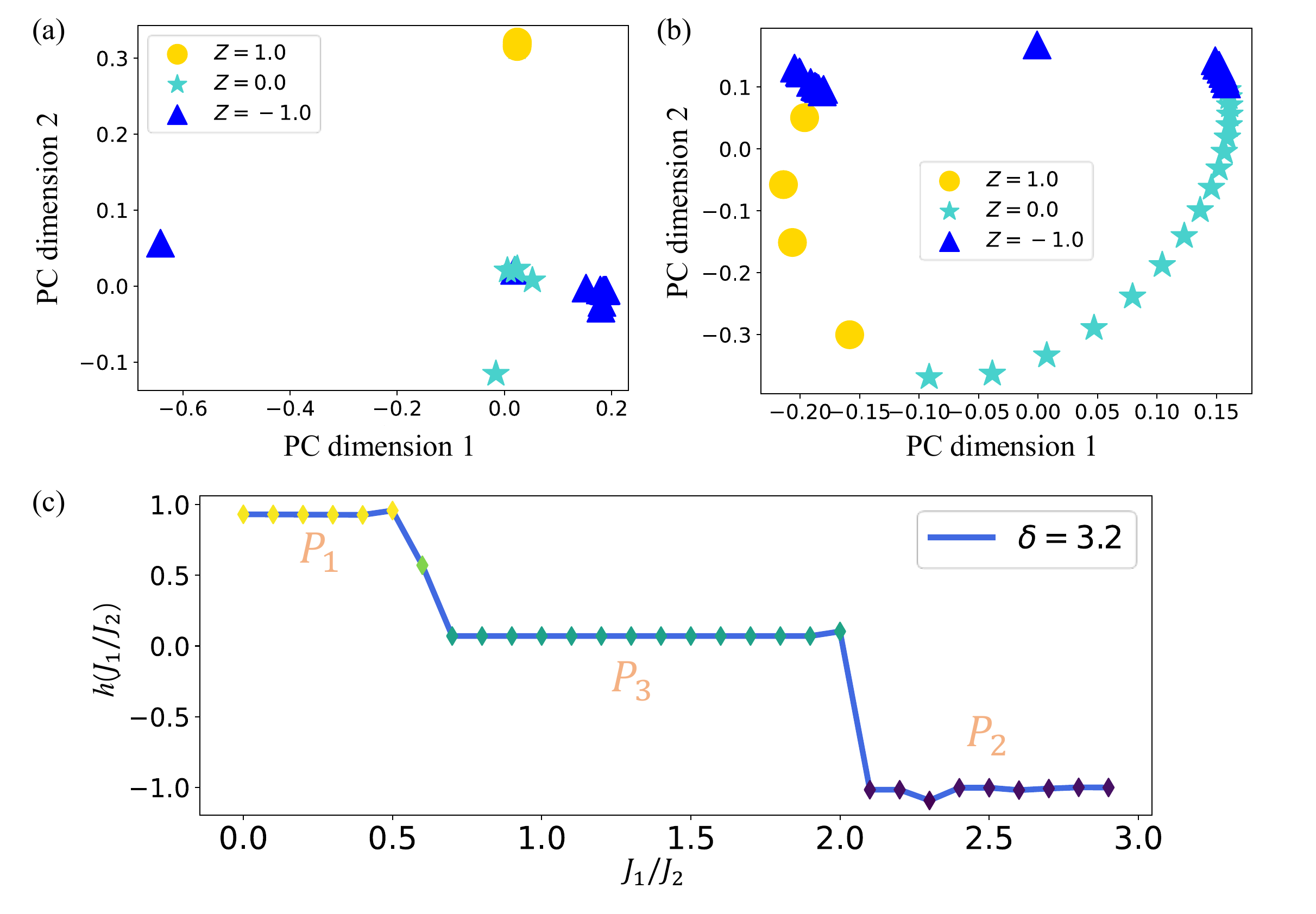}
  \caption{ Numerical results for recognizing three distinct phases of bond-alternating XXZ model at the function $\delta=3.2$. The invariant $Z=1$ marks the trivial phase, $Z=0.5$ marks symmetry broken phase and $Z=0$ marks the topological phase. (a) Classification results by using shadow tomography kernel PCA method. (b) Classification results by using kernel-PCA. (c) Prediction results by using quantum kernel method.
 }
  \label{Fig:ML_for_xxz}
\end{figure*}

\section{Review of Alphatron algorithm}
\label{app:alphatron}

In this section, we review the Alphatron algorithm~\cite{goel2019learning}, and give the comparison for Alphatron, Quantum Alphatron~\cite{rebentrost2021quantum}, and the Quantum kernel Alphatron algorithm (this paper).
\begin{algorithm}
\SetKwInOut{Input}{Input}
\SetKwInOut{Output}{Output}
\SetKwFor{While}{while}{do}{}%
\SetKwFor{For}{for}{do}{}
\Input{training set $\mathcal{S}=\{(\bm a_{\bm i},b_i)\}_{i=1}^N\in\mathcal{R}^d\times [0,1]$, non-decreasing $L$-Lipschitz function $u: \Rcal \rightarrow [0,1]$, kernel function $\Kcal$, learning rate $\lambda>0$, number of iterations $T$, testing data $\mathcal{S}=\{(\bm x_{\bm i},y_i)\}_{i=1}^M\in\mathcal{R}^d\times [0,1]$}
$\bm \alpha^{\bm 1}:=0\in\mathcal{R}^N$\;
 \For{$t=1,2,...,T$}{
$\hat{h}^{t}(\bm x):=\sum_{i=1}^N\alpha_i^t \Kcal(\bm a_{\bm i},\bm x)$\;
\For{$i=1,2,...,N$ }{
$\alpha_i^{t+1}=\alpha_i^t+\frac{\lambda}{N}(b_i-h^t(\bm a_{\bm i}))$\;
}
}
\emph{Let $r=\arg \min_{t\in\{1,...,T\}}\sum_{j=1}^M\pbra{h^t(\bm x_j)-y_j}^2$}\;
\Return{$h^r$}
\caption{Alphatron algorithm}
\label{alg:ori_alphatron}
\end{algorithm}

\begin{theorem}[\cite{goel2019learning}]
Let $\Kcal$ be a kernel function correpsonding to feature map $\psi$ such that $\forall \bm x \vabs{\psi(\bm x)}\leq 1$. Consider samples $\pbra{\bm a_i, b_i}_{i =1 }^N$ drawn iid from distribution $\Dcal$ on $\Xcal \times [0,1]$ such that $E[y|\bm x]=u\pbra{\langle\bm v, \psi(\bm x)\rangle} + \xi(\bm x)$ where $u:\Rcal \rightarrow [0,1]$ is a known $L$-Lipschitz non-decreasing function, $\xi:\Rcal^{d}\rightarrow [-G, G]$ for $M>0$ such that $\Ebb\sbra{\xi(\bm x)^2}\leq \epsilon$ and $\vabs{\bm v}\leq B$. Then for $\delta\in(0,1)$, with probability $1-\delta$, Alphatron with $\lambda = 1/L, T = CBL\sqrt{N/log(1/\delta)}$ and $M=C'N\log(T/\delta)$ for large enough constants $C,C'>0$ outputs a hypothesis $h$ such that, 
\begin{align}
    \varepsilon(h)\leq \Ord{L\sqrt{\epsilon} + LG \sqrt[4]{log(1/\delta)/N} + BL \sqrt{\log(1/\delta)/N}}.
    \label{eq:alphatron_risk}
\end{align}

\end{theorem}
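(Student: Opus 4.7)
The Alphatron iteration implicitly maintains a weight vector in the reproducing kernel Hilbert space, namely $\bm w^t := \sum_{i=1}^N \alpha_i^t \psi(\bm a_i)$, so that $\hat h^t(\bm x) = u(\langle \bm w^t, \psi(\bm x)\rangle_\Hcal)$ and the update rewrites as
\begin{equation}
\bm w^{t+1} = \bm w^t + \frac{\lambda}{N}\sum_{i=1}^N\bigl(b_i - \hat h^t(\bm a_i)\bigr)\psi(\bm a_i).
\end{equation}
I will measure progress by the RKHS potential $\Phi^t := \vabs{\bm w^t - \bm v}_\Hcal^2$, which starts at $\Phi^1 = \vabs{\bm v}^2 \leq B^2$ and is non-negative. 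The proof reduces to (i) showing $\Phi^t$ drops proportionally to the empirical risk plus lower-order noise, (ii) telescoping and tuning $\lambda, T$, and (iii) converting empirical to population risk and handling the best-iterate selection on the $M$-sized validation set.

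\textbf{Per-step drift.} Expanding $\Phi^{t+1} - \Phi^t$ produces a quadratic term bounded by $\lambda^2$ using $\vabs{\psi(\bm a_i)}\leq 1$ and $|b_i - \hat h^t(\bm a_i)|\leq 1$, plus a cross term $(2\lambda/N)\sum_i (b_i - \hat h^t(\bm a_i))\langle \bm v - \bm w^t, \psi(\bm a_i)\rangle$. Splitting $b_i = \Ebb[b_i|\bm a_i] + \eta_i$ with $\eta_i$ zero-mean and bounded, and using the standard monotone-Lipschitz inequality $(u(a)-u(b))(a-b)\geq L^{-1}(u(a)-u(b))^2$ applied at $a=\langle \bm w^t,\psi(\bm a_i)\rangle$, $b=\langle \bm v,\psi(\bm a_i)\rangle$, the deterministic part of the cross term is at most $-(2\lambda/L)\hat R(\hat h^t) + 2\lambda\sqrt{\hat R(\hat h^t)\cdot \varepsilon_\xi}$ by Cauchy--Schwarz against $\xi(\bm a_i)$, where $\hat R$ denotes the empirical mean-square risk. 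The stochastic part becomes a martingale-difference bilinear form $(2\lambda/N)\sum_i \eta_i\langle \bm v-\bm w^t,\psi(\bm a_i)\rangle$ that will be handled in the next step. The resulting recursion has the form $\Phi^{t+1}\leq \Phi^t - (2\lambda/L)\hat R(\hat h^t) + \lambda^2 + (\text{noise}_t)$.

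\textbf{Telescoping and generalization.} Summing over $t=1,\dots,T$ and rearranging yields
\begin{equation}
\frac{1}{T}\sum_{t=1}^T \hat R(\hat h^t) \leq \frac{LB^2}{2\lambda T} + \frac{\lambda L}{2} + L\sqrt{\varepsilon_\xi} + \frac{L}{T}\sum_{t=1}^T (\text{noise}_t).
\end{equation}
Setting $\lambda=1/L$ balances the first two terms to $\Ord(LB/\sqrt{T})$, and choosing $T=CBL\sqrt{N/\log(1/\delta)}$ matches this to the target rate. The noise sum is controlled by a Hilbert-space Azuma/McDiarmid inequality applied to $\bm Z_N := (1/N)\sum_i \eta_i\psi(\bm a_i)$, giving $\vabs{\bm Z_N}\leq \Ord(G\sqrt{\log(1/\delta)/N})$ with probability $1-\delta$; Cauchy--Schwarz against $\bm v-\bm w^t$ (of norm $\leq 2B$) produces the $BG\sqrt{\log/N}$ contribution, while the interaction with the empirical-risk term (via AM--GM) promotes part of it to the $LG(\log/N)^{1/4}$ scaling. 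This establishes that the best empirical iterate $\hat h^{t^\ast}$ satisfies the target empirical bound. Finally, to pass to $\varepsilon(\hat h)$ I will apply a standard Rademacher bound for the class of $L$-Lipschitz compositions $u\circ\langle \bm w,\psi(\cdot)\rangle$ with $\vabs{\bm w}\leq B$, whose complexity is $\Ord(BL/\sqrt{N})$, and union-bound over the $T$ iterates (accounting for the $\log(T/\delta)$ factor absorbed into the choice of $M=C'N\log(T/\delta)$).

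\textbf{Main obstacle.} The subtle step is the stochastic term: the iterate $\bm w^t$ is data-dependent, so $\sum_i \eta_i\langle \bm v-\bm w^t,\psi(\bm a_i)\rangle$ is not a martingale in $i$ for fixed $t$. I plan to resolve this by (a) isolating the $\bm v$-contribution, which is a genuine martingale controlled by Azuma, and (b) handling $\bm w^t$ uniformly via an $\epsilon$-net over the RKHS ball of radius $B$, or equivalently by bounding the vector-valued empirical process $\sup_{\vabs{\bm w}\leq 2B}\langle \bm w,\bm Z_N\rangle$ using the kernel Rademacher complexity. This uniform control is what forces the slower $(\log(1/\delta)/N)^{1/4}$ rate on the $G$-dependent term (rather than $1/\sqrt{N}$), and propagating it while keeping the $B$, $L$, $G$ dependencies sharp is the most delicate bookkeeping in the argument.
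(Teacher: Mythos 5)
The paper does not actually prove this statement: it is quoted verbatim from Goel and Klivans~\cite{goel2019learning} in the appendix reviewing the Alphatron algorithm, so there is no in-paper proof to compare against. Measured against the original proof in that reference, your sketch reconstructs the right argument: the RKHS potential $\Phi^t=\vabs{\bm w^t-\bm v}^2$ with $\Phi^1=\vabs{\bm v}^2\le B^2$, the per-step drift via the monotone-Lipschitz inequality $(u(a)-u(b))(a-b)\ge L^{-1}(u(a)-u(b))^2$, the three-way split of $b_i$ into the $u$-term, the misspecification $\xi$, and the label noise $\eta_i$, telescoping with $\lambda=1/L$ and $T=\Theta\pbra{BL\sqrt{N/\log(1/\delta)}}$, and best-iterate selection on the held-out set of size $M$. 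Two of your attributions are off, however. First, the ``main obstacle'' you identify is not an obstacle: the only feature of the data-dependent iterate $\bm w^t$ that enters the noise term is its RKHS norm, so one bounds the $t$-independent vector $\bm Z_N=\frac{1}{N}\sum_i\eta_i\psi(\bm a_i)$ once in Hilbert norm and applies Cauchy--Schwarz; your ``equivalent'' formulation $\sup_{\vabs{\bm w}\le 2B}\abra{\bm w,\bm Z_N}=2B\vabs{\bm Z_N}$ already collapses to exactly this, and no $\epsilon$-net or uniform empirical-process bound is needed. Moreover $\eta_i=b_i-\Ebb[b_i\mid\bm a_i]$ is bounded by $1$, not by $G$, so this term contributes only $\Ord{B\sqrt{\log(1/\delta)/N}}$, i.e.\ the $BL\sqrt{\log(1/\delta)/N}$ piece of the bound.

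Second, the $LG\sqrt[4]{\log(1/\delta)/N}$ term does not come from uniform control of the martingale part; it comes from the $\xi$-cross-term. Your Cauchy--Schwarz step produces $2\lambda\sqrt{\hat R(\hat h^t)}\,\bigl(\frac{1}{N}\sum_i\xi(\bm a_i)^2\bigr)^{1/2}$, and since $\xi(\bm a_i)^2\in[0,G^2]$, Hoeffding gives $\frac{1}{N}\sum_i\xi(\bm a_i)^2\le\epsilon+\Ord{G^2\sqrt{\log(1/\delta)/N}}$; subadditivity of the square root (together with $\hat R\le 1$) then yields the $\sqrt{\epsilon}$ and $G\sqrt[4]{\log(1/\delta)/N}$ contributions. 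You would also need to justify $\vabs{\bm v-\bm w^t}\le 2B$ by noting that the potential is non-increasing up to the first iterate meeting the target empirical risk, so $\Phi^t\le B^2$ along the relevant trajectory. With those corrections supplied, your outline matches the cited proof and would close.
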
 
Alphatron algorithm requires poly$\pbra{N,d,\log(1/\delta), t_\Kcal}$ running time to train a kernel model, where $t_\Kcal$ is the running time on computing the kernel function. 
In Quantum Kernel Alphatron algorithm, we let the kernel be quantum kernel $Q(\bm x, \bm{a_i}) = \abs{\bra{0^n}U(\bm x)^\dagger U\pbra{\bm{a_i}}\ket{0^n}}^2$. Since $Q\pbra{\bm x, \bm{a_i}}$ is approximated via SWAP test, the risk in Eq. \eqref{eq:alphatron_risk} does not hold. Nevertheless, we prove that via $\Ord{N^{5/2}}$ copies of quantum states for each training data, the risk of the Quantum kernel Alphatron is also bounded. 

As a comparison, Quantum Alphatron algorithm~\cite{rebentrost2021quantum} quantizes the Alphatron algorithm to provide a quantum implementation of the well-known polynomial kernel function, which accelerate the running time of the original Alphatron algorithm in the scenario that the dimension $d$ of the data is much larger than other parameters.

\clearpage

\end{document}